\newif\iflncs\lncsfalse
\newif\ifmai\maifalse
\newif\ifanon\anonfalse
\newif\iffull\fulltrue
\renewcommand{\COMMENT}[2][.5\linewidth]{%
  \leavevmode\hfill\makebox[#1][l]{//~#2}}
\algnewcommand\algorithmicto{\textbf{to}}
\algnewcommand\RETURN{\State \textbf{return} }
\def\expandafter\normalsize\expandafter{%
    \normalsize
    \setlength\abovedisplayskip{1pt}
    \setlength\belowdisplayskip{1pt}

    \setlength\abovedisplayshortskip{1.1pt}
    \setlength\belowdisplayshortskip{1.1pt}

}
\newenvironment{proof}{{\sc Proof}~}{ \hfill $\Box$ \vskip .2mm }
\newenvironment{proof_of}[1]{{\sc Proof of #1}~}{ \hfill $\Box$ \vskip .2mm}
\newenvironment{notation}{\emph{Notation:~}}{}
\else \newtheorem{definition}{Definition}\fi
\else \newtheorem{proposition}{Proposition}\fi
\newtheorem{proposition_a}{Proposition}[section]
\else \newtheorem{theorem}{Theorem}\fi
\newtheorem{theorem_a}{Theorem}[section]
\else \newtheorem{corollary}{Corollary} \fi
\newtheorem{corollary_a}{Corollary}[section]
\else \newtheorem{lemma}{Lemma}\fi
\newtheorem{lemma_a}{Lemma}[section]
\else \newtheorem{remark}{Remark}\fi
\else \newtheorem{example}{Example}\fi
\else \newtheorem{theorem}{Theorem}\fi
\else \newtheorem{corollary}{Corollary} \fi
\else \newtheorem{lemma}{Lemma}\fi
\newtheorem{problem}{Problem}
\newcommand{\nat}{\mathbb{N}}
\newcommand{\defi}{\stackrel{\triangle}{=}}
\newcommand{\range}{\mathrm{range}}
\newcommand{\dom}{\mathrm{dom}}
\newcommand{\es}{\emptyset}
\newcommand{\vsp}{\vspace*{.2cm}}
\newcommand{\reals}{{\mathbb{R}}}
\newcommand{\ode}{{\sc ode}}
\newcommand{\vecx}{\mathrm{\mathbf{x}}}
\newcommand{\xx}{\mathrm{\mathbf{x}}}
\newcommand{\zz}{\mathrm{\mathbf{z}}}
\newcommand{\vv}{v}
\newcommand{\parv}{\lambda}
\newcommand{\aalpha}{ {\bm{\alpha}}}
\newcommand{\bbeta}{ {\bm{\beta}}}
\newcommand{\ggamma}{ {\bm{\gamma}}}
\newcommand{\field}{\mathbb{R}}
\newcommand{\lie}{\mathcal{L}}
\newcommand{\ide}[1]{\big\langle\,\, #1 \,\,\big\rangle}
\renewcommand{\aa}{\mathrm{\mathbf{a}}}
\newcommand{\rk}{\mathrm{rk}}
\newcommand{\ivp}{\mathbf{\Phi}}
\newcommand{\var}{\mathrm{\mathbf{V}}}
\newcommand{\Id}{\mathrm{\mathbf{I}}}
\newcommand{\impl}{\longrightarrow}
\newcommand{\gdc}{\textsc{post}}
\newcommand{\pre}{\textsc{pre}}
\newcommand{\post}{\textsc{post}}
\newcommand\bighat[1]{%
\savestack{\tmpbox}{\stretchto{%
  \scaleto{%
    \scalerel*[\widthof{\ensuremath{#1}}]{\kern-.6pt\bigwedge\kern-.6pt}%
    {\rule[-\textheight/2]{1ex}{\textheight}}
  }{\textheight}%
}{0.5ex}}%
\stackon[1pt]{#1}{\tmpbox}%
}
\begin{document}

\iflncs
\title{Complete algorithms for algebraic strongest postconditions\\ and
weakest preconditions in  polynomial \ode s\thanks{Author's address: Michele Boreale, Universit\`a di Firenze,
Dipartimento di
Statistica, Informatica, Applicazioni (DiSIA) ``G. Parenti'', Viale Morgagni 65, I-50134
 Firenze, Italy. E-mail: \protect\url{michele.boreale@unifi.it}.}
}
\author{Michele Boreale}
\institute{Universit\`a di Firenze
}
\else
\title{\textbf{\Large Complete algorithms for algebraic strongest postconditions\\ and
weakest preconditions in  polynomial \ode s}\thanks{
Author's address: Michele
Boreale, Universit\`a di Firenze, Dipartimento di Statistica,
Informatica, Applicazioni (DiSIA) ``G. Parenti'', Viale Morgagni 65,
I-50134
 Firenze, Italy. E-mail: \protect\url{michele.boreale@unifi.it}.}}
\author{Michele Boreale\\Universit\`a di Firenze\\Dipartimento di Statistica,
Informatica, Applicazioni (DiSIA) ``G. Parenti''
}

\fi

\date{}
 \maketitle
\iflncs\thispagestyle{plain}\fi

\begin{abstract}
A system of polynomial ordinary differential equations (\ode s) is
  specified via a vector  of multivariate
polynomials, or vector field,   $F$.
A safety assertion $\psi\impl[F]\,\phi$
means that the   trajectory  of the system will lie  in a subset $\phi$ (the postcondition)
 of the state-space, whenever the initial state belongs to a subset $\psi$
  (the precondition). We consider the case when $\phi$ and $\psi$ are
  \emph{algebraic varieties},
  that is,   zero sets of polynomials. In particular,
  polynomials specifying the postcondition
  can be seen as  a system's  conservation laws  implied by $\psi$.
  Checking the validity of algebraic safety assertions is a fundamental problem in,
  for instance, hybrid systems.
We consider a generalized version  of this problem,
   and offer an algorithm that,
   given   a user specified polynomial set $P$ and an algebraic
    precondition $\psi$, finds the
   largest subset of polynomials in $P$  implied by
       $\psi$ (relativized strongest  postcondition).
  Under certain assumptions on $\phi$,
    this
     algorithm  can also be used to find the largest algebraic invariant included in $\phi$
     and the  weakest algebraic precondition   for $\phi$. Applications to continuous semialgebraic
     systems are also considered.
    The effectiveness of the  proposed algorithm
is demonstrated on several case studies from the literature.
\end{abstract}



\noindent
\textbf{Keywords}: Ordinary differential equations, postconditions, preconditions, invariants, Gr\"{o}bner bases.

\section{Introduction}\label{sec:intro}
In recent years, there has been a renewed interest in
 computational models based on ordinary differential equations (\ode s), in such diverse fields as
System Biology \cite{RifBio2} and stochastic systems \cite{Fluid}.
In particular, starting from \cite{San04}, the field of hybrid systems  has witnessed the emergence of a novel class of
 formal methods based on concepts from Algebraic Geometry --
 see e.g. \cite{Tiwa04,San10,Pla14}
and references therein.

A system of \ode s can be seen as specifying the evolution over time, or \emph{trajectory},
of certain variables of interest $x_1,...,x_N$,  describing for instance  physical
  quantities.
A fundamental problem in many fields is being able to prove or to disprove
  assertions of the following type.
For each    initial state in a given set $\psi\subseteq \reals^N$ (the precondition),
 the resulting system's trajectory will lie in a given set $\phi\subseteq \reals^N$ (the postcondition).
This is a \emph{safety assertion} that, using
 a notation akin to Platzer's Dynamic Logic, we can
write  as $\psi\impl[F]\,\phi$, where $F$ is the vector field specifying the system.
Evidently,   safety assertions
can be considered as a continuous counterpart of
 Hoare's triples in   imperative programs --- see \cite{Pla12}.

Here we are primarily interested in the case where both  $\psi$ and  $\phi$
are algebraic varieties, that is they are specified as zeros of
(multivariate) polynomial sets, and the drifts $f_i$ in
$F=(f_1,...,f_N)$  are  polynomials themselves.
Although (sets of) trajectories can rarely be
represented \emph{exactly} as algebraic varieties, these provide
overapproximations that may be useful in practice. In a valid safety
assertion,  the   polynomials specifying the postcondition $\phi$
can be seen as system's \emph{conservation laws} (for instance
energy or mass conservation\footnote{More precisely, when the precondition is  $\psi=\reals^N$, conservation laws
in our sense coincide  with what are known in Physics as   \emph{first
integrals of motion}, up to an additive constant.}) that are
 implied by the precondition $\psi$. Driven by the analogy with Hoare's triples, we find it   natural
to generalize  the problem of checking
the assertion $\psi\impl[F]\,\phi$ in two distinct ways. (1) Strongest postcondition: given a precondition
 $\psi$, find the smallest $\phi$ such that
the assertion is valid; (2) weakest precondition: given a postcondition $\phi$, find the largest
$\psi$ such that the assertion is valid.   Problem (1)
   amounts to characterizing       $I_\psi$,
   the set of \emph{all} polynomials invariants (conservation laws)
  implied by   $\psi$. The difficulty of  (1)
 motivates the introduction of a  {relativized} version of this
  problem: for a user specified
polynomial set $P$,   compute $P\cap I_\psi$. We call this a relativized
strongest postcondition.
Depending on $P$, computing this can be a lot easier   than computing the whole $I_\psi$.

We  offer a complete algorithm, called $\post$, that computes
   relativized strongest postconditions. 
In particular, this problem will be
considered in the case where the set $P$
 is specified via a polynomial template.
This way, for example, one can find  at once all polynomial conservation laws
of the system up to a given degree. As a byproduct of the $\post$ algorithm,
we also get  the weakest
algebraic invariant that implies all laws in $P\cap I_\psi$.  The $\post$ algorithm
is  based on
building ascending chains of polynomial ideals: these represent, basically,
 more and more refined overapproximations of the (relativized)  strongest postcondition.
  The proof  of correctness and termination relies on  a few concepts  from
   Algebraic Geometry, notably
Gr\"{o}bner bases \cite{Cox}. 
We will demonstrate the effectiveness of $\post$
   reporting the outcomes of a few experiments  we have
conducted on nontrivial systems taken from the literature, based on a Sagemath/Python
  implementation. 
Wherever possible, we will compare our results
with those obtained by other authors.

Focusing  on algebraic
sets, as we do,  does not necessarily imply that one is
limited to algebraic safety properties: in fact, by considering a family of
varieties depending
on a set of parameters, one can often get a good approximation  of
    a \emph{semialgebraic} set of interest. In our case, the parameters
    will basically correspond to possible
 initial conditions of the system. In this way, we will show that safety verification
 of continuous systems with   semialgebraic initial and unsafe sets
 is possible. 
 Formally, this will require  embedding the
original system into a larger space, with the introduction of
auxiliary variables,  and  relating these new variables to the
original ones using a suitable precondition.

The present paper    builds on
 our previous work  \cite{Fossacs17}, 
which deals with simple initial values
problems, where the precondition $\psi$ always consists of a
singleton. This restriction prevents one from dealing with the most
interesting continuous systems, such as semialgebraic systems. In particular,
 the concept  of weakest precondition is absent from \cite{Fossacs17}.
In the concluding section, we will   discuss relations with this work,
as well as with recent contributions from other authors, dealing with
invariant generation for polynomial \ode s  in the context of continuous and hybrid
systems, notably \cite{Pla14} and \cite{Kong}.

\paragraph{Structure of the paper} The rest of the paper is organized as follows.
The necessary mathematical preliminaries,
including polynomial differential equations and a few
facts from Algebraic Geometry, are introduced in Section \ref{sec:prel}, while
Section \ref{sec:ASA} introduces algebraic safety assertions and invariants.
In Section \ref{sec:problem2}, the main technical results are presented: the
 $\post$ algorithm and the proof of (relative) completeness.
 A more algorithmic presentation of $\post$ and computational issues connected with
   real radicals are discussed in Section \ref{sec:compasp}.
A few experiments  on systems drawn from the literature   are
described   in Section
\ref{sec:experiments}. An application to semialgebraic systems, together
with further examples, is the subject of Section \ref{sec:semialg}.
We round off the technical development so far with a discussion in Section \ref{sec:discussion}.
Related works are reviewed  in   Section \ref{sec:concl}.
For the sake of readability, a few technical proofs and some
additional technical material have been confined to  four separate
Appendices (\ref{app:pre}, \ref{app:proofs}, \ref{app:exp} and \ref{app:Psatz}).


\section{Preliminaries}\label{sec:prel}
We review a few preliminary notions about \ode s, polynomials, Lie derivatives and Algebraic Geometry.
\paragraph{Polynomial \ode s}
Let us fix an integer $N\geq 1$ and a set of $N$ distinct variables
$x_1,...,x_N$. We  will denote by $\vecx$ the
vector $(x_1,...,x_N) $.
We let $\field[\vecx]$ denote the set of multivariate polynomials in
the  variables $x_1,...,x_N$ with coefficients in $\field$, and let
$p,q$ range over it. Here we regard polynomials as syntactic
objects. Given an integer $d\geq 0$, by $\field_d[\vecx]$ we denote
the set of polynomials of degree $\leq d$.  As an example,
$p=2xy^2+(1/5)wz+yz+1$ is a polynomial of degree $\deg(p)=3$, that
is $p\in \field_3[x,y,z,w]$, with monomials $xy^2$, $wz$, $yz$ and
1. Depending on the context, with a slight  abuse of notation it may
be convenient to let a polynomial denote   the induced function
$\field^N \rightarrow
 \field$, defined as expected: for $\vv\in \field^N$, $p(\vv)\in \reals$ denotes the value obtained by
 evaluating $p$ at $\vv$.
 In particular, $x_i$ can be seen as denoting the projection on the $i$-th
 coordinate.

A (polynomial) \emph{vector field} is a  
vector of  $N$ polynomials,  $F=(f_1,...,f_N)$, seen as a function
$F:\mathbb{R}^N\rightarrow \mathbb{R}^N$. Throughout the paper, all definitions and statements refer
to an arbitrarily fixed polynomial vector field $F$ over a $N$-vector $\xx$.
The vector field $F$ and an initial condition $\vv_0\in \field^N$
together  define   an \emph{initial value problem} $\ivp=(F,\vv_0)$,  often written in the
following form
\begin{equation}\label{eq:ivp}
\ivp:\;\left\{\begin{array}{rcl}
{\dot\vecx(t)} & = & F(\vecx(t))\\
\vecx(0) & = & \vv_0\,.
\end{array}
\right.
\end{equation}
The functions $f_i$  in $F$ are called \emph{drifts} in this
context. A \emph{solution} to this  problem is a  differentiable
function $\vecx(t): D\rightarrow \field^N$,  for some nonempty open
interval $D\subseteq \mathbb{R}$ containing 0, which fulfills the
above two equations, that is: $\frac d{dt}\vecx(t)=F(\vecx(t))$ for
each $t\in D$  and $\vecx(0)=\vv_0$. By the Picard-Lindel\"{o}f
theorem \cite{PL}, there exists a nonempty open interval $D$
containing 0, over which there is a \emph{unique} solution, say
$\vecx(t)=(x_1(t),...,x_N(t))$,  to the problem. In our case,  as
$F$ is infinitely   differentiable,   the solution is seen to be
\emph{analytic} in $D$: each $x_i(t)$  admits a Taylor series
expansion   in a neighborhood of 0. For definiteness, we will take
the domain of definition  $D$ of $\xx(t)$  to be the largest
 open interval  where the Taylor expansion from 0 of each
of the $x_i(t)$ converges  (possibly $D=\field$). The resulting
vector function of $t$, denoted $\vecx(t)$, is called the \emph{time
trajectory} of the system. Note that both the time trajectory and
its   domain of definition do depend in general on the
initial $\vv_0$. We shall write them  as $\vecx(t;\vv_0)$
and $D_{\vv_0}$, respectively, whenever we want to make this
dependence explicit in the notation.

For  any polynomial $p\in
\field[\vecx]$,  the function $p(\vecx(t))  :D\rightarrow \field$,
obtained by composing $p$ as a
function with the time trajectory $\vecx(t)$, is    {analytic}: 
 we   let $p(t)$ denote
  the extension of this function 
 over the largest open interval of convergence
 (possibly coinciding with $\field$) of its Taylor expansion from 0.
We will call $p(t)$     the \emph{polynomial behaviour induced by
$p$ and by} the initial value problem \eqref{eq:ivp}. Again, fixing $N, \xx$ and $F$ once and for all,
we shall write $p(t;\vv_0)$ when we want to emphasize the dependence of this function
on the initial value $\vv_0$.

\paragraph{Lie derivatives}
Given a differentiable function $g:E\rightarrow \mathbb{R}$, for
  some  open set $E\subseteq \mathbb{R}^N$, the \emph{Lie
derivative of $g$ along $F$} is  the function $E\rightarrow
\mathbb{R}$ defined as:
$\lie_F(g)   \defi   \langle \nabla g, F\rangle = \sum_{i=1}^N
(\frac{\partial g}{\partial x_i} \cdot f_i)$.
The Lie derivative of   the sum $h+g$ and product  $h\cdot g$
functions obey  the familiar rules
\begin{eqnarray}
\lie_F(h+ g) & = & \lie_F(h)+\lie_F(g) \label{eq:liesum}\\
\lie_F(h\cdot g) & = & h\cdot \lie_F(g)+\lie_F(h)\cdot
g\,.\label{eq:lieprod}
\end{eqnarray}
Note that $\lie_F(x_i)=f_i$. Moreover if $p\in \mathbb{R}_d[\vecx]$
then $\lie_F(p)\in
\mathbb{R}_{d+d'}[\vecx]$, 
for some integer $d'\geq 0$ that depends on $d$ and on $F$. This
allows us to view the Lie derivative  of   polynomials along a
polynomial field $F$  as a purely syntactic mechanism, that is as a
function $\lie_F: \field[\xx]\rightarrow \field[\xx]$ that does not
assume anything about the solution of \eqref{eq:ivp}. Informally, we
can view $p$ as a program, and taking the  Lie derivative of $p$ can be
interpreted as  unfolding the definition  of each variable  $x_i$,
according to the equations in \eqref{eq:ivp} and to the formal rules
for product and sum differentiation, \eqref{eq:liesum} and
\eqref{eq:lieprod}. More generally, we can define inductively $\lie^{(0)}_F(p)\defi p$ and
$\lie^{(j+1)}_F(p)\defi \lie_F(\lie^{j}_F(p))$. 

\begin{example}\label{ex:running0}{\em
The following system,
borrowed from \cite{Kong}, will be used as a running example.
Consider $N=2$, $\xx=(x,y)$ and the   vector field $F=(y^2,xy)$.
  Let $p=x-y$. Examples of   Lie derivatives are
$\lie_F(p)=y^2-xy$ and $\lie^{(2)}_F(p)=2xy^2 -x^2y  - y^3$.
}\end{example}


In what follows, for any $\vv_0\in\reals^N$ we let $p(\vv_0)$ denote the real number obtained
by evaluating $p$ at $\vv_0$;   recall that $p(t;\vv_0)$ denotes
the function   $p(\xx(t;\vv_0))$, defined for $t$ in a suitable
neighborhood of the origin. We shall often abbreviate the
syntactic Lie derivative $\lie^{(j)}_F(p)$ as $p^{(j)}$,  and shall
omit the subscript ${}_F$ from $\lie_F$ when clear from the context.
The connection between Lie derivatives of $p$ along $F$ and the
initial value problem \eqref{eq:ivp} is given by the equations
below,
 which can be readily checked.
\begin{eqnarray}
p(t;\vv_0)_{|t=0} &  =  & p(\vv_0) \label{eq:initial}\\  
\frac{d\,}{dt}p(t;\vv_0) & = &
\left(p^{(1)}\right)(t;\vv_0)\nonumber\,.
\end{eqnarray}
More generally, we have the
following equation  for the $j$-th derivative of $p(t;\vv_0)$
($j=0,1,...$):
\begin{eqnarray}
\label{eq:liederj} \frac{d^j\,}{dt^j}p(t;\vv_0) & = & \left(p^{(j)}
\right)(t;\vv_0)\,.
\end{eqnarray}

\paragraph{Algebraic Geometry preliminaries}\label{sub:algeom}
We quickly review a few  notions from  Algebraic Geometry that will
be used throughout the paper. A comprehensive treatment of these
concepts can be found for instance in Cox et al.'s excellent
textbook \cite{Cox}. A set of polynomials $I\subseteq \reals[\xx]$
is an \emph{ideal} if: (1) $0\in I$ and (2) $p_1,...,p_m\in I$ and
$h_1,...,h_m\in \reals[\xx]$ implies $\sum_{i=1}^m h_ip_i\in I$. The
ideal generated by a set $P\subseteq \reals[\xx]$ is defined as
\begin{eqnarray*}
\ide{P}&\defi &
\left\{\sum_{i=1}^m h_ip_i\,:\,m\geq 0\text{ and }
h_i\in\reals[\xx], \,p_i\in P \text{ for }i=1,...,m\right\}\,.
\end{eqnarray*}
This
is the smallest ideal containing $P$ and as a consequence $\ide{\ide
P}=\ide{P}$. Given an ideal $I$, a set $P$ such that $I=\ide P$ is
said to be a set of \emph{generators} for $I$. Hilbert's basis theorem implies
that: (a) any ideal $I\subseteq \reals[\xx]$ has a finite set of generators; (b)
any infinite ascending chain of ideals $I_0\subseteq
I_1\subseteq\cdots$ stabilizes in a finite number of steps
(\emph{ascending chain condition}).   Once a total \emph{monomial order}
(e.g. lexicographic; see also Appendix \ref{app:proofs}) is fixed, a   multivariate version of
polynomial division naturally arises --- see \cite{Cox} for the
precise definition. A \emph{Gr\"{o}bner basis} of an ideal $I$ (w.r.t. a
fixed monomial order) is a finite set of generators $G$ of $I$ such that for any
polynomial $p\in \reals[\xx]$ the \emph{remainder} of the division
of $p$ by $G$, $r=p\bmod G$, enjoys following property: 
$p\in
I$ iff $r=0$. An alternative definition is that the
leading monomial (greatest in the monomial order)  of each $p\in I$ is divisible
by the leading monomial of some $g\in G$.
Given a Gr\"{o}bner basis $G$ of
$I$, the \emph{ideal membership} problem $p\in I$ can be
decided\footnote{Provided the involved coefficients can be
finitely represented, for instance are rational.} by just checking if $p\bmod G=0$.
Ideal inclusion $I\subseteq J$ can be decided similarly.  There are
algorithms that, given a finite $P$ and  a
monomial order, compute a Gr\"{o}bner basis $G$ such that $\ide G
=\ide P$: e.g. Buchberger's \cite{Buch} and Faugeres' F4 and F5 algorithms \cite{Fau4,Fau5}.
The worst-case time complexity of these algorithms is   exponential in the number
of variables, so this  computation is   potentially expensive.

The geometric counterpart of polynomial sets are algebraic varieties.
Given a set of polynomials $P\subseteq \reals[\xx]$, the set of
points in $\reals^N$ which are roots of all polynomials in $P$
\begin{eqnarray*}
\var(P)& \defi & \{\vv\in \reals^N: p(\vv)=0\text{ for each }p\in
P\}
\end{eqnarray*}
is the \emph{algebraic}\footnote{Some authors use
\emph{affine}.} \emph{variety} represented by $P$.
Ideals and algebraic varieties are connected as follows. For any set $A\subseteq \reals^N$, the
 set of polynomials that vanish
on $A$
\begin{eqnarray*}
\Id(A) & \defi & \{p\in \reals[\xx]\,:\,p(\vv)=0\text{ for each }\vv\in A\}
\end{eqnarray*}
is the ideal induced by $A$. Note that both $\var$ and $\Id$ are
inclusion reversing: $P\subseteq Q$ implies $\var(P)\supseteq
\var(Q)$, and $A\subseteq B$ implies $\Id(A)\supseteq \Id(B)$. For
$A$ an algebraic variety and $J$ an ideal, it is easy to see that
 $\var(\Id(A))=A$ and that $\Id(\var(J))\supseteq J$; if the equality  $\Id(\var(J))= J$
 holds, $J$ is said to be a \emph{real radical}.
  We will have in general
     more than one ideal $J$ representing $A$, that is such that $\var(J)=A$. 

\section{Algebraic safety assertions and invariants}\label{sec:ASA}
We will be interested in \emph{safety assertions} of the following
type, where $\psi,\phi\subseteq \reals^N$ are user specified
algebraic varieties, which we call the  \emph{pre} and
\emph{postcondition}, respectively. Each of them is specified by a
set of polynomials, that is, we will have $\phi=\var(P_1)$ and $\psi=\var(P_2)$ for
some $P_1,P_1\subseteq \reals[\xx]$.
\begin{equation}\label{eq:safety}
\text{Whenever $\vv_0\in \psi$ then for each $t\in D_{\vv_0}$,
$\xx(t;\vv_0)\in \phi$.}
\end{equation}\noindent
The above assertion means that every trajectory starting in the
precondition $\psi$ will stay in the postcondition $\phi$; hence
necessarily $\psi\subseteq \phi$ for the assertion to hold.
Using a notation akin to Platzer's Dynamic Logic's   \cite{Pla12}, the  safety
assertion \eqref{eq:safety} will be abbreviated as
\begin{eqnarray}\label{eq:safetypla}
\psi & \impl  & [F]\;\phi\,.
\end{eqnarray}
A common technique for proving  \eqref{eq:safetypla} is finding an
algebraic variety $\chi$ such that $\psi\subseteq \chi\subseteq
\phi$ and $\chi$ is an \emph{algebraic invariant} for the vector
field $F$, that is it satisfies $\chi   \impl [ F]\;\chi$. The
invariance condition means that all trajectories starting in $\chi$
must remain in $\chi$. 

Let us now introduce two
 distinct generalizations of the  problem of checking the
safety assertion
\eqref{eq:safetypla}. These are the problems we will
 actually try to solve. 
 In what follows, ``finding'' an algebraic variety means building a  finite set of
polynomials representing it.  Also note that, for varieties, ``smallest'' means ``strongest'', and
``largest''  means ``weakest''.

\begin{problem}[strongest postcondition]\label{pro:user-pre}
 Given an algebraic variety $\psi$,
    find $\phi_\psi$, the \emph{smallest} algebraic variety $\phi_\psi$ such that
  \eqref{eq:safetypla}  is true when $\phi=\phi_\psi$.
\end{problem}

Note that $\phi_\psi$ always exists and is the
intersection of all the varieties $\phi$ such that $\psi\impl
[F]\,\phi$. Finding $\phi_\psi$   amounts to
 building   (a basis of) an appropriate ideal $I$ such that $\var(I)=\phi_\psi$.
One such ideal is
\begin{eqnarray}\label{eq:Ipsi}
I_\psi& \defi & \Id(\phi_\psi)\, .
\end{eqnarray}
%
Currently,  we do not know how to compute $I_\psi$, or any other
polynomial representation of $\psi$. 
This
 motivates the introduction of a relaxed, or relativized, version of the previous
problem. In this version,   a user specified set of polynomials $P$
is used to tune the strength, hence precision, of the postcondition.

\begin{problem}[strongest postcondition, relativized]\label{pro:user-pre-rel}
Given  a polynomial set  $P\subseteq \reals[\xx]$ and an algebraic variety $\psi$,
    find a finite representation of $P \cap I_\psi $.
\end{problem}

Of course, we have that $\var(P\cap I_\psi)\supseteq
\var(I_\psi)=\phi_\psi$, which implies that $\psi\impl
[F]\,\var(P\cap I_\psi)$. In other words, $  P \cap I_\psi$ represents
an overapproximation of the strongest postcondition. There is
another meaningful way of generalizing the problem of checking
\eqref{eq:safetypla}.

\begin{problem}[weakest precondition]\label{pro:user-post}
 Given an algebraic variety $\phi$,
    find $\psi_\phi$, the \emph{largest} algebraic variety   such that
  \eqref{eq:safetypla}  is true when $\psi=\psi_\phi$.
\end{problem}

Let us now comment briefly on the relations existing between the
above introduced problems. It is not difficult to see that  being able to solve either of Problem
\ref{pro:user-pre} or Problem \ref{pro:user-post} implies one is able
to  check  \eqref{eq:safetypla} for
\emph{given} $\psi$ \emph{and} $\phi$, based on the fact that one
knows how to check inclusion between two varieties
  (see   Section \ref{sec:prel}). Indeed,  wanting to check the assertion
$ \psi \impl [F]\;\phi $, one may either check that $\phi\supseteq \phi_\psi$ or
 that $\psi\subseteq \psi_\phi$.
  The relativized Problem \ref{pro:user-pre-rel}
  too is more general than checking \eqref{eq:safetypla}. Indeed,
wanting to check  $ \psi \impl [F]\;\phi $, one may let
 $P=Q$ in Problem \ref{pro:user-pre-rel} and then
check if $P$ is included in the computed  $P\cap I_\psi$,  that is
if $P\subseteq I_\psi$.


\begin{example}\label{ex:running1}{\em
Let us reconsider the vector field $F$ of Example \ref{ex:running0}.
The variety $\psi=\var(\{p\})=\var(\{x-y\})$ is the line $x=y$.
Consider $\phi=\var(\{q\})$ where $q=x^2-xy$. Let $P$ be the set of
all polynomials of degree $\leq 2$. We can consider the following
problems. (a) Decide whether   $\psi\impl[F]\,\phi$; (b) find a
finite representation of $P\cap I_\psi$, that is all the
conservation laws of degree at $\leq 2$ that are satisfied, for each
initial state in the line $x=y$ (relativized strongest
postcondition); (c) find a finite representation of the largest
algebraic variety $\psi_\phi$ such that $\psi_\phi\impl[F]\,\phi$
(weakest precondition). Note that
    solving (b) also yields a solution of (a). 
}\end{example}

Concerning the weakest precondition Problem \ref{pro:user-post}, we
note that
 a   simple algorithm consists in   collecting all algebraic conditions
 ensuring that   the derivatives at any order of the polynomials
 specifying $\phi$ vanish.
Specifically, assuming $\phi=\var(P)$ for a user defined, finite set
$P\subseteq \reals[\xx]$,   one   considers the chain of sets
$P_0\defi P$, $P_{j+1}=P_j\cup \lie_F(P_j)=P_j\cup\{\lie_F(p):p\in
P_j\}$, until the least $m$ such that $\ide{P_{m+1}}=\ide{P_m}$,
where the last equality can be checked via Gr\"{o}bner bases
computation. Then one has $\psi_\phi=\var(P_m)$. Termination and
correctness of this algorithm can be easily derived from the results
presented, for example, in \cite{Liu,Pla14} (see also \cite{Novi}).
For the sake of completeness, we report a proof in Appendix \ref{app:pre}.
 In our experience, though, this simple algorithm tends to
scale badly as the number of variables grows, so   alternatives
are worthwhile to consider.

In the following
sections, we shall focus on  Problem 
\ref{pro:user-pre-rel}. 
In particular, we shall give a method, called $\post$, that works quite well in the
case when the polynomial set $P$ is specified by a polynomial
template.
Moreover, as a byproduct of this method, we will also get
 the weakest algebraic precondition for (and largest algebraic
 invariant included in)    $\var(P\cap I_\psi)$,
 which can be used to address Problem \ref{pro:user-post} as well. 
Finally, $\post$   will also give us a handle on the more
general and difficult  Problem \ref{pro:user-pre}.


\section{The $\post$ algorithm}\label{sec:problem2}
Recall from \eqref{eq:Ipsi} that $I_\psi$ is the ideal that induces the strongest algebraic postcondition
of the system at hand, for  a  user specified variety (precondition) $\psi$. Our goal is  to give
a method to effectively compute
 $P\cap I_\psi$, for a    polynomial  set $P$ which is itself user-specified.
Following a well-established tradition in the field of continuous and hybrid systems,
 we shall consider the case
when the user specifies  $P$   via  a polynomial \emph{template}, which we review   in the next paragraph. 

\lncstrue
\paragraph{Polynomial templates}\label{sub:templ}
\newcommand{\Lin}{{{\mathbb{L}}}}
Fix a tuple of $n\geq 1$ of distinct
\emph{template parameters}, say $\aa=(a_1,...,a_n)$, disjoint from $\xx$. Let
$\Lin(\aa)$, ranged over by $\ell$, be the set of \emph{linear
expressions} with coefficients in $\field$ and variables in $\aa$;
e.g. $\ell=5a_1 +42a_2-3a_3$   is one such expression\footnote{Note
that linear expressions with a constant term, such as $2+5a_1
+42a_2-3a_3$ are not allowed.
}. A \emph{template} \cite{San04} is a polynomial $\pi$ in
$\Lin(\aa)[\xx]$, that is, a polynomial with linear expressions as
coefficients. For example, the
following is a template: \iflncs $ \pi  =   (5a_1+(3/4) a_3)xy^2+
(7a_1+(1/5) a_2)xz + (a_2+42a_3)$. \else {
\begin{eqnarray*}
\pi & = & (5a_1+(3/4) a_3)xy^2+ (7a_1+(1/5) a_2)xz + (a_2+42a_3)\,.
\end{eqnarray*}}
\fi
Note that $\Lin(\aa)[\xx]\subseteq \reals[\aa,\xx]$, so, whenever convenient, we can consider
a template as a polynomial in this   larger ring. 
A \emph{template parameters valuation}  is a  vector
\begin{eqnarray*}
\parv & = & (\mu_1,...,\mu_n) \in \field^n\,.
\end{eqnarray*}
Given such a  $\parv$,
 we will let
$\ell[\parv]\in \field$ denote the result of replacing each template parameter
  $a_i$ with $\mu_i$, and evaluating the resulting expression;
we will let $\pi[\parv]\in \field[\xx]$ denote the polynomial obtained
by replacing each $\ell$ with $\ell[\parv]$ in $\pi$.  Given a set
$S\subseteq \field^n$, we let $\pi[S]$ denote the set
$\{\pi[\parv]\,:\, \parv\in S\}\subseteq \field[\xx]$. 
%
The (formal) Lie derivative of $\pi$ is defined as expected, once
linear expressions are treated as constants; note that $\lie(\pi)$
is still a template.
 It is easy to see that the following property is true, as a consequence of the fact
 that $a_1,..., a_n$ are treated as symbolic constants during differentiation: for
   each $\pi$ and $\parv$, one has $\lie(\pi[\parv])   =
   \lie(\pi)[\parv]$.
This property extends as expected to the $j$-th Lie derivative ($j\geq 0$):
\begin{eqnarray}
\vsp\lie^{(j)}(\pi[\parv]) & = & \lie^{(j)}(\pi)[\parv]\,.\label{eq:templder}
\end{eqnarray}
\lncsfalse

\paragraph{The   algorithm}\label{sub:refinement}
Given  a user specified algebraic variety $\psi$ (the precondition)
and a polynomial template $\pi$,  describing $P= \pi[\reals^n]$, our
objective is to compute the relativized strongest postcondition
$P\cap I_\psi$; recall that $I_\psi$ represents the strongest
algebraic postcondition. The following one is an important concept.

\begin{definition}[polynomial invariant]\label{def:polyinv}
Let us call   $p\in \reals[\xx]$ a \emph{polynomial invariant for
$F$ and  } $\vv_0$ if the function $p(t;\vv_0)$  is identically $0$.
\end{definition}

A polynomial invariant  expresses a law which is satisfied by the
solution of the initial value problem $(F,\vv_0)$, that is   a
conservation law.  We will rely on the following two lemmas. The first one
 is just a reformulation
of the definition of $I_\psi=\Id(\phi_\psi)$.
The easy proof of the second lemma is reported in  Appendix \ref{app:proofs}.

\begin{lemma}\label{lemma:Ipsi}
$I_\psi   = \{p\,:\,p \text{ is a polynomial invariant    for $F$ and
each }\vv_0\in\psi\}$.
\end{lemma}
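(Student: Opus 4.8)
The plan is to unwind the definitions and check the two inclusions. Recall $I_\psi \defi \Id(\phi_\psi)$, where $\phi_\psi$ is the smallest algebraic variety $\phi$ with $\psi \impl [F]\,\phi$. Write $J$ for the right-hand side, $J = \{p : p(t;\xx_0) \equiv 0 \text{ for each } \xx_0 \in \psi\}$, the set of polynomials that are invariants for every initial value in $\psi$. I want to show $I_\psi = J$.

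First I would show $J \subseteq I_\psi$. Take $p \in J$. Then for every $\xx_0 \in \psi$ and every $t \in D_{\xx_0}$ we have $p(\xx(t;\xx_0)) = 0$, i.e. every trajectory starting in $\psi$ lies in $\var(\{p\})$. Hence $\psi \impl [F]\,\var(\{p\})$, so by minimality $\phi_\psi \subseteq \var(\{p\})$. Applying the inclusion-reversing map $\Id$ and using $p \in \Id(\var(\{p\}))$, we get $p \in \Id(\var(\{p\})) \subseteq \Id(\phi_\psi) = I_\psi$.

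Conversely, for $I_\psi \subseteq J$, take $p \in I_\psi = \Id(\phi_\psi)$, so $p$ vanishes on all of $\phi_\psi$. Since $\psi \impl [F]\,\phi_\psi$ holds, for every $\xx_0 \in \psi$ the whole trajectory $\xx(t;\xx_0)$ stays in $\phi_\psi$, and therefore $p(\xx(t;\xx_0)) = 0$ for all $t \in D_{\xx_0}$; by analyticity this means the analytic extension $p(t;\xx_0)$ is identically $0$. Thus $p$ is a polynomial invariant for every $\xx_0 \in \psi$, i.e. $p \in J$. Combining the two inclusions gives the claim.

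There is no real obstacle here — as the text itself notes, the lemma is "just a reformulation of the definition of $I_\psi$." The only point that needs a word of care is passing from "$p$ vanishes along the trajectory on the domain $D_{\xx_0}$" to "$p(t;\xx_0) \equiv 0$ as an analytic function," which is exactly the definition of polynomial behaviour recalled in Section~\ref{sec:prel} (the analytic extension of $p(\xx(t))$); a function analytic on a connected open interval that vanishes on a subinterval vanishes everywhere, but in fact here $p(\xx(t;\xx_0))$ already vanishes on all of $D_{\xx_0}$, which is the full interval of convergence, so even that remark is unnecessary. Likewise one should keep track that $\phi_\psi$ genuinely exists and is the intersection of all valid postconditions, which was already observed right after Problem~\ref{pro:user-pre}.
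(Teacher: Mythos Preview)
Your proof is correct and matches the paper's treatment: the paper does not actually prove this lemma but simply remarks that it ``is just a reformulation of the definition of $I_\psi=\Id(\phi_\psi)$,'' which is exactly the unwinding of definitions you carry out. Your extra care about the analytic extension $p(t;\xx_0)$ possibly having a larger domain than $D_{\xx_0}$ is the only non-automatic point, and you handle it correctly.
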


\begin{lemma}\label{lemma:lie} Let $p\in \reals[\xx]$. Then $p$ is a
polynomial invariant for $F$ and $\vv_0$ if and only if
for each $j\geq 0$, $p^{(j)}(\vv_0)=0$.
\end{lemma}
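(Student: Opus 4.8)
The plan is to exploit the analyticity of the induced behaviour $p(t;\xx_0)$ together with the identity $\frac{d^j}{dt^j}p(t;\xx_0) = (\lie^{(j)}_F(p))(t;\xx_0) = p^{(j)}(t;\xx_0)$ recorded in the Preliminaries, which at $t=0$ specializes to $\frac{d^j}{dt^j}p(t;\xx_0)_{|t=0} = p^{(j)}(\xx_0)$, using also $p(t;\xx_0)_{|t=0}=p(\xx_0)$.

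For the forward implication, suppose $p$ is a polynomial invariant for $\xx_0$, i.e.\ $p(t;\xx_0)$ is identically $0$ on $D_{\xx_0}$. Then every derivative of this function is identically $0$ on $D_{\xx_0}$; evaluating at $t=0$ and using the displayed identity gives $p^{(j)}(\xx_0)=0$ for every $j\geq 0$. This direction is immediate.

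For the converse, assume $p^{(j)}(\xx_0)=0$ for all $j\geq 0$. Recall that $p(t;\xx_0)$ was defined to be analytic on $D_{\xx_0}$, the largest symmetric open interval around $0$ on which its Taylor expansion from $0$ converges; in particular $p(t;\xx_0)$ coincides with that Taylor series throughout $D_{\xx_0}$. The $j$-th Taylor coefficient equals $\frac{1}{j!}\frac{d^j}{dt^j}p(t;\xx_0)_{|t=0} = \frac{1}{j!}p^{(j)}(\xx_0) = 0$, so the Taylor series is the zero series, whence $p(t;\xx_0)$ is identically $0$ on $D_{\xx_0}$; that is, $p$ is a polynomial invariant for $\xx_0$.

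The argument is essentially a direct consequence of the set-up in the Preliminaries, so I expect no genuine obstacle. The only point requiring a little care is the use of analyticity (rather than mere smoothness) to pass from ``all derivatives at $0$ vanish'' to ``the function vanishes on a whole neighbourhood of $0$'', in fact on all of $D_{\xx_0}$ — and this is precisely why $D_{\xx_0}$ was chosen as the interval of convergence of the Taylor series of $p(t;\xx_0)$.
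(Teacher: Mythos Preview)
Your proof is correct and is the natural argument: analyticity of $p(t;\xx_0)$ together with the identity $\frac{d^j}{dt^j}p(t;\xx_0)_{|t=0}=p^{(j)}(\xx_0)$ from the Preliminaries. The paper does not spell out its own proof here but simply cites \cite{Fossacs17}, calling it easy; your argument is exactly the expected one.
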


The above  two lemmas  suggest  the following strategy to compute the
set $\pi[\reals^n]\cap I_\psi$. We should identify  those template parameter
valuations $\parv\in \reals^n$, such that     $\pi[\parv] $ is a
polynomial invariant for each $\vv_0\in \psi$ (Lemma \ref{lemma:Ipsi}). That is, those $\parv$'s such that
for each
$j\geq 0$ and for each $\vv_0\in\psi$, $\pi^{(j)}[\parv](\vv_0)=0$ (Lemma \ref{lemma:lie}). Or,
equivalently, $\pi^{(j)}[\parv]\in \Id(\psi)$ for each $j\geq 0$. For
each   $j\geq 0$, the last  condition imposes certain constraints on $\parv$, that
is on the   parameters  of the template $\pi^{(j)}$. In order to make
these constraints explicit, we shall rely on the following key
lemma. 
In the sequel we shall assume,
 over the polynomial ring $\reals[\aa,\xx]$,    a
 lexicographic
monomial order\footnote{This guarantees that, for any finite set $G\subseteq \reals[\xx]$,
$G$ is a Gr\"{o}bner basis in $\reals[\aa,\xx]$  if and only if it is in $\reals[\xx]$; see \cite[Ch.3,§1,Th.2]{Cox}. Any \emph{elimination} ordering  for the template parameters $a_i$ could as well be considered.} such that $a_i>x_j$ for each $i,j$.
 A proof of the lemma is reported in  Appendix \ref{app:proofs}.

\begin{lemma}\label{lemma:goebner} Let $G\subseteq \reals[\xx]$ be a Gr\"{o}bner basis.
Let $\pi$ be a polynomial template and $r= \pi\bmod G$. Then $r$  is
linear in the template parameters $a_1,...,a_n$. Moreover, for each $\parv\in \reals^n$, $\pi[\parv]\bmod  G
= r[\parv]$.
\end{lemma}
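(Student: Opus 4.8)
The plan is to exploit the fact that the division algorithm modulo a Gröbner basis $G\subseteq\reals[\xx]$ is a $\reals$-linear operation that respects the monomial structure in $\xx$, while the parameters $a_i$ act as inert scalars. First I would write the template as $\pi=\sum_{k} \ell_k\, m_k$, a finite sum where each $m_k$ is a monomial in $\xx$ and each $\ell_k\in Lin(\aa)$ is a linear expression in the parameters (with no constant term). Recalling the convention fixed just above the Templates paragraph — the lexicographic order on $\reals[\aa,\xx]$ has $a_i>x_j$, so a Gröbner basis $G\subseteq\reals[\xx]$ in $\reals[\xx]$ is also one in $\reals[\aa,\xx]$, and remainders computed in the big ring agree with those in $\reals[\xx]$ — I can run the multivariate division of $\pi$ by $G$ inside $\reals[\aa,\xx]$. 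Since no element of $G$ involves the parameters, no leading term of any $g\in G$ can cancel a parameter, and at every reduction step the cofactor introduced lies in $\reals[\xx]$; hence by induction on the number of reduction steps, each intermediate polynomial (and the final remainder $r$) is an $\reals[\xx]$-linear combination of the $\ell_k$'s, i.e. has the form $\sum_j \ell'_j\, m'_j$ with $\ell'_j\in Lin(\aa)$. In particular $r$ is linear in $\aa$ with no constant term, which is the first assertion. (Equivalently, and perhaps cleanest to present: division modulo $G$ is an $\reals$-linear idempotent map $\rho:\reals[\xx]\to\reals[\xx]$, and one extends it coefficientwise to $Lin(\aa)[\xx]$ by $\rho(\sum_k \ell_k m_k)=\sum_k \ell_k\,\rho(m_k)$; one then checks this coefficientwise extension coincides with the honest division of $\pi$ by $G$ in $\reals[\aa,\xx]$.)

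For the second assertion, fix $v\in\reals^n$. Applying the valuation homomorphism $\reals[\aa,\xx]\to\reals[\xx]$, $a_i\mapsto r_i$ (which is an $\reals$-algebra map fixing $\reals[\xx]$ pointwise) to the division identity $\pi = \sum_{g\in G} h_g\, g + r$ yields $\pi[v] = \sum_{g\in G} h_g[v]\, g + r[v]$, since $g[v]=g$ for $g\in\reals[\xx]$. This exhibits $\pi[v]$ as a combination of the $g$'s plus the polynomial $r[v]$. To conclude that $r[v]$ is exactly the remainder of $\pi[v]$ modulo $G$, I invoke the defining property of a remainder: $r[v]$ must have no monomial divisible by a leading monomial of some $g\in G$. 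Because the leading terms of the $g$'s involve only $\xx$-variables, and because passing from $r$ to $r[v]$ only substitutes real numbers for the $a_i$ inside the coefficients $\ell_k$ — it neither creates nor destroys $\xx$-monomials, it only rescales the existing ones — the set of $\xx$-monomials appearing in $r[v]$ is a subset of those appearing in $r$, each already irreducible modulo $G$. Hence $r[v]$ is the (unique) remainder of $\pi[v]$ modulo $G$, i.e. $\pi[v]\bmod G = r[v]$.

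The only genuinely delicate point is making rigorous the claim that the division process treats the parameters as scalars, i.e. that substituting numbers for the $a_i$ commutes with reduction modulo $G$. The cleanest way to pin this down — and the route I would take in the write-up — is via the monomial-order convention: working in $\reals[\aa,\xx]$ with $a_i>x_j$, a leading term of $\pi$ (or of any intermediate polynomial) is $\ell\cdot m$ for the $>_{\mathrm{lex}}$-largest $\xx$-monomial $m$ with nonzero coefficient, and it is divisible by $\mathrm{LT}(g)$, $g\in G\subseteq\reals[\xx]$, precisely when $m$ is, independently of the parameter content $\ell$; so the sequence of monomials $m$ chosen during the division of $\pi$ is governed entirely by the $\xx$-part and is unaffected by $v$, while the coefficients transform $\reals$-linearly. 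Once this bookkeeping is in place, both claims follow, and no nontrivial calculation remains.
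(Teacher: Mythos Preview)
Your proposal is sound, and for the second claim it is essentially identical to the paper's argument: substitute $v$ into the division identity $\pi=q+r$, observe $q[v]\in\ide G$, check that the $\xx$-monomials surviving in $r[v]$ are a subset of those already irreducible in $r$, and invoke uniqueness of the remainder modulo a Gr\"{o}bner basis.

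For the first claim the paper takes a genuinely different route. Rather than tracking the division step by step, it argues by contradiction from the high-level output of division: writing $\pi=q+r$ with $\multideg(q)\leq\multideg(\pi)$, a hypothetical monomial $\lambda\alpha$ in $r$ with some $a_i$ appearing to degree $\geq 2$ would force $-\lambda\alpha$ to occur in $q$ (since $\pi$ is linear in $\aa$), hence $\multideg(\pi)\geq\multideg(q)\geq\multideg(\alpha)$; but under the chosen lex order with every $a_i>x_j$, any such $\alpha$ strictly dominates every monomial of $\pi$, a contradiction. Your approach is more constructive and, via the coefficientwise extension of the $\reals$-linear remainder map $\rho$, arguably cleaner; it also yields directly that $r$ is again a template (no constant term in $\aa$), which the paper's multidegree argument gives only as ``degree $\leq 1$ in each $a_i$''. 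Two small points to tighten in your write-up of the step-by-step variant: under the stated lex order on $\reals[\aa,\xx]$ the leading term of a template-shaped polynomial is a \emph{single} monomial $c\,a_j m$, not $\ell\cdot m$ for a general $\ell\in Lin(\aa)$, and it is \emph{not} governed by the largest $\xx$-monomial present (e.g.\ in $a_1x_2+a_2x_1^{10}$ the leading term is $a_1x_2$); correspondingly the cofactor at each step is $c\,a_j\cdot h$ with $h\in\reals[\xx]$, not an element of $\reals[\xx]$. With that correction the induction shows every intermediate polynomial lies in $Lin(\aa)[\xx]$, which is exactly what you need; alternatively, your $\rho$-extension argument sidesteps this bookkeeping entirely.
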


Fix a Gr\"{o}bner basis $G$ of $\Id(\psi)$. By the above lemma,
for a fixed $j$,   $\pi^{(j)}[\parv]\in \Id(\psi)$ exactly when $r_j[\parv]=0$, where $r_j=\pi^{(j)}\bmod G$. By
seeing $r_j$ as a polynomial in $\Lin(\aa)[\xx]$, the condition on $\parv$
\begin{eqnarray}\label{eq:condrj}
 r_j[\parv]&  =& 0
\end{eqnarray}
can be represented as a set of \emph{linear}  constraints on the template parameters
$\aa$: indeed, a polynomial is zero exactly when all of its
coefficients --- in the present case,  linear expressions in $\aa$ ---
are zero\footnote{For instance, if $\pi=(a_1+a_2)x_1+a_3x_2$ then $\pi[\parv]=0$ corresponds to the constraints
 $a_1=-a_2$ and $a_3=0$}. This discussion leads to the   method described below. 
We
first give a purely mathematical description of the method,
deferring the discussion of its computational aspects to Section
\ref{sec:compasp}.

The method can be seen as a generalization of  the double chain
algorithm of \cite{Fossacs17} to algebraic safety assertions; see
the concluding section for a discussion on the differences between the
two algorithms. The basic idea here is gradually refining the space
of template parameter valuations,
seen as a subset  of   $\reals^n$.
  More precisely,
the algorithm   builds two chains of sets: a descending chain of
vector spaces $V_i$, representing   spaces of   template parameter
valuations for which all the derivatives of $\pi$ up to order $i$ vanish
on   the points in $\psi$; and an (eventually) ascending chain of
ideals $J_i$, induced by the polynomials obtained from those parameter valuations. This   ideal chain is used
in the algorithm to detect the stabilization of the sequence, as discussed below. In
order to state the correctness of the result in the most general form, let us fix
an arbitrary ideal $I_0\subseteq \Id(\psi)$ and a
 Gr\"{o}bner basis
$G$ of $I_0$. Note that by admitting a $I_0$ smaller than
$\Id(\psi)$ we   actually allow for a weakening of the precondition.
For each $j\geq 0$, let $r_j\defi \pi^{(j)}\bmod G$. For each $i\geq
0$, consider the sets \vsp
\begin{eqnarray}
V_i & \defi & \{\parv\in \field^n\,:\,r_j[\parv] \text{ is the 0 polynomial, for }j=0,...,i\,\}\label{eq:Vi}\\[-3pt]
J_i & \defi & \left\langle\,\bigcup_{j=0}^i \pi^{(j)}[V_i]\,\right\rangle\label{eq:Ji}\,.
\vsp
\end{eqnarray}\noindent
It is easy to check that each $V_i\subseteq \field^n$ is a
vector space over $\field$ of dimension $\leq n$: this stems from
the linearity  in $\aa$ of the $r_j$ terms. Now let $m\geq 0$ be  the
least integer such that the following conditions are \emph{both}
true:
\begin{eqnarray}
V_{m+1} & = &V_m \label{eq:Vm}\\
J_{m+1} & = & J_m\,.\label{eq:Jm}
\end{eqnarray}\noindent
The algorithm returns $(V_m,J_m)$, written
$\gdc_F(\psi,\pi)=(V_m,J_m)$; we shall omit the subscript ${}_F$ when the vector field $F$
is clear from the context. Note that the integer $m$ is well
defined: indeed, $V_0\supseteq V_1\supseteq \cdots$ forms  an
infinite descending chain of finite-dimensional vector spaces, which
must stabilize in finitely many steps. In other words,  we can
consider the least $m'$ such that $V_{m'}=V_{m'+k}$ for each $k\geq
1$.
Then $J_{m'}\subseteq J_{m'+1}\subseteq\cdots$ forms    an infinite
ascending chain of ideals, which must
  stabilize at some $m\geq m'$. 
Therefore there must be some index $m$ such that \eqref{eq:Vm}  and
\eqref{eq:Jm} are both satisfied, and we   choose the least such
$m$.

\paragraph{Results} We start with an important concept, which is needed to state and prove the correctness
and completeness of $\post$.

\begin{definition}[invariant ideal]\label{def:invid}
A set of polynomials $J\subseteq \reals[\xx]$ is an \emph{ invariant
ideal} for the vector field $F$ if it is an ideal and
$\lie_F(J)\defi \{\lie_F(p):p\in J\}\subseteq J$.
\end{definition}

The next theorem states the correctness and relative  completeness
of \gdc. Informally, the algorithm   outputs a space $V$
such that $\pi[V]\subseteq I_\psi$, which is the largest such space if $I_0=\Id(\psi)$, and
 the smallest invariant ideal $J$ including $\pi[V]$. This invariant ideal also conveys important
 information about the system, as discussed later on in the section.
In order to prove the main theorem, we need a technical lemma, whose proof is reported in the Appendix \ref{app:proofs}.

\begin{lemma}\label{lemma:stab}
Let $V_m,J_m$ be the sets returned by the \gdc\ algorithm. Then 
for each $j\geq 1$, one has
 $V_m=V_{m+j}$ and $J_m=J_{m+j}$.
\end{lemma}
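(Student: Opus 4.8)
The plan is to show that once both stabilization conditions \eqref{eq:Vm}--\eqref{eq:Jm} hold at index $m$, the ideal $J_m$ is already an \emph{invariant} ideal and is contained in $\ide G\;(=\Id(\psi))$; from these two facts the constancy of both chains from $m$ on will follow easily. First I would record the immediate consequence of the hypotheses: since $V_{m+1}=V_m$, we may rewrite $J_{m+1}=\ide{\bigcup_{j=0}^{m+1}\pi^{(j)}[V_m]}$, so that the equality $J_{m+1}=J_m$ is equivalent to the single inclusion $\pi^{(m+1)}[V_m]\subseteq J_m$.

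Next, using \eqref{eq:templder} in the form $\lie(\pi^{(j)}[v])=\pi^{(j+1)}[v]$, I would verify that $J_m$ is invariant, i.e.\ $\lie(J_m)\subseteq J_m$. For an arbitrary element $p=\sum_i h_i g_i\in J_m$, where the $g_i$ are generators of the form $\pi^{(j)}[v]$ with $v\in V_m$, $0\le j\le m$, and $h_i\in\reals[\xx]$, the rules \eqref{eq:liesum}--\eqref{eq:lieprod} give $\lie(p)=\sum_i\bigl(h_i\lie(g_i)+\lie(h_i)g_i\bigr)$; here $\lie(h_i)g_i\in J_m$ trivially, while $\lie(g_i)=\pi^{(j+1)}[v]\in J_m$ because for $j<m$ it is again a generator of $J_m$ and for $j=m$ it is covered by the inclusion $\pi^{(m+1)}[V_m]\subseteq J_m$ established above. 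Hence $\lie(p)\in J_m$. Iterating $\lie$ starting from $\pi[v]=\pi^{(0)}[v]\in J_m$ then yields $\pi^{(k)}[v]\in J_m$ for all $k\ge 0$ and all $v\in V_m$.

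Finally I would note that $J_m\subseteq\ide G$: each generator $\pi^{(j)}[v]$ (with $v\in V_m$, $j\le m$) satisfies $\pi^{(j)}[v]\bmod G=r_j[v]=0$ by the very definition of $V_m$ together with Lemma \ref{lemma:goebner}, hence lies in $\ide G$, which is an ideal. Combining the last two observations, $\pi^{(k)}[v]\in\ide G$, i.e.\ $r_k[v]=0$, for every $k$ and every $v\in V_m$; with the always-valid inclusion $V_{m+j}\subseteq V_m$ this gives $V_{m+j}=V_m$ for all $j\ge 1$. Then $J_{m+j}=\ide{\bigcup_{i=0}^{m+j}\pi^{(i)}[V_m]}$ contains $J_m$ and, since $\pi^{(i)}[V_m]\subseteq J_m$ for all $i$, is also contained in $J_m$, so $J_{m+j}=J_m$. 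The one step requiring care is the invariance of $J_m$: it is not enough that $\lie$ sends the generators into $J_m$, so the Leibniz rule must be invoked to propagate this to arbitrary ideal elements; everything else is routine manipulation of the definitions \eqref{eq:Vi}--\eqref{eq:Ji}.
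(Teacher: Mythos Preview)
Your proof is correct. The core argument---using the Leibniz rule \eqref{eq:lieprod} together with the termination condition $\pi^{(m+1)}[V_m]\subseteq J_m$ to show that $J_m$ is closed under $\lie_F$---is exactly what drives the paper's proof as well. The difference is purely organizational: the paper proceeds by induction on $j$, re-deriving at each step the fact that $\pi^{(m+j+1)}[v]\in J_m$ by expanding $\pi^{(m+j)}[v]$ in terms of generators and differentiating; you instead establish once and for all that $J_m$ is an invariant ideal, then read off all the consequences. Your route is slightly more economical, and it has the side benefit of making explicit two facts ($J_m$ is invariant, and $J_m\subseteq\ide G$) that the paper later re-derives in the proof of Theorem~\ref{th:corr}(b). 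Your remark that Leibniz is essential---closure on generators alone does not suffice---is well taken and is precisely the point the paper's inductive step hinges on.
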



\begin{theorem}[correctness and relative completeness of $\gdc_F$]\label{th:corr}
Let $\psi$ be an algebraic variety, let $I_0\subseteq I_\psi$ be an ideal and $G$ be a  Gr\"{o}bner basis of $I_0$. For any polynomial template $\pi$, let
 $\gdc_F(\psi,\pi)=(V,J)$. Then
\begin{itemize}
\item[(a)] $\pi[V]\subseteq  \pi[\field^n]\cap I_\psi$. In
particular, $\pi[V]=  \pi[\field^n]\cap I_\psi$
 if $I_0=\Id(\psi)$;
\item[(b)] $J$ is  the smallest invariant  ideal  such that   $J\supseteq \pi[V]$. 
Moreover,  $J\subseteq I_\psi$. 
\end{itemize}
\end{theorem}
\begin{proof}
Let $(V,J)=(V_m,J_m)$ for some $m\geq 0$.
Concerning part (a), we first note that, by virtue of Lemma
\ref{lemma:Ipsi} and Lemma \ref{lemma:lie},
 $\pi[\parv]\in \pi[\field^n] \cap I_\psi
$ if and only if for each $j\geq 0$,   $(\pi[\parv])^{(j)}=
\pi^{(j)}[\parv]\in \Id(\psi)$ (here we have used property
\eqref{eq:templder}).
If $\parv \in V_m=V_{m+1}=V_{m+2}=\cdots$ (here
we are using Lemma \ref{lemma:stab}), then by definition, for each
$j\geq 0$, $r_j[\parv] = (\pi^{(j)})[\parv]\bmod G = (\pi[\parv])^{(j)}\bmod
G=0$ (here we have used again
 property \eqref{eq:templder}
and Lemma \ref{lemma:goebner}). That is, for each $j\geq 0$,
 $(\pi[\parv])^{(j)}\in I_0\subseteq \Id(\psi)$. This implies   (again by
Lemma \ref{lemma:Ipsi} and \ref{lemma:lie}) that $\pi[\parv]\in I_\psi$.
Assume now that $I_0=\Id(\psi)$ and let $\parv$ such that
 $\pi[\parv]\in \pi[\field^n] \cap I_\psi$, that is
 if for each $j\geq 0$,   $(\pi[\parv])^{(j)}=
\pi^{(j)}[\parv]\in \Id(\psi)$. That is, being $G$ a    Gr\"{o}bner basis of $\Id(\psi)$,
 $\pi^{(j)}[\parv]\bmod G= r_j[\parv]=0$  (the first equality here follows
from  Lemma \ref{lemma:goebner}), for each $j\geq 0$. This
assertion, by definition,
  means   that   $\parv \in V_j$ for each $j\geq 0$, hence in particular
$\parv \in V_m$.

Concerning part (b),  to prove that $J_m$ is the smallest invariant
ideal including $\pi[V_m]$, it is enough to prove the following: (1)
$J_m$ is an invariant ideal,   (2) $J_m\supseteq \pi[\field^n]\cap
I_\psi $,  and (3) for any invariant ideal $I$ such that $
\pi[\field^n] \cap I_\psi\subseteq I$, we have that $J_m\subseteq
I$. We first prove (1),  that  $J_m$ is an invariant ideal. Indeed,
for each $\parv \in V_m$ and each $j=0,...,m-1$, we have
$\lie(\pi^{(j)}[\parv])=\pi^{(j+1)}[\parv]\in J_m$ by definition,
while for $j=m$, since $\parv \in V_m=V_{m+1}$, we have
$\lie(\pi^{(m)}[\parv])=\pi^{(m+1)}[\parv]\in J_{m+1}=J_m$ (note
that in both cases we have used property \eqref{eq:templder}).
Concerning (2),  note that $J_m\supseteq \pi[V_m]=\pi[\field^n]\cap
I_\psi$, by
virtue of part (a). 
Concerning (3),  consider any invariant ideal $I\supseteq
\pi[\field^n]\cap I_\psi $.  We show by induction on $j=0,1,...$
that for each $\parv \in V_m$,  $\pi^{(j)}[\parv]\in I$; this will imply the
wanted statement.  Indeed, $\pi^{(0)}[V]=\pi[V]\in I_\psi \cap
\pi[\field^n]$, as $\pi[V_m]\subseteq I_\psi$ by (a). Assuming now
that $\pi^{(j)}[\parv]\in I$, by invariance of $I$ we have
$\pi^{(j+1)}[\parv]= \lie(\pi^{(j)}[\parv])\in I$ (again, we have used here
property \eqref{eq:templder}).

Finally, $J_m\subseteq I_\psi$ follows from the last statement and
from the fact that $I_\psi$, as clearly seen from Lemma
\ref{lemma:Ipsi} and \ref{lemma:lie},  is   an invariant ideal.
\end{proof}

\begin{example}\label{ex:running2}{\em
We reconsider the vector field $F$ of Example \ref{ex:running0}. Let us consider
$\psi=\var(\{x-y\})$. A Gr\"{o}bner basis of $\Id(\psi)$ is just $G=\{x-y\}$.
 We let $\pi$
be the complete template of degree 2 (described below). Running
$\post_F(\psi,\pi)$ means building the chain of sets
 $V_i,J_i$, for $i=0,1,...$.  Below,
$\parv=(\mu_1,...,\mu_6)\in \reals^6$ denotes a generic template parameters
valuation, while $e_j\in \reals^n$ denotes the $j$-th basis vector
in $\reals^n$, for $j=1,...,6$. With the help of
a the computer algebra system  (e.g. SageMath \cite{Sage}), we consider
 the successive Lie
derivatives of $\pi$ and their remainders $\bmod \,G$, as follows:
 {\small
\begin{itemize}
\item $\pi = a_6xy + a_5y^2 +a_4 x^2+ a_3 y+ a_2 x  + a_1$  and
  $r_0=\pi\bmod G = (a_4+a_5+a_6) y^2   + (a_2+a_3) y   + a_1$. Thus
$V_0=\{\lambda\in
\reals^6\,:\,\mu_4+\mu_5+\mu_6=0,\,\mu_2+\mu_3=0,\,\mu_1=0\}$. A
parametric representation of the elements of $V_0$ is
$(0,-\mu_3,\mu_3,-\mu_5-\mu_6,\mu_5,\mu_6)$, from which a basis  is
obtained  as $B_0=\{-e_2+e_3,-e_4+e_5,-e_4+e_6\}$. We let $J_0=\ide
{\pi[V_0]}=\ide{\pi[B_0] }$;
\item $\pi^{(1)} = a_6 x^2 y + 2( a_4 +   a_5) x y^2 + a_6 y^3 + a_3 x y + a_2 y^2$ and
  $r_1=\pi^{(1)}\bmod G = 2(a_4 + a_5 + a_6) y^3 + (a_2 + a_3) y^2$: the last equation
  can be checked directly by   equating $x$ and $y$, which is what $\pi^{(1)}\bmod \{x-y\}$
  means. We see that $r_1=0$ does not induce new constraints on the template parameters, that is
$V_1=V_0$. Next, we compute  a basis for
 $\pi^{(1)}[V_1]=\pi^{(1)}[V_0]$ as $ \pi^{(1)}[B_0]=\{ x y - y^2,0,
 x^2 y - 2 x y^2 + y^3 \}$; we can check (again, equating $x$ and $y$)
  that the last set is $\subseteq J_0$, which implies $
\pi^{(1)}[V_1]\subseteq J_0$. This in turn implies
$J_1=\ide{\pi[V_1]\cup\pi^{(1)}[V_1]}=J_0$: therefore also the ideal chain has stabilized.
\end{itemize}
}\noindent
Therefore both chains stabilize already at $m=0$ and
 $\gdc(\psi,\pi)=(V_0,J_0)$.  A Gr\"{o}bner basis of
$J_0$ is $G_0=G$.
}\end{example}

\begin{remark}[result template]\label{rem:convention}{\em
Given a template $\pi$ and
$\parv\in \field^n$, checking if $\pi[\parv]\in \pi[V]$ is equivalent to
checking if $\parv \in V$: this can be effectively done
 knowing  a basis $B$ of the vector space  $V$
(see Section \ref{sec:compasp}). In practice, it is computationally more convenient
to represent the whole set $\pi[V]$ returned by \gdc\  compactly in terms of a \emph{new}
 $n'$-parameters ($n'\leq n$) result template $\pi'$ such that $\pi'[\reals^{n'}]=\pi[V]$.
 For instance, in the previous example,   the
result template $\pi'=a_1 (y^2- x^2)  + a_2 (x y -  x^2) +     a_3 (y-   x)$
 represents $\pi[V_0]$, in the precise
 sense that  $\pi[V_0]=\pi'[\reals^3]$. The result template $\pi'$
 can in fact be built directly from $\pi$, by
propagating the linear constraints on $\aa$ \eqref{eq:condrj}  as
they are generated. This will be explicitly described when
discussing the algorithmic presentation in Section
\ref{sec:compasp}.
}\end{remark}

Note that, while
  typically the user will be interested in
 $\pi[V]$,   the ideal $J$ as well may contain useful information, such
 as higher order, nonlinear conservation laws.
The
  theorem below is about the meaning  of $J$ as an invariant and as a precondition.
The theorem relies on Theorem \ref{th:corr}  and on the following
lemma, 
stating that invariant
 ideals, on the polynomial side,
 precisely
 correspond to algebraic invariants.
 The proofs of both the lemma and   the theorem are reported in the     Appendix \ref{app:proofs}.

\begin{lemma}\label{lemma:invar} Consider a set $\chi\subseteq \reals^N$. Then $\chi$ is an
algebraic invariant for the
 vector field
$F$ if and only if there is an invariant ideal $J$ for $F$ such that
$\chi=\var(J)$.
\end{lemma}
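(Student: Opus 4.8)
The plan is to prove Lemma \ref{lemma:invar} by establishing the two implications separately, in each case passing between the geometric notion of algebraic invariance (all trajectories starting in $\chi$ stay in $\chi$) and the algebraic notion of Lie-closure of an ideal. The key bridge, already available in the excerpt, is the combination of Lemma \ref{lemma:lie} (a polynomial $p$ vanishes identically along the trajectory from $\xx_0$ iff all its iterated Lie derivatives vanish at $\xx_0$) and the Taylor-expansion identity $\frac{d^j}{dt^j}p(t;\xx_0)=(\lie^{(j)}_F(p))(t;\xx_0)$.

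For the ``if'' direction, suppose $J$ is an invariant ideal and set $\chi=\var(J)$. I would take $\xx_0\in\chi$ and an arbitrary $p\in J$, and show $p(t;\xx_0)\equiv 0$. Since $J$ is Lie-closed, $\lie^{(j)}_F(p)\in J$ for every $j\geq 0$, hence $\lie^{(j)}_F(p)$ vanishes on $\var(J)=\chi$, in particular at $\xx_0$; by Lemma \ref{lemma:lie}, $p(t;\xx_0)\equiv 0$. Applying this to a finite basis of $J$ shows that the whole trajectory $\xx(t;\xx_0)$ annihilates every generator of $J$, hence lies in $\var(J)=\chi$. This gives $\chi\impl[F]\chi$, i.e.\ $\chi$ is an algebraic invariant.

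For the ``only if'' direction, suppose $\chi$ is an algebraic invariant. The natural candidate ideal is $J\defi\Id(\chi)$, which by the basic facts recalled in Section \ref{sec:prel} satisfies $\var(\Id(\chi))=\chi$ since $\chi$ is a variety. It remains to check that $\Id(\chi)$ is Lie-closed: take $p\in\Id(\chi)$ and any $\xx_0\in\chi$. Because $\chi$ is invariant, the trajectory $\xx(t;\xx_0)$ stays in $\chi$ for all $t\in D_{\xx_0}$, so $p(\xx(t;\xx_0))=0$ on $D_{\xx_0}$; differentiating $j$ times at $t=0$ and using the identity above gives $(\lie^{(j)}_F(p))(\xx_0)=0$. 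Taking $j=1$ and letting $\xx_0$ range over $\chi$ yields $\lie_F(p)\in\Id(\chi)$, so $\lie_F(\Id(\chi))\subseteq\Id(\chi)$, as required.

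The only delicate point is handling the domain of definition $D_{\xx_0}$: $p(t;\xx_0)$ is defined as the analytic extension of $p(\xx(t;\xx_0))$ beyond $D_{\xx_0}$, so I should be slightly careful that ``identically zero'' is read in the right sense. Since $p\circ\xx(\cdot;\xx_0)$ is analytic and vanishes on the open interval $D_{\xx_0}$ (which contains $0$), it vanishes on its whole interval of convergence by the identity theorem for analytic functions; this is exactly the convention under which Lemma \ref{lemma:lie} is stated, so no real obstacle arises. I expect the bookkeeping to be entirely routine once this convention is pinned down, and the proof to be short.
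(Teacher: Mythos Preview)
Your argument is correct and complete: the ``if'' direction uses Lie-closure of $J$ together with Lemma~\ref{lemma:lie} to force every $p\in J$ to vanish along the trajectory from any $\xx_0\in\var(J)$, and the ``only if'' direction takes $J=\Id(\chi)$, uses $\var(\Id(\chi))=\chi$ (valid because an algebraic invariant is by definition a variety), and differentiates $p(\xx(t;\xx_0))\equiv 0$ at $t=0$ to obtain Lie-closure. The handling of the analytic-extension convention is also fine.

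There is nothing to compare against: the paper states Lemma~\ref{lemma:invar} as a bridge between Theorem~\ref{th:corr} and Corollary~\ref{cor:instructive} but does not supply a proof in the appendix (only Lemmas~\ref{lemma:goebner} and~\ref{lemma:stab}, Theorems~\ref{th:corr}, \ref{th:corr2}, \ref{th:userpost}, \ref{th:psi0} and Corollary~\ref{cor:ipsi} are proved there). Your argument is the natural one and is almost certainly what the authors intended.
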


\begin{theorem}[weakest algebraic invariant and precondition]\label{th:instructive}
For an algebraic variety $\psi$ and
a polynomial template $\pi$, let
 $\gdc_F(\psi,\pi)=(V,J)$ and $\phi=\var(\pi[V])$. 
 Then
\begin{itemize}
\item[(a)]
 $\var(J)$ is the largest algebraic invariant included in  $\phi$; and
\item[(b)] $\var(J)$ is the weakest precondition of $\phi$.
\end{itemize}
\end{theorem}

We stress that      Theorem \ref{th:instructive}(b)   provides a
means to solve Problem \ref{pro:user-post} (weakest precondition)  via
the \gdc\ algorithm. In fact, given $\phi$, it suffices to
 consider \emph{any}  precondition $\psi$ and template $\pi$ such that $\gdc(\psi,\pi)=(V,J)$ and
$\var(\pi[V])=\phi$: then $\var(J)$ is $\phi$'s weakest precondition.
In particular, $\psi$ may consists of a singleton.
Also note that the theorem does not require the equality $I_0=\Id(\psi)$.
An example of application of this technique is given below.
Other examples will be
  discussed in Section \ref{sec:experiments} (see in particular the Kepler laws example).

\begin{example}\label{ex:running3}{\em
We reconsider the vector field $F$ of Example \ref{ex:running0}. Let
$\phi=\var(\{q\})$ be given, where $q=x^2-xy$. We want to compute the weakest precondition
$\psi_\phi$ via $\post$. We choose  the trivial precondition
$\psi=\{(0,0)\}=\var(\{x,y\})$:   both $x(t; \,(0,0))$ and $y(t;\, (0,0))$ are identically 0,  making
$\psi\rightarrow[F]\,\phi$  a valid assertion. Now choosing   $\pi=a_1\cdot q$ and running $\post$,
we obtain $\post(\psi,\pi)=(V,J)$, where $V=\reals$ and necessarily $\var(\pi[V])=\phi$, and
$J=\ide{\{xy^2 - y^3\,,\, x^2 - xy\}}$. By Theorem \ref{th:instructive}(b),
$\psi_\phi=\var(J)=\var(\{x-y\})$.
}\end{example}

Finally, the following result of theoretical interest, shows
 that the whole ideal $I_\psi$  as well can be characterized in
terms of the $\gdc$ algorithm. For any $k\geq 0$,  the
\emph{complete polynomial template} of degree $k$ over a set of
variables $X$ is
$\pi  \defi \sum_{\alpha}a_\alpha \alpha$, where $\alpha$ ranges
over all monomials of degree $\leq k$ on the variables in $X$, and
$a_\alpha$ ranges over distinct template parameters.

\begin{corollary}[characterization of $I_\psi$]\label{cor:ipsi}
 Let $\psi$ be an algebraic variety. Let
 $k\geq 0$, $\pi_k$ be the complete template of  degree $k$ over  the variables in
$\xx$ and $(V,J)=\gdc(\psi,\pi_k)$.  For $k$ large enough, $J
=I_\psi$.
\end{corollary}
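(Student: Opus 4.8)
The plan is to show that for $k$ large enough the complete template $\pi_k$ of degree $k$ ``captures'' all of $I_\psi$, so that $\gdc(\psi,\pi_k)=(V,J)$ with $\pi_k[V]=I_\psi\cap\reals_k[\xx]$ generating $I_\psi$, and $J=I_\psi$. First I would recall that $I_\psi$ is an ideal (Lemma~\ref{lemma:Ipsi}), hence by Hilbert's basis theorem it has a finite basis $\{p_1,\dots,p_s\}$; let $k_0=\max_i\deg(p_i)$. For any $k\ge k_0$ every $p_i$ lies in $\reals_k[\xx]$, and since $\pi_k$ is the complete template of degree $k$, each $p_i$ equals $\pi_k[v_i]$ for a suitable valuation $v_i$. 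By Theorem~\ref{th:corr}(a), $\pi_k[V]=\pi_k[\reals^n]\cap I_\psi$, which contains every polynomial of degree $\le k$ lying in $I_\psi$; in particular it contains all the $p_i$. Hence $\ide{\pi_k[V]}\supseteq\ide{p_1,\dots,p_s}=I_\psi$. Conversely $\pi_k[V]\subseteq I_\psi$, so $\ide{\pi_k[V]}\subseteq I_\psi$ because $I_\psi$ is an ideal; therefore $\ide{\pi_k[V]}=I_\psi$.

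Next I would use Theorem~\ref{th:corr}(b): $J$ is the smallest invariant ideal containing $\pi_k[V]$, and $J\subseteq I_\psi$. Since $I_\psi$ itself is an invariant ideal --- this follows from Lemma~\ref{lemma:invar} together with the fact that $\phi_\psi=\var(I_\psi)$ is an algebraic invariant (every trajectory starting in $\phi_\psi$ stays in $\phi_\psi$, as $\phi_\psi$ is the strongest postcondition and is itself closed under the flow), so $\Id(\phi_\psi)=I_\psi$ because $I_\psi$ is already radical being of the form $\Id(\cdot)$, hence equal to $\Id(\var(I_\psi))$ --- and since $I_\psi\supseteq\pi_k[V]$, minimality of $J$ gives $J\subseteq I_\psi$. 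Combined with $J\supseteq\pi_k[V]$ and $\ide{\pi_k[V]}=I_\psi$, and the fact that $J$ is an ideal, we get $J\supseteq\ide{\pi_k[V]}=I_\psi$. Therefore $J=I_\psi$, which is the claim.

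The main obstacle, and the point that needs the most care, is the choice of ``$k$ large enough'' and the claim that $I_\psi$ is an invariant ideal. The former is clean: $k\ge k_0$ with $k_0$ the maximal degree in some fixed finite basis of $I_\psi$ works, and the argument above does not require the basis to be reduced or a Gröbner basis. The latter requires knowing that $\phi_\psi$ is genuinely invariant for $F$: if $\xx_0\in\phi_\psi$ then the trajectory from $\xx_0$ stays in $\phi_\psi$. This holds because $\phi_\psi=\var(I_\psi)$ and, for $p\in I_\psi$ and $\xx_0\in\var(I_\psi)$, one has $p(\xx_0)=0$; moreover $\lie_F(p)$ is again in $I_\psi$ since for every $\xx_1\in\psi$ the function $p(t;\xx_1)$ is identically zero, hence so is its derivative $(\lie_F p)(t;\xx_1)$, so $\lie_F(p)$ is a polynomial invariant for all of $\psi$, i.e. $\lie_F(p)\in I_\psi$ by Lemma~\ref{lemma:Ipsi}. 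Thus $I_\psi$ is closed under $\lie_F$, i.e. it is an invariant ideal, which is exactly what is needed to invoke minimality of $J$. One should also note the degenerate case $\psi=\es$: then $I_\psi=\reals[\xx]$, which is generated by the constant template of degree $0$, so any $k\ge0$ suffices; this is consistent with the statement. Finally, the phrase ``for $k$ large enough'' in the statement should be read as ``for all $k$ exceeding some threshold depending on $\psi$'', and the proof above establishes precisely that with threshold $k_0$.
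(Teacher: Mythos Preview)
Your proof is correct and follows essentially the same approach as the paper: use Hilbert's basis theorem to pick a finite basis of $I_\psi$, choose $k$ at least the maximal degree in that basis so that the basis lies in $\pi_k[\reals^n]$, then invoke Theorem~\ref{th:corr}(a) to get the basis inside $\pi_k[V]$ and Theorem~\ref{th:corr}(b) to sandwich $J$ between $\ide{\pi_k[V]}=I_\psi$ and $I_\psi$. The paper's version is terser and simply asserts that $I_\psi$ is an invariant ideal (this was already observed at the end of the proof of Theorem~\ref{th:corr}, via Lemmas~\ref{lemma:Ipsi} and~\ref{lemma:lie}); your detour through Lemma~\ref{lemma:invar} in the second paragraph is unnecessary and slightly off-target---that lemma only guarantees \emph{some} invariant ideal cuts out $\phi_\psi$, not that $\Id(\phi_\psi)$ itself is invariant---but your direct argument in the third paragraph (closure of $I_\psi$ under $\lie_F$ from differentiating $p(t;\xx_1)\equiv 0$) is the right one and suffices.
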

\vsp \noindent
\begin{proof}
By Hilbert's basis theorem, there is a finite set of polynomials $P$
such that $I_\psi=\ide P$. Therefore $I_\psi$ is the smallest ideal
containing $P$, and is also an invariant  ideal. Now let $k$ be the
maximum degree of polynomials in $P$, let $\pi_k$ be the complete
template of degree $k$ over all   variables, and $n$   the number of template
parameters in  $\pi_k$. As $P\subseteq \pi_k[\reals^n]$ and
$P\subseteq I_\psi$, we have
 $\pi_k[\reals^n]\cap I_\psi\supseteq P$.
Now let $(V,J)=\gdc(\psi,\pi_k)$. By Theorem \ref{th:corr}(b),
$J\supseteq \pi_k[V]=\pi_k[\reals^n]\cap I_\psi \supseteq P$, hence
$J\supseteq I_\psi$. On the other hand, again by Theorem
\ref{th:corr}(b), $J\subseteq I_\psi$.
Therefore $J=I_\psi$.
\end{proof}

\vsp We leave open the problem of computing a lower bound on the
degree $k$  that is needed to recover $I_\psi$. We end the section
with a remark on the expressive power of algebraic varieties.

\begin{remark}[expressive power]\label{rem:ghost}{\em
Algebraic varieties can in general provide  only overapproximations
of sets of initial states and trajectories.   However, the
expressive power of algebraic varieties can often be  significantly
enhanced by introducing auxiliary, or \emph{ghost} variables, in the
terminology of Platzer \cite{Pla12b}. These variables are used to
express properties of interest. We have found particularly
interesting the case when ghost variables are used to encode
\emph{generic} initial values of the system: apparently, keeping
track of such values allows for more expressive polynomial
invariants. This is illustrated by the example below. We will put
this technique into use in Section \ref{sec:experiments} and, in a
more systematic way, in Section \ref{sec:semialg}, where we shall deal with
semialgebraic systems. 
}\end{remark}

\begin{example}\label{ex:running5}{\em
Consider again the system of Example \ref{ex:running0}. With no
constraints on the initial states, that is with $\psi=\reals^2$, the
strongest postcondition is quite easily seen to be the trivial
$\phi=\reals^2$, that is $I_\psi=\{0\}$.
We build now a new system by  introducing two new variables
$x_0,y_0$,  together with the corresponding equations $\dot x_0=0$
and $\dot y_0=0$: this means  they represent (generic) constants ---
in effect, parameters. We consider  the precondition
$\psi=\var(\{x-x_0,y-y_0\})$, meaning that $x_0$ and $y_0$ represent
the (generic) initial values of $x$ and $y$, respectively. Using a
complete template $\pi$ of degree 2, we now get the nontrivial
result $\gdc(\psi,\pi)=(V,J)$ with $J=\ide{\{x_0^2 - y_0^2 - x^2 +
y^2\}}$ and $\pi[V]=\pi'[\reals]$, where $\pi'=a_1(x^2 - x_0^2 - y^2 + y_0^2)$. Here $J$ represents a valid
nontrivial invariant for every instantiation  of $x_0,y_0$. 
}\end{example}

\section{Computational aspects of \gdc}\label{sec:compasp}
We   consider here  some important computational aspects of the $\gdc$
algorithm. We will first derive  a more algorithmic presentation of the
abstract procedure introduced in Section \ref{sec:problem2}; then
discuss issues related to selecting  an appropriate ideal $I_0 \subseteq \Id(\psi)$ and a
Gr\"{o}bner $G$
basis for it.

\subsection{Algorithmic presentation}
When it comes to the effective implementation of $\post$, the first
aspect to  consider is how to finitely represent  the sets $V_i,J_i$.
Each subspace $V_i$  is spanned by a finite basis $B_i$, which can   in  principle
 be computed explicitly from  the linear constraints on the template parameters $a_1,...,a_n$ imposed by \eqref{eq:Vi}. From \eqref{eq:Ji} it is  then easy to check that
$\bigcup_{j=0}^i\pi^{(j)}[B_i]$ is a basis of $J_i$. The termination
conditions $V_i=V_{i+1}$ and $J_i=J_{i+1}$ can also be checked
effectively. In particular, checking   $J_i=J_{i+1}$ involves
 computing a Gr\"{o}bner basis  of $J_i$, a
potentially expensive operation, and checking if $\pi^{(i+1)}[B]\subseteq J_i$.
Fortunately, this need   not   be done at each step, but only if  actually
$V_i=V_{i+1}$, the latter a   relatively inexpensive check.

Rather than building the $V_i$'s  explicitly, computationally it is   more  convenient to represent them implicitly, via symbolic linear constraints on the template parameters $a_1,...,a_n$.
At each step $i\geq 0$, such constraints are generated from the condition     $r_i=0$  on the remainder (see \eqref{eq:condrj}), and are represented by a   substitution $\gamma$  that eliminates a few template parameters.  The constraints $\gamma$ are   propagated to all templates $\pi^{(j)}$ ($0\leq j\leq i$) generated so far.
  This discussion leads to Algorithm \ref{alg:post}. Note that program
  blocks are defined by indentation.
  We make use of a few auxiliary
  variables and functions, as detailed below.

\begin{algorithm}[t]
\caption{$\post$  }
\begin{algorithmic}[1]

\Statex \textbf{Input}: {\small  $F$ a vector field, $\pi$ a $n$-parameters template, $G\subseteq \Id(\psi)$ a   Gr\"{o}bner basis for $I_0$ }
\Statex \textbf{Output}: {\small $\pi'$ a $n'$-parameters template, $J$ an invariant ideal (s.t.   $\post_F(\psi,\pi)=(V,J)$ and $\pi'[\reals^{n'}]=\pi[V]$)}
\State $S:=[\,]$
\While{ \textbf{true} }
\State $r:=\pi\bmod G$
\State $\gamma:= \mathrm{solve}(r=0)$
        \If{$\gamma=\es$}\COMMENT{Check if $V_{i+1}=V_i$; equivalently if $r=0$}
            \State $GS:=\mathrm{instantiate}(S)$
            \State $J:= \ide{GS}$
            \If{$\mathrm{instantiate}(\{\pi\})\subseteq J$}\COMMENT{Check if  $J_{i+1}=J_i$; if yes, we have stabilization}
                \State \textbf{return} $(\mathrm{first}(S),J)$
            \EndIf
        \Else\COMMENT{$V_{i+1}\neq V_i$; vector spaces chain not stabilized}
            \State $\pi:=\gamma(\pi)$\COMMENT{Propagate constraints $\gamma$}
            \State $S:= \gamma(S)$
        \EndIf
\State $S:=\mathrm{append}(S,\pi)$
\State $\pi:=\lie_F(\pi)$
\EndWhile

\end{algorithmic}
\end{algorithm}\label{alg:post}

\begin{enumerate}
 \item $S$ is an initially empty list of polynomial templates, used to collect the successive Lie
 derivatives of $\pi$.
 The functions $\mathrm{first}( \cdot )$ and $\mathrm{append}( \cdot )$, defined on lists, have the usual
 interpretation: $\mathrm{first}(S)$ returns the first element of $S$, with the   proviso that
   $\mathrm{first}([\,])\defi 0$, the zero polynomial template, while $\mathrm{append}(S,\pi)$
  returns the list obtained by   appending
   $\pi$ to   $S$ as a last element.

\item   $\gamma$ is a  {substitution}, encoding linear constraints
existing among the template
parameters.
Formally, a substitution $\gamma$ is
  a finite partial
 map  from $\{a_1,...,a_n\}$ (template parameters) to $\Lin[\aa]$ (linear expressions),
  such that
 no parameter in $\dom(\gamma)$
occurs in any $\ell\in \range(\gamma)$. We write $\gamma(c)$ for the result of applying
$\sigma$ to every parameter occurring in   (the  expression, set,...) $c$.

 \item $\mathrm{solve}(r=0)$ returns a (minimal) substitution $\gamma$ such that
    $\gamma(r)=0$, the zero polynomial. 
    We insist  that   $|\dom(\gamma)|$ be minimal, that is eliminate  as few template parameters as possible.
    In linear algebraic terms,
    let  $\ell_1,..., \ell_K\in \Lin(\aa)$ be the distinct coefficients of $r$, where  $\ell_i=\sum_{j=1}^n
    c_{ij}a_j$. Let $C$ be the $K\times n$ real coefficients matrix of the $c_{ij}$'s.
    A template  parameter valuation $\lambda\in \reals^n$ makes $r[\lambda]$ the null polynomial if and only
    if $\lambda^T$ is a solution of the linear system  in the variables $\aa$,  $C\aa^T=0$.
    We insist that $\gamma$   describe the whole space $U\subseteq \reals^n$ of solutions of this system.
As $U$  has dimension  $n-\rk(C)$, this is equivalent to saying that
$\gamma(r)=0$ and $|\dom(\gamma)|=\rk(C)$.


\item  $\mathrm{instantiate}(L)\subseteq \reals[\xx]$, for $L$ a list or set of  templates,    returns a   finite generating set  of the vector space spanned by $\cup_{\pi'\in L}\,\pi'[\reals^{n}]$ in $\reals[\xx]$.
Specifically, letting   $e_j\in \reals^n$
  denote the $j$-th canonical basis vector ($1\leq j\leq n$), seen as a parameter valuation, we
  have (below 0 denotes the zero polynomial):
\begin{eqnarray*}
 \mathrm{instantiate}(L) & \defi &
 \{ \pi'[e_j]\,:\,\pi' \in L,\, \,j=1,...,n\}\setminus\{0\}\,.
\end{eqnarray*}
As an example, if $n=4$ and $\pi=  (2a_2-a_3)xy+a_3$ then
$ \mathrm{instantiate}(\{\pi\})=\{\pi[e_1],\pi[e_2],\pi[e_3],\pi[e_4]\}\setminus \{0\}=\{2xy,-xy+1\}$.
%
%
\end{enumerate}


The exact
theoretical complexity of this algorithm is  difficult to characterize, even
assuming, as we do here,    that the basis   $G$ s.t. $\ide{G}=I_0\subseteq \Id(\psi)$
has been precomputed.
But one can at least work out some very conservative bounds, as
follows.  Let us denote by $d$  the sum of the degree of $\pi$ and
of the maximal degree of polynomials in $F$,  and by $N$ the number
of variables.  We note that:  (a) each step  potentially involves
the computation of a Gr\"{o}bner basis, for which  known algorithms
have an exponential worst case time complexity upper bounded
approximately by $O(D^{2^N})$, where $D$  is the maximum degree in
the input polynomial set  (see \cite{Cox}); (b) the maximum degree
$D$ of the derivatives $\pi^{(j)}$'s occurring in $S$, for $0\leq j\leq m+1$,  is   bounded by
$(m+1)d$. Overall, this gives a worst case time complexity  of
approximately $O(m^{2^N+1}d^{2^N})$. Finally, according to a result
in \cite{Novi}, the number of steps $m$ before stabilization of an
ascending chain of ideals generated by successive Lie derivatives is
upper bounded by $d^{N^{O(N^2)}}$. 
One should stress that these are very conservative bounds.
%
%
A SageMath/Python \cite{Sage} implementation strictly adhering to Algorithm \ref{alg:post}
 is available\footnote{\url{https://github.com/micheleatunifi/postconditions/blob/master/Post.py}} that
 works reasonably well in a number of cases of practical interest; see Section \ref{sec:experiments}.
SageMath directly
provides an efficient implementation in exact rational arithmetic of the most important
auxiliary functions, such as linear constraints   and Gr\"{o}bner bases generation.

\subsection{The choice of $I_0$ and the real radical problem}
\newcommand{\lm}{\mathrm{LM}}
A crucial aspect   in the \gdc\ algorithm  is the choice of the
ideal $I_0\subseteq \Id(\psi)$ and the computation of a Gr\"{o}bner
basis $G$ for it. In the following discussion, we fix
 the following notation  and terminology:
 \begin{itemize}
\item $\psi=\var(Q)$,    the variety generated by   a finite (user specified) set
 $Q\subseteq \reals[\xx]$;
\item $I=\ide{Q}$, the ideal generated by $Q$;
\item $\Id(\psi)=\{p: p(v)=0 \text{ for each } v\in \psi\}\supseteq I$,
the \emph{real radical} of $I$.
\end{itemize}
 In the statement of
Theorem \ref{th:corr}(a),  equality, hence completeness,  is
guaranteed if    $I_0=\Id(\psi)$ is the real radical of $I$; otherwise only   soundness holds
in general.
%
Unfortunately, at present computing a set of generators $G$ for a  real radical
 appears to be, in the  {general} case,   computationally
infeasible. Below, we shall briefly discuss the state of the
art concerning this problem, then a special case where this computation is
 feasible, and what are the alternatives in cases where it is not.

A classical algorithm for computing real radicals is due to   Neuhaus
  \cite{Neuhaus}. This is also implemented
  as part of Singular's   \verb"realrad" library \cite[Sect.D.4.16]{Singular},
    accessible via SageMath \cite{Sage}.
The worst case asymptotic complexity of this algorithm is very high:
$D^{2^{O(N^2)}}$ (exact) arithmetic operations,
where $D$ is the maximum total degree
of the polynomials in the set $Q$.
Over the years there  have been
 improvements: let us just
  mention the algorithm by Lasserre et al.,
  based on semi-definite relaxations
   but  limited to ideals with zero dimensional
 varieties \cite{Lasserre}; and the   recent
 probabilistic method by El Din et al. \cite{Eldin}, which
 lowers  the asymptotic complexity to
 $|Q|^{O(1)}(ND)^{O(Nr2^r)}$, where $r$ is the dimension
 of the variety $\psi$. Despite these advances,
  the resulting algorithms
 appear  to be still totally impractical  but for very simple instances:
  \cite{Eldin} mentions
 an example with $N=9$ variables and maximum total degree $D=3$
 which is beyond Singular's capabilities
 and   requires 800s with their implementation.


Next, we consider  a simple special case of practical interest,
where it is trivial to build the real radical.  This case is
relevant to the auxiliary variables method mentioned in Remark
\ref{rem:ghost}, 
 and
 will be put into
systematic
 use when dealing with semialgebraic systems (Section \ref{sec:semialg}).
 The general idea is that
 $\psi$, as a precondition,
  equates each system  variable to a generic constant, or polynomial expression  thereof, which are the $g_i$s
  in the statement.
  This permits a quite uniform and general treatment of initial conditions.

\begin{proposition}\label{prop:special} Let $\xx=(x_1,...,x_N)$, let $1\leq m<N$ and
assume
 $Q=\{x_i-g_i\,:\,i=1,...,m\}$,
 where   $g_i\in \reals[x_{m+1},...,x_N]$. Then, for $\psi=\var(Q)$, we have   $\Id(\psi)=\ide{Q}$.
\end{proposition}
\begin{proof}
Fix  a lexicographic monomial order such that   $x_i>x_j$ whenever
$i\leq m$ and $j\geq m+1$. W.r.t. this order, $Q$ is  a Gr\"{o}bner
basis for $I=\ide{Q}$: indeed, take any $0\neq p\in I$  and assume
by contradiction that   $\lm(p)$ (the leading monomial of $p$) is
not divisible by any leading monomial $x_i$ in $Q$; that is,
$\lm(p)$ does not contain any $x_i$ with $i\leq m$. This would
imply, by definition of lex order, that $p$ does not contain any
such $x_i$, that is $p\in \reals[x_{m+1},...,x_N]$. Then for each
$v=(\mu_{m+1},...,\mu_N)$, we can consider $w=(g_1(v),...,g_m(v),
\mu_{m+1},....,\mu_N)\in \psi=\var(Q)$, implying $p(w)=p(v)=0$. In
conclusion, as $p(v)=0$ for each $v$,  $p$ is the zero polynomial,
contradicting the assumption.

Now let us   check that $\Id(\psi)=\ide{Q}=I$. Clearly
$\Id(\psi)\supseteq \ide{Q}$. On the other hand, consider any
  $p\in \Id(\psi)$ and let $p=q+r$, where $r=p\bmod Q$ and $q\in I$.
   By the above
  assumptions on $Q$ and by definition of remainder,
  no variable $x_i$ with $i\leq m$
   can occur in $r$, that is $r\in \reals[x_{m+1},...,x_N]$.
  Now assume by contradiction $r\neq 0$, so there is
   $v=(\mu_{m+1},...,\mu_N)$ such that $r(v)=a\neq 0$.
  Then  
    $w=(g_1(v),...,g_m(v), \mu_{m+1},....,\mu_N)\in \psi$, hence  $p(w)=0$; yet $p(w)=q(w)+r(v)
    =0+a\neq 0$, which is a contradiction. 
\end{proof}

\vsp
\newcommand{\cplx}{\mathbb{C}}
When computing $\Id(\psi)$ is not feasible, there is little alternative
to replacing it with some easy to compute ideal $I_0\subseteq\Id(\psi)$:
as discussed above, this preserves soundness of the approach, although
completeness is lost in general. A practical choice
 might be considering
$\sqrt{I}$, the \emph{complex} radical ideal of  $I$
\begin{eqnarray*}
\sqrt{I} & \defi & \{p\in \cplx[\xx]\,:\,p^m\in I \text{ for some } m>0\}\,
\end{eqnarray*}
where $\cplx$ denotes the complex field. By Hilbert's strong
Nullstellensatz \cite[Ch.4,§1.2,Th.6]{Cox}, in $\cplx[\xx]$ we
have
\begin{eqnarray*}
\sqrt{I} & = &\Id(\var_\cplx(I))
\end{eqnarray*}
where $\var_\cplx(I)=\{v\in \cplx^N\,:\,p(v)=0\text{ for each }p\in
I\}$ is the complex algebraic variety induced by $I$. As
$\var_\cplx(I)\supseteq \var(I)$, one has
\begin{eqnarray}
\sqrt{I} \cap\reals[\xx]& \subseteq& \Id(\psi)\,.\label{eq:I0}
\end{eqnarray}
Therefore, we can set $I_0=\sqrt{I} \cap\reals[\xx]$ and take as $G$
any Gr\"{o}bner basis of $\sqrt I$; note that, as $I\subseteq
\reals[\xx]$,   necessarily $G\subseteq \reals[\xx]$. The inclusion
\eqref{eq:I0} is in general {strict}. As an example, consider
$Q=\{x^2+1\}$, hence $\var(Q)=\es$: then trivially
$\Id(\psi)=\reals[\xx]$. On the other hand,
 $\var_\cplx(\{x^2+1\})=\{\iota,-\iota\}$
 hence $\sqrt{I}\cap  \reals[\xx]\neq \reals[\xx]$; for example
 $x\notin \sqrt{I}$.

The problem of computing a set of generators for the complex
radical of $I$ is well understood, and there exist well-known
algorithms to this purpose: in particular, those by Krick and Logar
\cite{KL} and by Laplagne \cite{Lap}. Although the worst-case
complexity of these methods is doubly exponential in the number of
variables, they often work   reasonably well and a number of
implementations are offered in  computer algebra systems, including
those in Singular's \verb"radical" library
\cite[Sect.D.4.14.7]{Singular}. We rely on this library in our implementation.


\ifmai
\begin{theorem}[soundness]\label{th:corr2} Consider any ideal $I\subseteq  \Id(\psi)$ and let
$G$ be   a Gr\"{o}bner basis of $I$ in the definition of the    \gdc\ algorithm.
The resulting algorithm is sound,
  in the sense that
the returned sets $(V,J)$   satisfy
\begin{itemize}
\item[(a)] $\pi[V]\subseteq  \pi[\field^n]\cap I_\psi$, hence for $\phi=\var(\pi[V])$ equation
\eqref{eq:safetypla}
is true;
\item[(b)] $J$ is an invariant ideal and $\pi[V]\subseteq J\subseteq I_\psi$.
\end{itemize}
\end{theorem}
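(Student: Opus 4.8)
The plan is to reduce Theorem~\ref{th:corr2} to Theorem~\ref{th:corr} by a careful inspection of which properties of the Gr\"{o}bner basis $G$ were actually used in the proof of the latter, and to observe that passing from $\Id(\psi)$ to a smaller ideal $I\subseteq\Id(\psi)$ only shrinks the spaces $V_i$ and ideals $J_i$, so that soundness (one inclusion) survives while completeness (the reverse inclusion) need not. First I would fix the notation: let $G$ be a Gr\"{o}bner basis of the chosen ideal $I\subseteq\Id(\psi)$, and let $\tilde V_i,\tilde J_i$ (and the remainders $\tilde r_j=\pi^{(j)}\bmod G$) denote the sequences built by the modified algorithm, reserving $V_i,J_i$ for the sequences built from a Gr\"{o}bner basis $G^\star$ of $\Id(\psi)$ itself. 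The key monotonicity observation is: since $\ide G=I\subseteq\Id(\psi)=\ide{G^\star}$, for any polynomial $p$ one has $p\bmod G=0$ implies $p\in I\subseteq\Id(\psi)$ implies $p\bmod G^\star=0$; applying this to $p=\pi^{(j)}[v]$ and using Lemma~\ref{lemma:goebner} (so that $\pi^{(j)}[v]\bmod G=\tilde r_j[v]$ and $\pi^{(j)}[v]\bmod G^\star=r_j[v]$) gives, for every $j$, $\tilde r_j[v]=0\Rightarrow r_j[v]=0$, hence $\tilde V_i\subseteq V_i$ for all $i$, and consequently $\tilde J_i=\ide{\bigcup_{j\le i}\pi^{(j)}[\tilde V_i]}\subseteq \ide{\bigcup_{j\le i}\pi^{(j)}[V_i]}=J_i$.

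For part~(a): let $(\tilde V,\tilde J)$ be the output of the modified algorithm. I would first argue that $\tilde V$ is still a vector space of dimension $\le n$ and that the termination index is well defined, exactly as in the original argument, because these facts use only that the $\tilde r_j$ are linear in $\aa$ (Lemma~\ref{lemma:goebner}) and the ascending chain condition; nothing there required $I=\Id(\psi)$. Next, the crucial claim is $\pi[\tilde V]\subseteq I_\psi$. By Lemma~\ref{lemma:Ipsi} and Lemma~\ref{lemma:lie} it suffices to show that for every $v\in\tilde V$, every $j\ge0$ and every $\xx_0\in\psi$ we have $\pi^{(j)}[v](\xx_0)=0$, i.e. $\pi^{(j)}[v]\in\Id(\psi)$. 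Here I would use the stabilization of the $\tilde V_i$: once $\tilde V=\tilde V_m=\tilde V_{m+1}=\cdots$, one shows by induction that $\tilde r_j[v]=0$ for \emph{all} $j\ge0$ and $v\in\tilde V$, reproducing verbatim the corresponding step from the proof of Theorem~\ref{th:corr} (the only ideal-specific fact used is that $\tilde r_j=\pi^{(j)}\bmod G$ with $\ide G$ an ideal, so that $\tilde r_{j+1}=\lie_F(\pi^{(j)})\bmod G$ can be related to $\tilde r_j$ modulo $\ide G$; this is where the invariance of $\ide G$ under Lie derivatives is \emph{not} yet assumed, so one appeals instead to the stabilization argument rather than to invariance). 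From $\tilde r_j[v]=0$ we get $\pi^{(j)}[v]\in\ide G=I\subseteq\Id(\psi)$, which is exactly the required membership. Thus $\pi[\tilde V]\subseteq\pi[\field^n]\cap I_\psi$, and then $\psi\subseteq\var(\pi[\tilde V])$ gives \eqref{eq:safetypla} for $\phi=\var(\pi[\tilde V])$.

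For part~(b): that $\tilde J$ is an invariant ideal is proved exactly as the corresponding half of Theorem~\ref{th:corr}(b), since the stabilization conditions \eqref{eq:Vm}--\eqref{eq:Jm} at the output index guarantee $\lie_F(\tilde J_m)\subseteq\tilde J_{m+1}=\tilde J_m$; this argument is insensitive to which subideal of $\Id(\psi)$ was used to seed the remainders. The inclusion $\pi[\tilde V]\subseteq\tilde J$ is immediate from the definition \eqref{eq:Ji} (take $j=0$). Finally $\tilde J\subseteq I_\psi$: since $\tilde J$ is generated by the polynomials $\pi^{(j)}[v]$ for $v\in\tilde V$, $j\le m$, and each such generator lies in $\Id(\psi)$ by the computation in part~(a), and since $I_\psi=\Id(\phi_\psi)$ is itself an ideal, the whole generated ideal $\tilde J$ is contained in $\Id(\psi)$; but $\Id(\psi)=I_\psi$ by Lemma~\ref{lemma:Ipsi} together with the standing fact $I_\psi=\Id(\phi_\psi)$ and $\var(I_\psi)=\phi_\psi\supseteq\psi$ forcing $\Id(\psi)\supseteq\Id(\phi_\psi)=I_\psi$ while the reverse is definitional --- so $\tilde J\subseteq I_\psi$.

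\medskip
The step I expect to be the main obstacle is the induction in part~(a) showing $\tilde r_j[v]=0$ for all $j$ from stabilization of the descending chain $\tilde V_i$. In the original proof of Theorem~\ref{th:corr} this is where one leans on the interplay between the ideal chain $J_i$ and the vector-space chain $V_i$; here one must check that the same reasoning goes through when $G$ generates a proper subideal of $\Id(\psi)$, i.e. that the \emph{relative} completeness machinery still delivers the needed ``all $j$'' conclusion even though $\var(\ide G)$ may strictly contain $\psi$. The point to be careful about is that the argument must not secretly use $\var(\ide G)=\psi$ anywhere; it should only use that the termination index $m$ satisfies $\tilde V_{m+1}=\tilde V_m$ and $\tilde J_{m+1}=\tilde J_m$, from which $\pi^{(m+1)}[v]\in\tilde J_m$ for $v\in\tilde V_m$, hence (unfolding) $\tilde r_{m+1}[v]=0$, and then propagating upward. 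Once this is isolated, everything else is a routine transcription of the proof of Theorem~\ref{th:corr} with ``$=\Id(\psi)$'' weakened to ``$\subseteq\Id(\psi)$'' wherever it occurs, and the loss of completeness is exactly accounted for by the one-directional monotonicity $\tilde V_i\subseteq V_i$ noted above.
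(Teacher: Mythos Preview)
Your overall strategy is correct and is essentially the paper's own, only far more explicit: the paper's proof merely notes that by induction on $i$ one has (in your notation) $\tilde V\subseteq V_i$, hence $\tilde V\subseteq V_m$, and then defers everything else to Theorem~\ref{th:corr}. Your direct verification that the stabilization argument (Lemma~\ref{lemma:stab}) and the ``$\tilde J$ is an invariant ideal'' argument go through unchanged for any Gr\"{o}bner basis $G$ with $\ide G\subseteq\Id(\psi)$ is right; none of those steps uses $\ide G=\Id(\psi)$.

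There is, however, a genuine gap in your derivation of $\tilde J\subseteq I_\psi$ in part~(b). You assert $\Id(\psi)=I_\psi$, claiming that one inclusion follows from $\phi_\psi\supseteq\psi$ while ``the reverse is definitional''. This is false in general: one always has $I_\psi\subseteq\Id(\psi)$ (a polynomial invariant vanishes at $t=0$), but the reverse inclusion typically fails. For a minimal counterexample take $N=1$, $F=(1)$, $\psi=\{0\}$: then $\Id(\psi)=\langle x\rangle$ but $I_\psi=\{0\}$, since no nonzero polynomial vanishes along $x(t)=t$. So from $\tilde J\subseteq\Id(\psi)$ you cannot conclude $\tilde J\subseteq I_\psi$.

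The repair uses only what you have already established. You showed that for every $v\in\tilde V$ and every $j\ge 0$, $\pi^{(j)}[v]\in I\subseteq\Id(\psi)$. Hence for each generator $\pi^{(j)}[v]$ of $\tilde J$ (with $0\le j\le m$) and each $k\ge 0$ one has $(\pi^{(j)}[v])^{(k)}=\pi^{(j+k)}[v]\in\Id(\psi)$; by Lemmas~\ref{lemma:Ipsi} and~\ref{lemma:lie} this means each generator already lies in $I_\psi$, whence $\tilde J\subseteq I_\psi$. Alternatively, note that $I_\psi$ is itself an invariant ideal containing $\pi[\tilde V]$ (end of the proof of Theorem~\ref{th:corr}), and the ``smallest invariant ideal'' half of the proof of Theorem~\ref{th:corr}(b) is insensitive to the choice of $G$, so $\tilde J\subseteq I_\psi$ follows at once.
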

\fi

\section{Experiments}\label{sec:experiments}
We report below the outcomes of three experiments we have conducted,
applying the $\post$ algorithm  
to   challenging
systems  taken   from  the literature. 
The execution times reported below are for an implementation
in Python under SageMath \cite{Sage}, running on a Core i5 machine\footnote{Code and examples available
at \url{https://github.com/micheleatunifi/postconditions/blob/master/Post.py}.}. Wherever possible, we compare our results with those
obtained by other authors.

\paragraph{Collision avoidance}\label{sub:collision}
We consider the two-aircraft dynamics used to study   collision
avoidance, discussed in many papers on hybrid systems
\cite{San10,Liu,Pla14}. The model is described by the   equations
below,  where the variables have the following meaning: $(x_1,x_2)$
and $(y_1,y_2)$ represent the Cartesian coordinates of aircraft 1
and 2, respectively; $(d_1,d_2)$ and $(e_1,e_2)$ their velocities;
applying the technique discussed in Remark \ref{rem:ghost}, we   also introduce the auxiliary
variables (parameters, hence 0 derivative) $\omega_1$ and
$\omega_2$, representing the angular velocities of the two aircrafts, and
$x_{10},x_{20}, y_{10},y_{20}$, $d_{10},d_{20},
e_{10},e_{20}$, representing   generic initial values of the
corresponding variables. Overall, the system's vector field $F_1$
consists of  18 polynomials over as many variables (including the auxiliary ones).
\vsp
\begin{equation*}\label{eq:Air}
\begin{array}{rclrclrclrcl}
\dot x_1 & \!=\! & d_1 \quad& \dot y_1 & \!\!=\!\! & e_1  \quad&
\dot d_1 & \!\!=\!\! & -\omega_1 d_2 \quad & \dot e_1 &
\!\!=\!\! & -\omega_2 e_2 \\
\dot x_2 & \!=\! & d_2 \quad& \dot y_2 & \!\!=\!\! & e_2  \quad&
\dot d_2 & \!\!=\!\! & -\omega_1 d_1 \quad & \dot e_2 &
\!\!=\!\! & -\omega_2 e_1 \,.
\vsp
\end{array}
\end{equation*}
We consider the precondition $\psi$ that assigns to each non constant variable
 the parameter corresponding to its (generic) initial value:
 $\psi=\var(\{x_1-x_{10}, x_2-x_{20},...\})$. Note that $G=\{x_1-x_{10},
 x_2-x_{20},...\}$ is a set of generators for $\Id(\psi)$, and in fact a
  Gr\"{o}bner basis w.r.t. the
 lexicographic order (Proposition \ref{prop:special}).
We then  consider a   complete template  $\pi$ of degree 2  over all
the  system's variables: $\pi$  is a linear combination of $n=190$
monomials that uses as many template parameters.
  We then run $\gdc(\psi,\pi)$, which returns, after $m=3$ iterations and about 16s,
 a pair $(V,J)$. The vector space $V$ corresponds to  a result template with 10
parameters, $\pi'=\sum_{i=1}^{10} a_i\cdot p_i$. The
 instances of $\pi'$ are therefore all and only the system's polynomial
invariants of degree $\leq 2$, starting from a fully generic
precondition (Theorem \ref{th:corr}(a)). These include
 all the polynomial invariants mentioned in \cite{San10,Liu},
and several new ones, like the following
\begin{equation*}
 -x_{10} d_{10} - x_{20} d_{20} + d_{10} x_1 + d_{20} x_2 + x_{10} d_1 - x_1 d_1 + x_{20} d_2 - x_2 d_2 \,.
\end{equation*}\noindent
Let   $\phi\defi \var(\pi'[\reals^n])$ be the  variety defined by
the result template $\pi'$. The invariant ideal $J$ returned by the
algorithm represents the  {weakest algebraic precondition}
$\chi\defi \var(J)$ such that $\chi\impl [F_1] \phi$: in other
words, the largest algebraic  precondition for which all   instances
of $\pi'$ are polynomial invariants (Theorem
\ref{th:instructive}(b)). Moreover, $\chi$ is also the  {weakest
algebraic invariant} included in $\phi$ (Theorem
\ref{th:instructive}(a)). A Gr\"{o}bner basis of $J$ consists of 12
polynomials that represent as many conservation laws of the system
(see  Appendix \ref{app:exp}).

\paragraph{Airplanes vertical motion}\label{sub:airplanes}
We consider the  6-th order longitudinal equations   that capture
 the vertical motion (climbing,
descending) of an airplane \cite[Chapter 5]{Airplanes}. The system is given
by the   equations   below, where the variables have the following meaning:   $u$ = axial
velocity, $w$ = vertical velocity, $x$ = range, $z$ = altitude, $q$ = pitch rate,
$\theta$ = pitch angle. We also have two equations encoding $\cos\theta$ and $\sin\theta$: note that, in the equations, these two are just variable names,   not transcendental functions themselves. Applying the technique discussed in Remark \ref{rem:ghost}, we   also introduce the following auxiliary
variables (parameters, hence 0 derivative):
  $g$ =
gravity acceleration; $X/m$, $Z/m$ and $M/I_{yy}$,   
  where $m$ is the mass of the airplane,  $M$ the aerodynamic and thrust moment w.r.t. the $y$ axis, $(X,Z)$  are the aerodynamics and thrust forces w.r.t. axis $x$ and $z$, and $I_{yy}$ is
  the second diagonal element of its inertia matrix (see also \cite{Airplanes,Pla14,Kong});    and    $u_0,w_0,x_0,z_0,q_0$,
standing for the generic initial values of the corresponding
variables.  Overall,  the system's vector field $F_2$ consists of
   17 polynomials over as many variables.
\begin{equation*}\label{eq:Air2}
\begin{array}{rclrclrclrclrcl}
\dot u & \!=\! & \frac X m - g\sin\theta-qw \quad& \dot z &
\!\!=\!\! & -u\sin\theta  +w\cos\theta  \quad& \dot w & \!\!=\!\! &
\frac Z m + g\cos\theta + qu    & \dot q &
\!\!=\!\! & \frac M {I_{yy}} \\
\dot x & \!=\! & u\cos\theta  + w\sin\theta  \quad& \dot \theta &
\!\!=\!\! & q \quad& \dot \cos\theta & \!\!=\!\! & -q\sin\theta
& \dot \sin\theta & \!\!=\!\! & q\cos\theta \,.
\end{array}
\end{equation*}
In order to discover interesting polynomial invariants, we consider
a complete template  $\pi$ of degree 2  over all the original
system's variables   plus two auxiliary variables, the latter representing
the monomials $qu$ and $qw$\footnote{We could dispense with these
auxiliary variables by considering a complete template of degree
3.}.  $\pi$  is a linear combination of $n=207$ monomials that uses
as many template parameters.  We apply the approach underpinned by Theorem
\ref{th:instructive}(b): we first pick up a precondition that
requires $\theta=0$ and assign
 (generic) initial values to the remaining variables,
  $\psi_0\defi \var(\{\theta,\sin\theta,\cos\theta-1,u-u_0,w-w_0,x-x_0,z-z_0,q-q_0\})$.
  Note that $G=\{\theta,\sin\theta,...\}$
   is a set of generators for $\Id(\psi)$, and in fact a
  Gr\"{o}bner basis w.r.t. the
 lexicographic order (Proposition \ref{prop:special}).
  We
then run $\gdc(\psi_0,\pi)$, which returns, after $m=8$ iterations
and about 26s, a pair $(V,J)$. The vector space $V$  corresponds to
the following result template.
{\small
\begin{eqnarray*}
\pi'  \; =\; \sum_{i=1}^4 a_i\cdot p_i & = &  a_1 \cdot\left(\cos^2\theta  + \sin^2\theta - 1\right)\quad
+\quad a_2 \cdot\left(-\frac 1 2 q^2 + \theta \frac M{I_{yy}} + \frac 1 2 q_0^2\right)
+\\
&&
 a_3 \cdot\left(u q \cos\theta + w q\sin\theta  - \frac X m\sin\theta  + \frac Z m \cos\theta   - x \frac M{I_{yy}} -  - \frac M{I_{yy}} x_0 + u_0 q_0 +\frac Z m\right)
 +\\
&&
 a_4\cdot \left(w q\cos\theta - u q \sin\theta - \theta g -  \frac X m\cos\theta   - \frac Z m \sin\theta   - z \frac M{I_{yy}}    -\frac M{I_{yy}} z_0 + w_0 q_0 + \frac X m\right)\,.
\end{eqnarray*}}
Let   $\phi\defi \var(\pi'[\reals^n])$  be the variety defined by
the result template $\pi'$. The invariant ideal $J$ returned by the
algorithm represents the  {weakest algebraic precondition}
$\chi\defi \var(J)$ such that $\chi\impl [F_2] \phi$: in other
words, the largest  algebraic  precondition for which all instances
of $\pi'$ are polynomial invariants (Theorem
\ref{th:instructive}(b)). Moreover, $\chi$ is also the  {weakest
algebraic invariant} included in $\phi$ (Theorem
\ref{th:instructive}(a)). A Gr\"{o}bner basis of
$J$ consists of 15 polynomials.
These findings generalize those in
\cite{Pla14,Kong}. In particular, one obtains the polynomial
invariants of \cite{Pla14,Kong} by letting $x_0=z_0=q_0=0$. By
comparison, \cite{Pla14} reports that their method spent 1 hour to
find a subset of all instances of  $\pi'$. The method in \cite{Kong}
reportedly takes $<1$s on this system, but again only finds a
subset\footnote{For instance, one should compare the polynomial
$\psi_3=q^2-2\frac {M\theta}{I_{yy}}$, which is part of the
invariant cluster in
 \cite{Kong},  with the polynomial  $p_2=-\frac 1 2 q^2 +
 \theta \frac M{I_{yy}} + \frac 1 2 q_0^2$  in the second summand of $\pi'$ above, which explicitly depends
on the initial condition $q_0$.} of instances of $\pi'$. Moreover,
it cannot infer the largest algebraic invariant implying the
discovered laws, as we do.

\paragraph{Kepler laws}\label{sub:kepler}
We want to show how the \gdc\ algorithm automatically discovers the
three Kepler's laws of planetary motion from Newton's law of
gravitation. A nice and self-contained explanation
 of these laws can be found in \cite{NewtKepl}. Newton's laws are expressed below
 in a system of polar coordinates  $(r,\theta)$ with the Sun at the origin.
The meaning of the variables is as follows:  $r$ is the planet's
distance
 from the origin; $\theta$ the angle from the positive horizontal semiaxis
to  the radius vector, measured   counterclockwise; $v_r$ and
$\omega$ the planet's radial and
 angular velocity, respectively; $u=1/r$ the distance reciprocal; for the purpose of
expressing the invariants of interest, the system also includes
 equations for
$\cos\theta$ and $\sin\theta$; moreover, we have constants (0
derivative variables)  $GM,a,e$ representing   the product
 of the gravitational constant  $G$ and the Sun's mass $M$,   the orbit's  major semiaxis
 and its eccentricity, respectively (see below). A
 few more dummy constants are used to encode positivity conditions. Overall,
 the system's vector field $F_3$ consists of 15 polynomials over as many variables.
\begin{equation}\label{eq:Newton}
\begin{array}{rclrclrclrclrclrclrcl}
\dot r & \!=\! & v_r \quad& \dot\theta & \!\!=\!\! & \omega \quad&
\dot v_r & \!\!=\!\! & - {GM}{u^2}+r\omega^2 \quad & \dot\omega &
\!\!=\!\! & -2 v_r \omega u & \quad \dot u &\!\!=\!\!& -u^2 v_r \\
&&&&&& \dot\cos\theta & \!\!=\!\! & -\omega\sin\theta \quad&
\dot\sin\theta & \!\!=\!\! & \omega\cos\theta\,.
\end{array}
\end{equation}
%
Because Kepler's laws concern closed  orbits\footnote{Note that non
closed, hyperbolic or parabolic, trajectories are also possible.},
we first seek for a precondition $\psi$ such that the planet's
motion is an ellipse of major semiaxis $a$ and eccentricity $e$. The
equation of such an ellipse  in polar coordinates, with one of the
foci coinciding with the origin (Sun) and the horizontal axis
passing through the ellipse's center, 
is $p_{\mathrm{ell}}=0$, where
\begin{eqnarray}\label{eq:ell}
p_{\mathrm{ell}} & \defi & r (1+  e \cos \theta)-  {a (1-e^2)}{}\,.
\end{eqnarray}\noindent
We consider a suitable $\psi_0$ that implies a unitary circular
orbit, which is an instance of $p_{\mathrm{ell}} $, and apply
Theorem \ref{th:instructive}(b): running $\gdc(\psi_0,\pi_1)$ for a
$\pi_1=a_1\cdot p_{\mathrm{ell}}$, we discover, in about 43s, the
largest (physically meaningful)   precondition $\psi$ implying
$p_{\mathrm{ell}}=0$. In particular, for
 $\omega_{\mathrm{in}} \defi   r^2\omega^2-GM\cdot  u\cdot (e+1)$ , we have $\psi= \var(P)$ where
\begin{eqnarray}\label{eq:psiK}
P & = & \left\{r-a(1-e), \theta,v_r, \omega_{\mathrm{in}}, u\cdot
r-1, \cos\theta-1, \sin\theta  \right\} \cup P_+\,.
\end{eqnarray}
Here the set $P_+$ encodes positivity    conditions on constants
($GM>0, a>0, 0\leq e<1$) and is omitted   for conciseness (further
details on the computation of $\psi$ and $\psi_0$ are given  in
Remark \ref{rem:psi0} below).


\newcommand{\pig}{\mathrm{Pi}}
We next consider   the complete polynomial  template $\pi_2$ built
out of monomials of degree $\leq 4$  on the variables
$GM,a,e,r,u,dA$, where $dA\defi  \frac 1 2 r^2\omega$ is an
auxiliary variable, representing the   areal velocity -- that is,
the first derivative of the area swept by the radius vector.   We
next run $\gdc(\psi,\pi_2)$, which returns, after $m=4$ iterations
and about 58s,
 a pair $(V',J')$.  The vector space $V'$ corresponds to  a result template
$\pi'_2 = a_1\cdot (ur-1)+a_2\cdot ( dA^2- a\cdot GM(1-e^2)/4) + R$,
where $R=\sum_{\ell=2}^{29} a_\ell \alpha_\ell$.  The term $ ur-1 $,
that is $u=1/r$, obtained by setting $a_1=1$ and the remaining template
parameters to 0, is another way of expressing Kepler's second law:
indeed, it implies that $\lie(dA)=- \omega r^2 u v_r + \omega r
v_r=0$, that is, that the areal velocity is constant. From Geometry,
we know that the ellipse's area is $A=\pig\  a^2\sqrt{1-e^2}$, where
$\pig=3.1415...$ denotes the transcendental mathematical constant. Since $dA$ is a
constant, the orbital period, expressed as a multiple of    $\pig$,
is $T\defi a^2\sqrt{1-e^2}/dA$. Therefore,   the second term in
$\pi'_2$, obtained by setting $a_2=1$ and the remaining template  parameters
to 0, can be read as saying that the square of the period, $T^2=
a^4(1-e^2)/dA^2$, is proportional to   $a^3$, the cube of the
semimajor axis: this is Kepler's third law. Any other summand  of
$\pi'_2$ is either a multiple of $ur-1$ or equivalent to the second
term, hence it gives no further information.

Let $\phi'=\var(\pi_2[\parv'])$. The invariant ideal $J'$ returned
by the algorithm represents the  {weakest algebraic
precondition} $\chi'\defi \var(J')$ such that $\chi'\impl [F_3]
\phi'$: in other words, the largest  algebraic  precondition
implying both the second and the third Kepler law  (Theorem
\ref{th:instructive}(b)).
 A
Gr\"{o}bner basis of the invariant  ideal $J'$ is $\{ur-1, dA^2-
a\cdot GM(1-e^2)/4\}$, hence  giving precisely the same information
as $\pi'_2$.

Rather than ``discovering'' the laws, it  is also possible to verify
them directly using  \gdc, that is  to check $\psi \impl[
F_3]\phi_i$, with: $\phi_1= \var(\{p_{\mathrm{ell}}\})$,
$\phi_2=\var(\{\lie(dA)\})$ and $\phi_3 = \var(\{T^2GM-4a^3\})$. The
running time  for these   checks is of about 45, 0.28 and 3s,
respectively.

\begin{remark}[on the computation of $\psi_0$ and $\psi$]\label{rem:psi0}{\em
Concerning the precondition $\psi_0$, we consider  a simple unitary
circular orbit, that is $p_{\mathrm{ell}}=0$  with $GM=a=1$ and
$e=0$. More precisely, considering as  $t=0$ to be
a time when the planet is on the positive semiaxis,
we let $\psi_0=\var(P_0)$ with $P_0=\{e,
\;a-1, \;GM-1, \;r-1, \;\theta,\; \;v_r,\;\omega-1,\, \; u-1 \}$ and
use the template $\pi_1=a_1\cdot p_{\mathrm{ell}}$. We  then run
$\gdc(\psi_0,\pi_1)$, which   returns  a pair $(V,J)$,  in $m=8$
iterations and about 43s. By Theorem  \ref{th:instructive}(b),
$\chi\defi \var(J)$ is the largest algebraic precondition implying
$p_{\mathrm{ell}}=0$. A set of generators for the invariant ideal $J$
consists of 9 polynomials (the Gr\"{o}bner basis is much larger, though). However, we   want to restrict
ourselves to physically meaningful
 initial conditions at time $t=0$, and to closed orbits.
Let   $J_0$ denote the ideal generated by the polynomial  encoding
of the following conditions:
   $v_r=\theta=\sin\theta=0$, $u\cdot r=\cos\theta=1$, $r=a (1-e)$ (from $p_{\mathrm{ell}}=0$),
 $dA=-r^2\omega/2$,   $GM>0$, $a>0$ and on $0\leq e <1$ (closed orbits).
We then define $\psi\defi \var(J+J_0)=\chi\cap\var(J_0)$. A small
set of polynomials representing $\psi$ is   obtained by computing a
Gr\"{o}bner basis $G$ of $\sqrt{J+J_0}$,
 the  complex  radical of $J+J_0$. From $G$,
via some simple manipulations, we compute the equivalent  set $P$ in
\eqref{eq:psiK}; that is, we have
   $\psi=\var(P)$.
}\end{remark}

\newcommand{\semi}{\mathrm{\mathbf{S}}}
\section{Application to continuous semialgebraic systems}\label{sec:semialg}
We illustrate an application of the $\post$ algorithm to the safety verification
of a class of continuous    systems, where both the  initial  set of states and
the set of  unsafe (`bad') states   are  \emph{semialgebraic} regions of $\reals^N$.  The family
 of semialgebraic sets, formally defined below,
 is  larger than the family  of algebraic sets and quite rich:
 for instance, in $\reals^3$ a  half-space, a disk, and  a ball are semialgebraic,
 but not algebraic sets. See \cite{Parrilo} for an introduction to semialgebraic sets and related
   techniques.
For the purpose of safety verification, the basic idea is that,
once we have obtained via $\post$
 an algebraic invariant for the system at hand, we can check if
  this invariant, as a region
of $\reals^N$,  intersects the specified  {unsafe} region: if not,
  the system is safe.
In pursuing this idea, we will systematically exploit the idea of auxiliary
 variables: 
we will have the obtained invariant be explicitly dependent
  on a set of parameters $\xx_0$,
  representing a generic initial condition for the given system.   Proposition \ref{prop:special} will
  guarantee   that a real radical for the initial set will be easy to compute.

A set $S\subseteq \reals^N$ is
\emph{(closed) basic  semialgebraic} if there are polynomials $g_1,...,g_m\in \reals[\xx]$ such
that $S=\{v\in \reals^N\,:\,g_1(v)\geq 0,...,g_m(v)\geq 0\}$,
written $S=\semi(\{g_1\geq 0,...,g_m\geq 0\})$\footnote{Note that an
equality $g_i=0$ can be coded up as a pair of inequalities $g_i\geq
0$ and $-g_i\geq 0$. Similarly, a strict inequality $g_i>0$ can be
coded up using an auxiliary slack variable $z$ as $g_i\cdot
z^2-1\geq 0$.}.
A (closed)  semialgebraic set is a finite union of basic semialgebraic sets.
In what follows, for the sake of simplicity we shall focus on  {basic} semialgebraic sets. It is very simple to extend the approach to general semialgebraic sets: this is outlined at the end of the section.
A \emph{ continuous (basic) semialgebraic system} is a triple
$SA=(F,X_0,X_U)$, composed by a   a polynomial vector field $F$, an
\emph{initial region} $X_0\subseteq \reals^N$ and an \emph{unsafe
region} $X_U\subseteq \reals^N$,  both of which are {basic semialgebraic}.
The system $SA$ is \emph{safe} if for each $v_0\in X_0$ there is no
$t\in D_{v_0}$ such that $\xx(t;v_0)\in X_U$.

Let us now introduce some
additional notation concerning auxiliary variables.
Let $F=(f_1,...,f_N)$ be a polynomial vector
field, with $f_i\in \reals[\xx]$. Let $\xx_0=(x_{01},...,x_{0N})$
be a vector of $N$ distinct variables, disjoint from $\xx$: we define the \emph{extended}
vector variables and vector field as
 $\hat\xx=(x_{01},...,x_{0N},x_1,...,x_N)$ and   $\hat
F=(0,...,0,f_1,...,f_N)$, respectively. Note that $\hat F$ is a vector field
$\reals^{2N}\rightarrow \reals^{2N}$, where the variables in $\xx_0$
represent generic constants. For $v,w\in \reals^N$, we will denote
by $(v,w)$ the element of $\reals^{2N}$ obtained by concatenating
$v$ and $w$. Finally, we will denote by $g[\xx_0/\xx]$ the
polynomial obtained from $g$ by replacing each variable $x_{i}$ with
$x_{i0}$, for $i=1,...,N$.

The following result gives a sufficient
algebraic condition for  safety of a continuous basic semialgebraic system.
Its intuitive interpretation is as follows.
In $\reals^{2N}$,
let $\psi$ be a precondition encoding
 that $\xx_0$ is the initial condition
for   $\xx$, and let
  $J$ be an invariant ideal representing a corresponding
postcondition, explicitly depending on $\xx_0$.
Hence, for any concrete instance of the initial conditions $\xx_0$,
 we obtain from $J$ a corresponding concrete postcondition.
If there is no solution of the set of (in)equations representing the
intersection of
 the initial region, of the postconditions and of the unsafe region,  then the system is safe.

\begin{theorem}[safety of semialgebraic
systems]\label{th:semialg} Let $SA=(F,X_0,X_U)$  be a basic semialgebraic
system, where $X_0=\semi(\{g_1\geq 0,...,g_m \geq 0\})$ and
$X_U=\semi(\{h_1 \geq 0,...,h_n \geq 0\})$ ($g_i,h_j\in
\reals[\xx]$).
Let $\psi=\var(\{x_i-x_{i0}:\,i=1,...,N\})\subseteq
\reals^{2N}$ 
and
let  $J=\ide{\{q_1,...,q_k\}}\subseteq \reals[\hat \xx]$ be an
invariant ideal for $\hat F$ such that $\var(J)\supseteq \psi$.
Assume the following polynomial system in
the variables $\hat \xx$
\begin{eqnarray}\label{eq:semi}
g_1[\xx_0/\xx]\geq 0,..., g_m[\xx_0/\xx]\geq 0,\;
 h_1\geq 0,....,h_n\geq 0,\; q_1=0,....,q_k=0
\end{eqnarray}
has no solution in $\reals^{2N}$. Then $SA$ is safe.
\end{theorem}
\begin{proof}
By contradiction, assume there are $w_0\in X_0$ and
$w_1=\xx(t_1;w_0)\in X_U$, for some $t_1\in D_{w_0}$. We will show
that $(w_0,w_1)\in \reals^{2N}$ is a solution of \eqref{eq:semi},
thus arriving at a contradiction. Indeed, by definition
$g_i[\xx_0/\xx](w_0,w_1)=g_i(w_0)\geq 0$ and $h_j(w_0,w_1)=h_j(w_1)\geq 0$ for
each $i=1,...,m$ and $j=1,....,n$. Consider now the trajectory of
$\hat F$ originating from $(w_0,w_0)$, that is $\hat
\xx(t;(w_0,w_0))$: note that, by definition of $\hat F$,  $\hat
\xx(t;(w_0,w_0))=( w_0\, ,\, \xx(t;w_0))$ for each $t\in D_{w_0}$. Now,
since $\var(J)\supseteq \psi$, we have $(w_0,w_0)\in \var(J)$, hence, by
$\hat F$-invariance of $J$, $\hat\xx(t;(w_0,w_0))\in \var(J)$ for
each $t\in D_{(w_0,w_0)}$ (Lemma \ref{lemma:invar}). In particular,
considering $t=t_1$, we have
$\hat\xx(t_1;(w_0,w_0))=(w_0\,,\,\xx(t_1;w_0))=(w_0,w_1)  \in \var(J)$.
But this means $q_i(w_0,w_1)=0$ for $i=1,...,k$. In conclusion,
$(w_0,w_1)$ is a solution of \eqref{eq:semi}.
\end{proof}

\vsp
There are two aspects of the previous result that are worthwhile commenting on.
First, checking that an algebraic system of
(in)equalities like \eqref{eq:semi}  is solvable is
decidable, although   NP-hard. One well-known and  effective technique
to establish insolvability is to rely on Positivstellensatz \cite{Stengle} and
Sum-of-Squares programming: this also provides easy to verify \emph{certificates}
of insolvability.  For the sake of completeness, we outline this
 technique in   Appendix \ref{app:Psatz}.

Second, the procedure resulting from the theorem is of course
 incomplete, and its precision depends on how rich the ideal $J$ is.
An invariant ideal $J$ satisfying the
hypotheses of the theorem can be obtained by running   $\post_{\hat F}(\psi,\pi)$
  with $\psi$
as specified in the statement of the theorem, and any template $\pi\in \Lin(\aa)[\hat\xx]$.
Indeed,   if $(V,J)=\post_{\hat
F}(\psi,\pi)$ (for some $V$), by Theorem \ref{th:corr}(b)
 $J$ is a $\hat F$-invariant ideal such
that $\var(J)\supseteq \psi$. The last point follows because
 $J\subseteq I_\psi$    implies that $q(w,w)=0$ for each $q\in J$ and
 $(w,w)\in \psi$. 
%
 Note that this is a case where a basis for the real radical $\Id(\psi)$ is trivial
  (Proposition \ref{prop:special}), hence  relative completeness holds.
Therefore, by   tuning the template $\pi$, one can in principle
 hope to obtain  a $J$   which is as precise as possible.
 The following example\footnote{SageMath/Python scripts for the examples in this section available at
 \url{https://github.com/micheleatunifi/postconditions/blob/master/Post.py}.} illustrates this theorem.

\begin{example}[3D Lotka-Volterra]\label{ex:3LV}{\em Consider the 3D
Lotka-Volterra system
defined  by $\xx=(x,y,z)$ and the vector field
$F=(xy-xz,yz-yx,zx-zy)$; see e.g.
\cite{San10,Kong}. Consider the basic semialgebraic system $SA=(F,X_0,X_U)$, where
$X_0=\semi(\{z=3,  (x-2)^2+(y-2)^2\leq 1.15^2\})$ (a disk)
and
$X_U=
\semi(\{(x-1/2)^2+(y-5)^2\leq 1.5^2\})$ (an infinite cylinder).
We wish to prove that $SA$ is safe.

Consider the extension of $F$, $\hat F$,
over the variables $\hat\xx=(x_0,y_0,z_0,x,y,z)$;
 let $\pi\in \Lin(\aa)[\hat\xx]$ be
a complete template
of degree 3.
Running $\post_{\hat F}(\psi,\pi)$
with $\psi=\var(\{x-x_0,y-y_0,z-z_0\})$, we get as a result
(after about 40s) a pair $(V,J)$
where  $J=\ide{\{q_1,q_2\}}$ and
\begin{eqnarray*}
q_1 & = & xzy-xz_0y_0-zz_0y_0-yz_0y_0
+z_0^2y_0+z_0y_0^2\\
q_2 & = & x_0+y_0+z_0-x-y-z\,.
\end{eqnarray*}
By the above discussion, $J$ is
a $\hat F$-invariant
ideal and $\var(J)\supseteq \psi$. For any instantiation of
 $x_0,y_0,z_0$ with real values,   $J$
represents a 1-dimensional variety in $\reals^3$, that is a curve,
 obtained as the intersection of
two surfaces. See Fig. \ref{fig:LV}(a). Any trajectory starting
in such a variety will remain in it.

\begin{figure}[t]
    \centering
    \begin{minipage}{0.45\textwidth}
        \centering
        \includegraphics[width=0.8\textwidth]{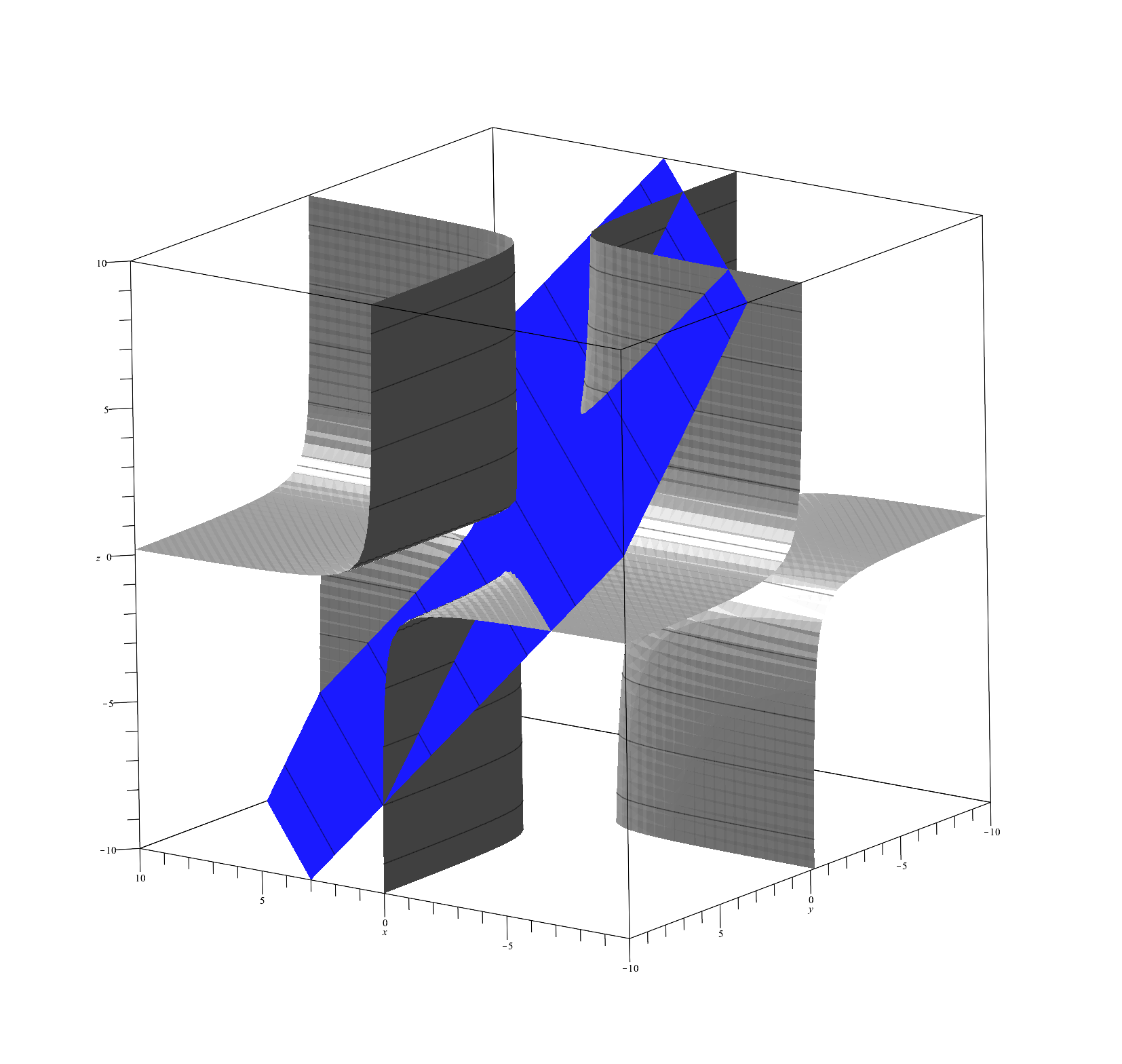} 

        \vspace*{-.3cm}
        {(a)}
    \end{minipage}\hfill
    \begin{minipage}{0.45\textwidth}
        \centering
        \includegraphics[width=0.8\textwidth]{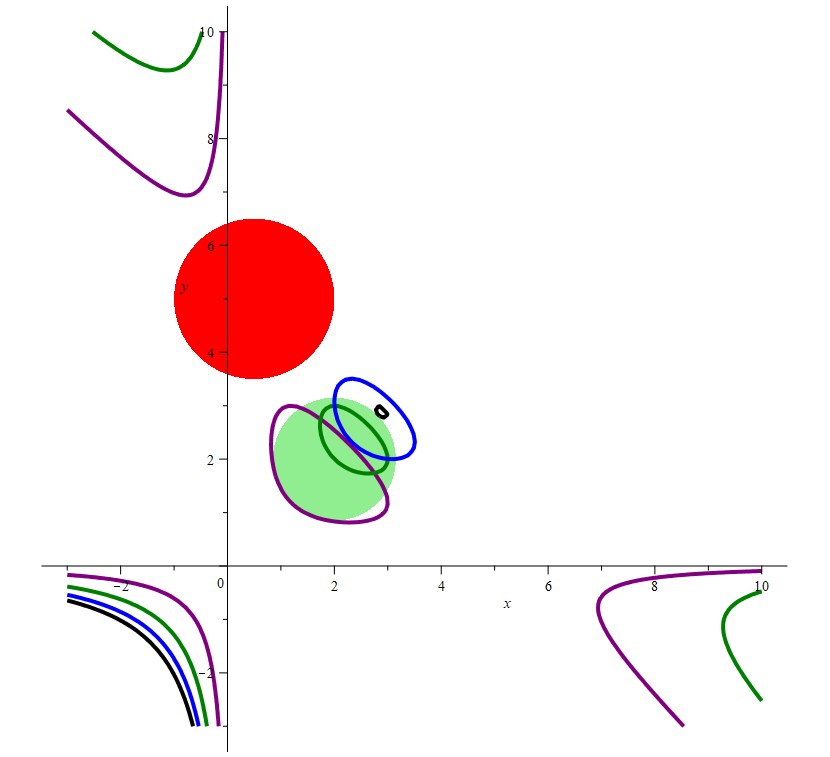} 

        {(b)
        }
    \end{minipage}
    \vspace*{-.2cm}
    \caption{With reference to the 3D Lotka-Volterra system in
         Example \ref{ex:3LV}: (a) surfaces in $\reals^3$ induced by the polynomials
         $q_1$ (grey) and $q_2$ (blue) in $J$, when instantiating
        $(x_0,y_0,z_0)$ to $(1,1,1)$; the corresponding algebraic invariant
         coincides with the intersection of the  two
        surfaces; (b) projection onto the $(x,y)$-plane of
        $X_0$ (green), of $X_U$ (red) and of
        the four algebraic invariants obtained from $J$ by instantiating $(x_0,y_0,z_0)$ to
          four points in $X_0$: $(2-c,2-c,3)$,   $(2,2,3)$, $(2,3.15,3)$ and $(2+c,2+c,3)$, where $c=1.15/\sqrt{2}$.}\label{fig:LV}
\end{figure}

\ifmai
\begin{figure}[t]
\begin{center}
  \includegraphics[width=.3\textwidth]{LV3dinvariants.png}\hspace*{2cm}
  \caption{Bla}\label{fig:LV}
\end{center}
\end{figure}
\begin{figure}[t]
\begin{center}
  \includegraphics[width=.3\textwidth]{LV3dSafety.png}
  \caption{Bla}\label{fig:LV2}
\end{center}
\end{figure}
\fi

Fig. \ref{fig:LV}(b) shows the projection onto the $(x,y)$-plane of
$X_0$, of $X_U$
 and of  four  curves induced by $J$, when instantiating $(x_0,y_0,z_0)$ with four distinct
 points
 in  $X_0$.
 None of  those curves intersects the unsafe region,
  suggesting
 that the system might be safe. This can in fact be proven algebraically relying
 on \eqref{eq:semi}, which for the present case is equivalent to the following
 (we have eliminated the variable $z_0$ exploiting the equation $z_0=3$ for $X_0$):
 \begin{eqnarray*}
-(x_0-2)^2-(y_0-2)^2+ 1.15^2& \geq &  0\\
-(x-1/2)^2-(y-5)^2+1.5^2 & \geq &  0\\
xzy-3xy_0-3zy_0-3yy_0 +9y_0+3y_0^2 & =& 0\\
x_0+y_0+3-x-y-z &=& 0\,.
\end{eqnarray*}
This algebraic system is proven to have no solutions: we give the details,
 including
a certificate of insolvability, in   Appendix \ref{app:Psatz}. Therefore, by
virtue of Theorem \ref{th:semialg},  $SA$ is safe.}
\end{example}

Theorem \ref{th:semialg} admits the following  slight
generalization, that allows for a more flexible use
of auxiliary variables. Condition (a) requires that $\psi$ captures all points $(\vv_0,\vv_0)$
 for $\vv_0$ in the initial set.  We report a proof in Appendix \ref{app:Psatz}. 

\begin{theorem}\label{th:semialgext} Let $SA=(F,X_0,X_U)$  be a basic semialgebraic
system, and $\hat \xx$ and $\hat F$ be the extended variable  vector  and vector field,
like in Theorem  \ref{th:semialg}.
For $ r_1,...,r_N \in \reals[\xx_0]$, let
  $\psi=\var(\{x_i-r_{i}:\,i=1,...,N\})$.  Finally let
$J=\ide{\{q_1,...,q_k\}}\subseteq \reals[\hat \xx]$ be an
invariant ideal for $\hat F$ such that $\var(J)\supseteq \psi$.
Assume the following conditions hold true:
\begin{itemize}
\item[(a)] $(X_0\times X_0)\cap Id \subseteq \psi$
($Id$ = identity relation over $\reals^{2N}$);
\item[(b)] The   polynomial system in
the variables $\hat \xx$ in \eqref{eq:semi} has no solution in $\reals^{2N}$.
%
\end{itemize}
Then $SA$ is safe.
\end{theorem}

The slightly enhanced flexibility of Theorem \ref{th:semialgext} consists essentially in the fact that we can have a subset of the initial values
 fixed to constants.  This is illustrated in the following example.

\begin{figure}[b]
\begin{center}
\usetikzlibrary{calc,patterns,
                 decorations.pathmorphing,
                 decorations.markings}

\begin{tikzpicture}[scale=.9, every node/.style={scale=.9}]
\tikzstyle{spring}=[thick,decorate,decoration={zigzag,pre
length=0.3cm,post length=0.3cm,segment length=6}]
\tikzstyle{damper}=[thick,decoration={markings,
  mark connection node=dmp,
  mark=at position 0.5 with
  {
    \node (dmp) [thick,inner sep=0pt,transform shape,rotate=-90,minimum width=15pt,minimum height=3pt,draw=none] {};
    \draw [thick] ($(dmp.north east)+(2pt,0)$) -- (dmp.south east) -- (dmp.south west) -- ($(dmp.north west)+(2pt,0)$);
    \draw [thick] ($(dmp.north)+(0,-5pt)$) -- ($(dmp.north)+(0,5pt)$);
  }
}, decorate] \tikzstyle{ground}=[fill,pattern=north east
lines,draw=none,minimum width=0.75cm,minimum height=0.3cm,inner
sep=0pt,outer sep=0pt]

\node [draw, outer sep=0pt, thick] (M) [minimum width=1cm, minimum
height=1cm] {$m$}; \node [draw, outer sep=0pt, thick] (M2)
[minimum width=1cm, minimum height=1cm, xshift = 3cm] {$m$}; \draw
[thick,fill=white] (M2.south west) ++ (0.2cm,-0.125cm) circle
(0.125cm) (M2.south east) ++ (-0.2cm,-0.125cm) circle (0.125cm);

\node (ground) [ground,anchor=north,yshift=-0.28cm,minimum
width=10cm,xshift=2.53cm] at (M.south) {}; \draw (ground.north east)
-- (ground.north west); \draw (ground.south east) -- (ground.south
west); \draw (ground.north east) -- (ground.south east);

\node (fill) [ground,xshift=-0.15cm,minimum height = 0.3cm, minimum
width = 0.3cm] at (ground.west) {}; \draw (fill.north west) --
(fill.south west); \draw (fill.south west) -- (fill.south east);

\draw [spring] (M.east) -- (M2.west); \draw [thick, fill=white]
(M.south west) ++ (0.2cm,-0.125cm) circle (0.125cm)  (M.south east)
++ (-0.2cm,-0.125cm) circle (0.125cm);

\node (wall) [ground, rotate=-90, minimum width=3cm,anchor=south
east] at (fill.north west) {}; \draw (wall.north east) --
(wall.north west); \draw (wall.north west) -- (wall.south west);
\draw (wall.south west) -- (wall.south east);



\draw [spring] (wall.12) -- ($(M.north west)!(wall.12)!(M.south
west)$);

\node (k) at (wall.15) [xshift = 3.95cm,yshift=0.38cm] {$k$}; \node
(k2) at (wall.15) [xshift = 1.05cm,yshift=0.38cm] {$k$};



\end{tikzpicture}
\end{center}
\caption{A spring-mass system.}
 \label{fig:springmass}
\end{figure}

\begin{example}[coupled spring-mass system]\label{ex:springmass}{\em
A system consists of two identical springs of elastic constant $k$ and length $L$
 and two identical bodies of mass $m$,  connected in cascade: wall,
spring 1, mass 1, spring 2, mass 2. See Figure \ref{fig:springmass}.
This system is governed by the following equations, where
$x_1,x_2$ and $v_1,v_2$ denote, respectively,
 the bodies' positions and velocities
  on the horizontal axis with the origin fixed at the wall:
 \begin{eqnarray*}
 \dot{x}_1 & = & v_1\\
 \dot{v}_1 & = &  (k/m) (x_2-2x_1)\\
 \dot{x}_2 &=& v_2\\
 \dot{v}_2 & = &-(k/m)(x_2-x_1-L)\,.
 \end{eqnarray*}
Considering $k/m$ and $L$ as  0-derivative variables, that is constants,
we    let $\xx=(k/m,L,x_1,v_2,x_2,
v_2)$ and
$F=(0,0,v_1,k/m (x_2-2x_1),v_2,-k/m(x_2-x_1-L))$.
Consider the system $SA=(F,X_0,X_U)$ with initial set
 $X_0=\semi(\{k/m=1,L=1,(x_1-1/2)^2+(x_2-3/2)^2\leq 1/4, v_1=0,v_2=0\})$ and unsafe set
 $X_U=\semi(\{x_2-x_1\geq 2.17\})$.
 That is, we fix the value of both constants to 1
  and the initial velocities to 0, and  let the initial positions
    $(x_1,x_2)$ of the two masses
   vary in a disk of radius
  1/2 centered at   $(1/2,3/2)$.
  We then ask if the distance of the first mass from the second
  ever reaches or exceeds the value 2.17. Fig. \ref{fig:spring}(a), displaying
   the plots of $x_2(t;\vv_0)-x_1(t;\vv_0)$ for
  100 random initial conditions  $\vv_0\in X_0$, suggests that the system
  might indeed be safe. We now prove this fact. Note that, despite the linearity
  of the system,
  nonlinear invariants will be essential to prove safety.

\begin{figure}[t]
\begin{center}
  \includegraphics[width=.6\textwidth]{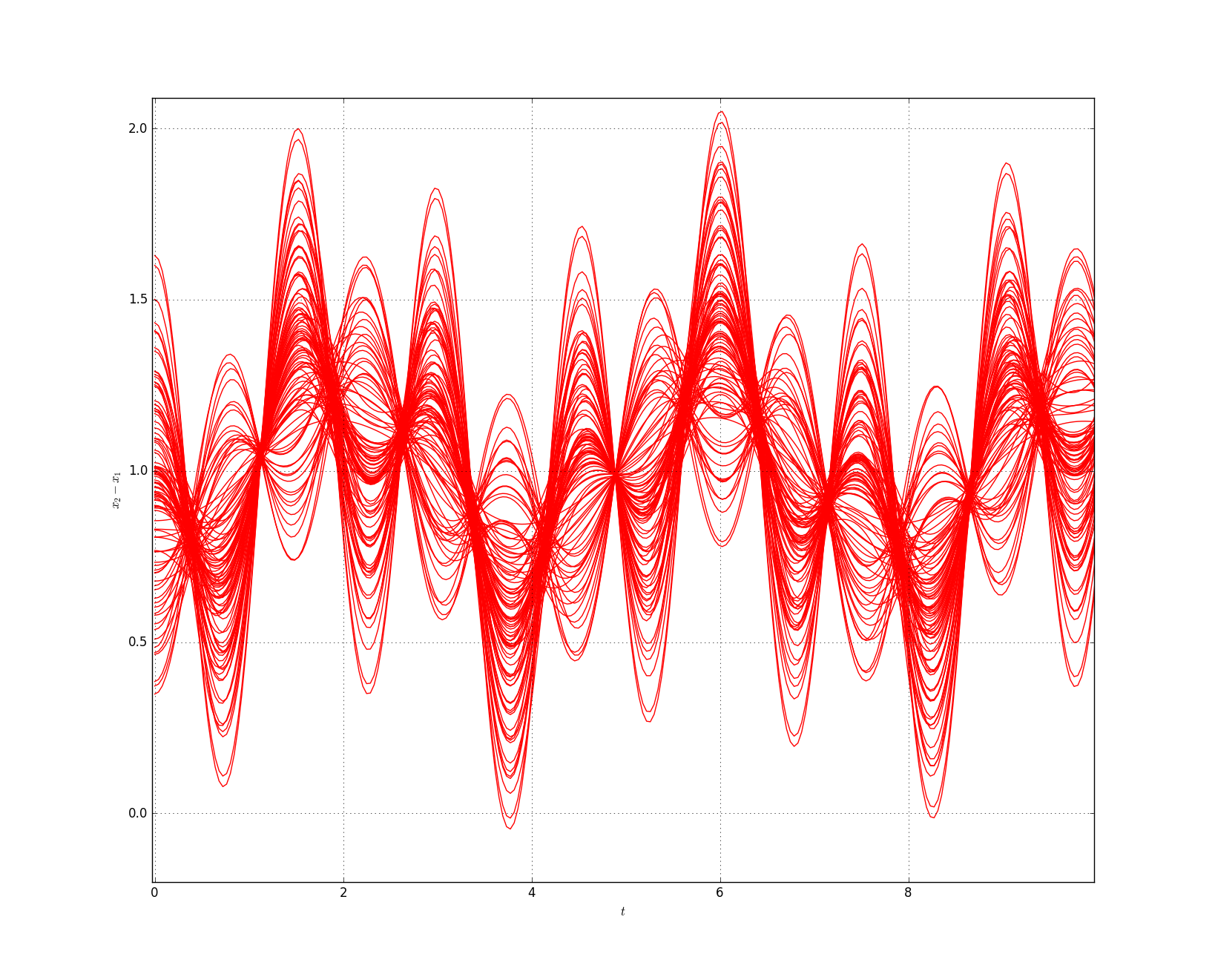}
  \caption{With reference to the coupled spring-mass system
  in Example \ref{ex:springmass}, plots of 100 trajectories
  $x_2(t)-x_1(t)$ starting from random points in $X_0$.}\label{fig:spring}
\end{center}
\end{figure}

As dictated by Theorem \ref{th:semialgext}, we
 consider an extended vector field $\hat F$ over the variables\footnote{In practice,
  it is superfluous   to introduce
 auxiliary copies of the constants $k/m, L$. The same is true for $v_1,v_2$:
 as the their initial values
  is fixed by $\psi$,   their auxiliary copies
   would    be anyway eliminated  from the final system \eqref{eq:semi}.} $\hat \xx$, then take a complete template
 $\pi$  of total degree
 3 over $\hat\xx$ and $\psi=\var(\{x_1-x_{10},x_2-x_{20},v_1,v_2\})$.  We obtain $(V,J)=\post_{\hat F}(\psi,\pi)$, which takes about
 60s, for some $V$ and   $J=\ide{\{q_1,q_2\}}$, where
{\small
  \begin{eqnarray*}
q_1 & = &  2/3 (k/m) L x_{10} + (k/m) x_{10}^2 - 4/3 (k/m) x_{10} x_{20} + 1/3 (k/m) x_{20}^2 - 2/3 (k/m) L x_{1} - (k/m) x_{1}^2 + 4/3 (k/m) x_{1} x_{2}-\\
 &&  1/3 (k/m) x_{2}^2 - 1/3 v_1^2 + 2/3 v_1 v_2\\
q_2 & = & 2/3 (k/m) L x_{10} + (k/m) L x_{20} - 1/3 (k/m) x_{10} x_{20} - 1/6 (k/m) x_{20}^2 - 2/3 (k/m) L x_{1} - (k/m) L x_{2} + 1/3 (k/m) x_{1} x_{2}\\
&&  +1/6 (k/m) x_{2}^2 + 1/6v_1^2 + 2/3v_1 v_2 + 1/2v_2^2\,.
 \end{eqnarray*}}%
\noindent
(A Gr\"{o}bner basis of $J$ consists of 4 polynomials whose lengthy description is omitted here). One can check that, once fixed $k/m=L=1$, for any instantiation
of the variables $x_{10}$ and $x_{20}$ in $J$, the dimension of the resulting variety is $\leq 2$,
implying it represents a good over-approximation the resulting system trajectory.
 Once $k/m$ and $L$ have been eliminated, the system \eqref{eq:semi} becomes
{\small
  \begin{eqnarray*}
  -(x_{10}-1/2)^2-(x_{20}-3/2)^2 +1/4& \geq & 0 \\
  x_2-x_1 -2.17& \geq & 0\\
  x_{10}^2 - 4/3 x_{10} x_{20} + 1/3 x_{20}^2 - x_{1}^2 + 4/3 x_{1} x_{2} - 1/3 x_{2}^2 - 1/3 v_1^2 + 2/3 v_1 v_2 + 2/3 x_{10} - 2/3 x_{1} & = & 0\\
  -1/3 x_{10} x_{20} - 1/6 x_{20}^2 + 1/3 x_{1} x_{2} + 1/6 x_{2}^2 + 1/6 v_1^2 + 2/3 v_1 v_2 + 1/2 v_2^2 + 2/3 x_{10} + x_{20} - 2/3 x_{1} - x_{2} & = & 0\,.
 \end{eqnarray*}}%
\noindent
This algebraic system is proven to have no solutions:   we give some details
in   Appendix \ref{app:Psatz}. Therefore, by
virtue of Theorem \ref{th:semialgext},  $SA$ is safe.
}
\end{example}

We end the section by remarking that   the verification method discussed so far easily extends to
 general semialgebraic sets. Indeed, let a continuous system $SA=(F,X_0,X_U)$
have $X_0=\bigcup_i A_{i}$ and  $X_U=\bigcup_j B_{j}$ as initial and unsafe set, respectively,
 where $A_i,B_j$ are basic semialgebraic sets. Then safety of $SA$ is equivalent to safety of each one of the basic
 systems $SA_{ij}=(F,A_i,B_j)$.


\section{Discussion}\label{sec:discussion}
We round off the technical development so far with a discussion on
what   novelties and benefits our approach actually delivers,
also in relation to existing work. We shall focus on three
aspects:  extended systems, preconditions and polynomial
invariants.

\paragraph{Extended systems and relational abstractions}
Our use of auxiliary variables and of extended systems $\hat F$ to
handle semialgebraic safety problems in Section \ref{sec:semialg}
appears to be strongly connected to \emph{relational abstractions}
of \ode s, introduced by Sankaranarayanan and Tiwari,  see
\cite[Def.5]{ST11}. Expressed in our notation, a (timeless)
relational abstraction for a system of \ode s $\dot{\xx}=F$ (with
$F\in \reals[\xx]^N$) is a relation $R\subseteq \reals^{N}\times
\reals^{N}$ such that whenever $(v_0,v_1)\in R$ then, for some time
$t_1\in \reals$, we have $v_1=\xx(t_1;v_0)$;
  recall that $\xx(t;v_0)$ denotes the unique solution of the
system from $v_0$. In other words, a relational abstraction
over-approximates the set of    possible transitions between states,
abstracting away from time. Seeing that relational abstractions are
infinite-state transition systems, one can apply to them known
verification techniques for   such systems, such as k-induction:
this is elaborated in \cite{ST11}. Now, consider   the statement of
our Theorem \ref{th:semialg}, and specifically the invariant ideal
$J$ mentioned there:  the condition $\var(J)\supseteq \psi$
precisely says that $\var(J)$ is a relational abstraction --- the
argument showing this fact is in the proof of the theorem itself. As
discussed in Section \ref{sec:semialg}, $J$ can be obtained by
running the \post\ algorithm. Therefore
\post\ can also be seen as a
relatively complete method for computing algebraic relational
abstractions.

\paragraph{Direct and generic approaches to preconditions}
Throughout the preceding sections, we have  been emphasizing the
important role played in our approach by preconditions $\psi$. Summing up,
there are
  two distinct sensible  ways one can make use of this information in conjunction
  with the  \post\ algorithm.
  \begin{enumerate}
  \item
\emph{Direct approach}: use $\psi$  {directly} as an argument of $\post$,  and obtain an
algebraic invariant $\var(J)$ containing the precondition from
$(V,J)=\post_F(\psi,\pi)$.
 \item 
 \emph{Generic approach}: introduce new variables $x_0,y_0,...$
 and $\psi_0=\var(\{x-x_0,y-y_0,...\})$ to represent arbitrary initial conditions
and compute a set of  parametrized invariant ideals $J$ by running $\post_F(\psi_0,\pi)$.
Then syntactically replace
  $x,y,..$ by $x_0, y_0...$   in the description of the precondition $\psi$
   and add the resulting system of equations,
 representing the constraints on the initial conditions, to $J$.
\end{enumerate}
The generic approach is basically
how we have handled semialgebraic problems  in Section \ref{sec:semialg}.
 It is instructive to compare the two approaches on one and the same example.
  Let us reconsider the 3D Lotka-Volterra system
 of Example \ref{ex:3LV}, where we have already applied the generic approach.
 Let us now apply the direct approach to this system.
 Introducing an extra slack variable $e$, and extending $F$
 with $\dot e =0$,
 we can define the set of initial conditions as the algebraic
 variety $\psi\defi \var(\{z-3,  (x-2)^2+(y-2)^2+e^2- 1.15^2\})$.
Running $\post$ with $\psi$ and
 a  complete template $\pi$ of degree 3,  after $m=3$ iterations and $<1$s
 we obtain  $(V,J)=\post_{F}(\psi,\pi)$. The invariant ideal $J$
 is generated by the   polynomial
 \begin{eqnarray*}
q&=&xyz - (3/2)(e^2 + x^2 + y^2+ z^2)- 3 (x y  + x z + y z)  + 15 (x +   y +   z) - 33213/800
\end{eqnarray*}
As $J\subseteq I_\psi$ (Theorem \ref{th:corr}(b)), we have $\psi   \impl    [F]\;\var(J)$,
that is, the the system's
trajectories starting from $\psi$ never leave $\var(J)$.
To prove safety, it is then enough to check that
  $\var(J)\cap X_U=\es$. Ultimately, this reduces to proving that the algebraic
system $S=\{q=0,(x-1/2)^2+(y-5)^2\leq 1.5^2\}$
has no real solutions, which can be  readily checked  to be the case via the same SOS programming
 technique\footnote{Or, in this simple case, by just feeding $S$
  to a computer algebra system.}
  illustrated
in Section \ref{sec:semialg}. In this case, the direct approach
 leads to  a gain
in terms of execution time. This gain is mainly
 imputable to the   smaller number of variables. Moreover, in this and other examples,
 starting from a   specific   precondition
 apparently leads  to  a faster convergence of $\post$.
However,     the direct approach often fails, as the smallest
algebraic invariant including $\psi$  is simply not precise enough (too large) to establish safety.
This is the case with the spring-mass system of
Example \ref{ex:springmass}, as there is  no
 nontrivial algebraic invariant   including   the
given initial set.

The generic approach appears to be more effective at chasing invariants: informally,
   not just one, but  a whole family of polynomial invariants can be captured at once, provided    the family
can be described parametrically
with respect to a generic point  $\xx_0$.
Note however that when the
   dependence from $\xx_0$ is not parametric,
 and invariance only holds for certain values   $\xx_0$ may take on,
 the generic approach  fails. For  instance, the system $\{\dot x = 2x^2y-x,\,\dot y = 2xy^2+y\}$ has both $x$ and $y$ as   polynomial invariants for $(0,0)$: indeed e.g. $\dot x=(2xy-1)x$. However, neither
  is found  using the generic approach.  Vice-versa, both
   invariants  are found using the direct approach,  with the precondition $\psi=\{(0,0)\}$, the single point where the lines
   $x=0$ and $y=0$ intersect.


\paragraph{First-, second- and   higher order integrals}
Consider an invariant ideal $J$, possibly found using   \post,
and any $q\in J$. If $\lie(q)\in \ide{\{q\}}$,
 $q$ is called a \emph{Darboux} polynomial.
An important   special case of Darboux polynomials is when $\lie(q)=0$, which, in the language of
mathematical
physics, makes
 $q$     a so-called
 \emph{first integral} of the system. In our terminology,
 up the to an additive constant, first integrals coincide
 with      polynomials that are invariants  \emph{for each} $\vv_0\in \reals^N$ in the sense
 of Definition \ref{def:polyinv}.
 A Darboux polynomial which is not a first integral
 is also known as a    \emph{second integral} in mathematical physics.
  Polynomials of an invariant ideal
  other than Darboux
  are collectively designated as higher order integrals; see \cite[Ch.2]{Goriel}.
%
Integrals of any order are  important in applications, as
  the knowledge of any one of them decreases
    the number of \emph{degrees of freedom} of the system --- in our terminology, the dimension
    of the smallest algebraic variety including the system's trajectories.
Finding
all polynomial \emph{first} integrals, up to a given degree,  can be done quite
easily using just templates and linear algebra: indeed,
 the constraints for the derivative
of the parametric template to be the   zero polynomial are always linear.
Things get harder when one needs to compute   second integrals
and beyond.  $\post$ seamlessly computes
 polynomial integrals of any order up to a given degree.  It is   interesting to inspect the
invariant ideals of
  the
examples seen so far, and
check how many polynomials, in the corresponding sets of generators,
 are   computationally
interesting, i.e. \emph{not} first integrals. Table \ref{tab:order}
displays how many integrals of each type have been found for each
example by $\post$. In all considered cases, but   Kepler's First
Law, the set of generators is also a Gr\"{o}bner basis. Appendix
\ref{app:exp} presents the details of a higher order integral  from
the Collision Avoidance example.

\begin{table}[t]
\begin{tabular}{|l|c|c|c|c|c|}
\hline
\multicolumn{1}{|c|}{\textbf{Example}}& {\bf \# 1st integrals} & {\bf \# 2nd integrals} & {\bf \#  higher order integrals } \\
\hline
Collision Avoidance & 6 & 0 & 6\\
\hline
Airplane Vertical Motion & 3 & 0 & 12\\
\hline
Kepler 1st Law & 0 & 0 & 9\\
\hline
Kepler 2nd \& 3rd Laws & 0 & 1 & 1\\
\hline
3D Lotka-Volterra & 2 & 0 & 0\\
\hline
Spring-mass System & 2 & 0 & 2\\
\hline
\end{tabular}
\centering
  \caption{Classification of   polynomials in  invariant ideals.}
\label{tab:order}
\end{table}

\section{Conclusion, further and related work}\label{sec:concl}
We have provided complete algorithms to compute 
relativized   strongest postconditions for systems of polynomial \ode s.
These algorithms can be used to
check safety assertions, to discover complete sets of polynomial invariants that fit a given template,
and to compute
largest algebraic varieties of initial conditions making
 given polynomial invariants true (weakest preconditions).
Effectiveness of the algorithms has been demonstrated
on  nontrivial systems, including semialgebraic ones.


Our previous work \cite{Fossacs17} deals with simple initial values
problems, where the precondition $\psi$ always consists of a
singleton. This restriction prevents one from dealing with the most
interesting continuous systems, such as semialgebraic systems. In particular,
 the concepts of weakest precondition is absent in \cite{Fossacs17}. The present paper lifts
 the algorithm of \cite{Fossacs17} to general (semi)algebraic systems. This requires
 considering a general algebraic precondition, described by a Gr\"{o}bner basis $G$,
  rather than a singleton. Among the new technical ingredients
necessary to   make this extension work the following two are crucial: (a) the property that, under
 suitable conditions,  reduction modulo $G$ and
  substitution commute with
 each other (Lemma \ref{lemma:goebner}); (b) the use of auxiliary variables and generic initial
 conditions to   circumvent the real radical problem (see e.g. Proposition \ref{prop:special} and Theorem
 \ref{th:semialg}).
The method introduced in \cite{Fossacs17} has its roots in a line of
research concerning weighted automata, bisimulation  and Markov
chains \cite{Bor09,IC12,Bor15}.  Also   related to the
present paper is \cite{HSCC18}, where we apply the notion of
invariant ideal to the construction of linear abstractions of
continuous systems.

The study
of the safety of hybrid systems can be shown to
reduce constructively to the problem of generating invariants for
their differential equations \cite{Pla12}. Many authors have therefore focused
on  the  effective generation of invariants of  a special type. For example,
 Tiwari  and Khanna consider invariant
generation based on  \emph{syzygies} and Gr\"{o}bner basis
\cite{Tiwa04}. Sankaranarayanan \cite{San10} characterizes greatest
invariants in terms of a descending chains of ideals. This iteration
does not always converge, thus a relaxation in terms of
bounded-degree \emph{pseudoideals} is considered: the resulting
algorithm always converges, because pseudoideals form basically a
descending chain of finite-dimensional vector spaces, and returns an
invariant ideal, although with no guarantee of maximality
\cite[Th.4.1]{San10}. By contrast, the convergence of our algorithm
{\sc Post} is essentially based  on the stabilization of ascending
chains of ideals, with completeness guarantees. Matringe et al.
encode invariants
 constraints using symbolic matrices \cite{Matr}.

Our work is closely related to Ghorbal and Platzer's \cite{Pla14},
which gives a  complete characterization  of what it means for an
algebraic set to be invariant for a polynomial \ode. It is
interesting to contrast the completeness statements in \cite{Pla14}
and of that in the present paper with one
another. \cite{Pla14} presents a method that, given a polynomial
template and an integer $M\geq 1$,
   determines the largest subspace of   template
instantiations  under which a length $M$ chain of Lie derivatives
forms an invariant. Any invariant can be reduced to this form, for suitably large $M$.
In contrast, our Theorem \ref{th:corr}, given a
template \emph{and} an algebraic variety   $\psi$,
determines the largest subspace of the template instantiations that
are polynomial invariants for trajectories
 starting from $\psi$; moreover, it determines, via $J$, the largest
invariant variety that  includes $\psi$. Neither of these two
statements is stronger or more general than the other. From our point
of view, taking the initial set explicitly into account, as we do,
has some advantages. First,   invariants $J$ returned by our method
can be  made explicitly dependent  on initial conditions, via auxiliary variables, such
as $x_0,z_0,w_0,u_0,q_0$ in the longitudinal airplane motion. As
such, these invariants can  be used \emph{directly}  within semialgebraic
verification methods based on Positivstellensatz:  as seen in Section \ref{sec:semialg}, this amounts to
 proving
the unsatisfiability of a set of polynomial (in)equations, also involving
the auxiliary variables, corresponding to the initial set, to the polynomial invariants, and to the
unsafe set. 
Second, knowing the precondition
$\psi$, we are in effect confining  ourselves to a subset of the algebraic invariants, those
 that involve points in $\psi$: this might explain
the observed  gain in efficiency
 --- practically speaking, as the worst-case complexity is   left unchanged. This gain
is reflected in the execution times of \cite{Pla14} and of our
algorithm, for the examples   reported in Section
\ref{sec:experiments}. As a more general remark, we  note that the computational
ingredients of \cite{Pla14}, such as minimization of the rank of a
symbolic matrix (also employed in  \cite{Matr}),  are quite
different from ours.  In the future, we would like to experimentally
compare these two approaches  on a more systematic basis than what
we have done in the present paper.

The recent work of Kong et al. \cite{Kong} considers  generation of
invariant  clusters, again based on templates. Nonlinear constraints
on    template parameters are resolved via symbolic computation;
safety for semialgebraic systems is reduced, via Positivstellensatz, to Sum-of-Squares (SOS)
programming.
In terms of   effectiveness, this method appears to
considerably improve      previous techniques. Kong et al.'s approach has strong similarities
with our method
for semialgebraic systems. Rather than relying on clusters, we generate families of   invariants via the introduction of
   auxiliary variables
$\xx_0$, denoting arbitrary points in the corresponding varieties. Differently from our approach, though, \cite{Kong} does not offer completeness
guarantees
 in our sense.
  In particular, the method of \cite{Kong} only sometimes works
  with chains of ideals that stabilize after one step, that is
  Darboux polynomials.
  On the other hand,
compared to theirs, our approach appears to be slower. It would be
interesting to investigate if the more general invariants $J$
returned by our algorithm could be fruitfully employed in the
approach of \cite{Kong}.

Ideas from Algebraic Geometry have been fruitfully applied also in
Program Analysis. Relevant to our work is M\"{u}ller-Olm and Seidl's
\cite{Muller}, where  an algorithm to compute all    polynomial
invariants up to a given degree of an imperative program  is
provided. Similarly to what we do, they  reduce the core   problem
to a linear algebraic one. However, since the setting in
\cite{Muller} is discrete rather than continuous, the  techniques
employed there are otherwise quite different, mainly because: (a)
the construction of the ideal chain is driven by the program's
operational semantics, rather than by Lie derivatives; (b) only the
polynomial invariants satisfied by \emph{all} initial program states
are considered, which in a continuous setting would mostly lead to
the trivial strongest postcondition. A perhaps more crucial
difference is that, when computing  invariants,  \cite{Muller}
regards templates essentially as polynomials  in
$\reals[\mathbf{x},\mathbf{a}]$, rather than explicitly factoring
out the template parameters $\mathbf{a}$. This can make the involved
computations less efficient, as known algorithms for computing
Gr\"{o}bner bases have a complexity which is exponential in the
number of variables.


Most of the material in this paper has been  extended and revised from the conference paper
\cite{Sofsem18}. With respect to \cite{Sofsem18}, here we
  include the following additional material:
proofs, the discussion on the expressive power of auxiliary
variables in Section \ref{sec:problem2}, the algorithmic presentation
and the discussion
on radical ideals in Section \ref{sec:compasp},   the
examples about   collision avoidance and Kepler laws    in Section
\ref{sec:experiments}, the extension to semialgebraic continuous systems in Section \ref{sec:semialg},
and an extended and revised discussion of related works in Section \ref{sec:discussion} and in the present section.

\noindent
\paragraph{Acknowledgment} The author has benefited from stimulating
discussions with Khalil Ghorbal.

\iffull
\newpage
\appendix
\section{A simple algorithm for weakest preconditions}\label{app:pre}
Fix a vector field $F$. Let $\phi=\var(P)$ be a user specified
postcondition, with $P\subseteq \reals[\xx]$ a finite set of
polynomials. We define inductively the sets $P_j$, $j\geq 0$, as
follows: $P_0\defi P$ and $P_{j+1}= \lie(P_j)$. For $j\geq 0$, we
let
\begin{eqnarray}\label{alg:propost}
I_j\defi \ide{\cup_{i=0}^j P_i}\,.
\end{eqnarray}
Let $m$ be the least integer such that $I_m=I_{m+1}$, which must exist
as $I_0\subseteq I_1\subseteq\cdots$ forms an infinite ascending
chains of ideals that must eventually stabilize. We let
$\pre(\phi)\defi I_m$. Note that the termination condition reduces
to checking   equality between two ideals, which    can be
effectively done (Section \ref{sec:prel}).

\begin{theorem}[correctness and completeness of \pre]\label{th:userpost}
Let $\phi$ be an algebraic variety and $I=\pre(\phi)$. Then
$\var(I)=\psi_\phi$.
\end{theorem}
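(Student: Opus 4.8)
The plan is to show the two inclusions $\var(I)\subseteq\psi_\phi$ and $\var(I)\supseteq\psi_\phi$ separately, after first giving a clean characterization of the set $\var(I_j)$ for each $j$. The key observation is that, by definition of $I_j$ and the linearity of $\Id$--$\var$, we have $\var(I_j)=\{\xx_0:\,p^{(i)}(\xx_0)=0\text{ for all }p\in P,\ 0\le i\le j\}$; this follows because $P_i$ consists exactly of the $i$-th Lie derivatives of the generators $p\in P$, and a point lies in $\var(\ide{S})$ iff it annihilates every element of $S$. In particular $\var(I)=\var(I_m)=\bigcap_{j\ge 0}\var(I_j)=\{\xx_0:\,p^{(j)}(\xx_0)=0\text{ for all }p\in P,\ j\ge 0\}$, where the middle equality uses the stabilization $I_m=I_{m+1}=\cdots$. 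I would first argue that the chain really does stabilize at $m$ in the sense that $I_m=I_k$ for all $k\ge m$: from $I_m=I_{m+1}$ we get $\lie(I_m)\subseteq\lie(I_{m+1})$ and $P_{m+2}=\lie(P_{m+1})\subseteq\lie(I_{m+1})=\lie(I_m)\subseteq I_{m+1}=I_m$ (using that $I_m$ already contains $P_0,\dots,P_{m+1}$ and is closed under the generators' Lie derivatives once $I_m=I_{m+1}$); an easy induction then gives $I_k=I_m$ for all $k\ge m$, so $I_m$ is in fact an invariant ideal.

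For the inclusion $\var(I)\subseteq\psi_\phi$: take $\xx_0\in\var(I)$. Then for every generator $p\in P$ and every $j\ge 0$ we have $p^{(j)}(\xx_0)=0$, i.e. $p$ is a polynomial invariant for the initial value $\xx_0$ by Lemma~\ref{lemma:lie}. Hence $p(\xx(t;\xx_0))\equiv 0$ on $D_{\xx_0}$ for every $p\in P$, which says exactly that $\xx(t;\xx_0)\in\var(P)=\phi$ for all $t\in D_{\xx_0}$. Since this holds for every $\xx_0\in\var(I)$, the safety assertion $\var(I)\impl[F]\,\phi$ is valid, and therefore $\var(I)\subseteq\psi_\phi$ because $\psi_\phi$ is the largest variety with this property.

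For the reverse inclusion $\psi_\phi\subseteq\var(I)$: take $\xx_0\in\psi_\phi$. By definition of $\psi_\phi$, $\xx(t;\xx_0)\in\phi=\var(P)$ for all $t\in D_{\xx_0}$, so for every $p\in P$ the analytic function $p(t;\xx_0)=p(\xx(t;\xx_0))$ vanishes identically on $D_{\xx_0}$; hence all its derivatives at $0$ vanish, i.e. $p^{(j)}(\xx_0)=0$ for all $j\ge 0$ by the derivative formula $\frac{d^j}{dt^j}p(t;\xx_0)=(\lie^{(j)}_F(p))(t;\xx_0)$. Thus $\xx_0$ annihilates every generator of every $P_j$, hence every element of $I_m=I$ by the characterization above, so $\xx_0\in\var(I)$. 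Combining the two inclusions gives $\var(I)=\psi_\phi$, as claimed. The only mildly delicate point is the stabilization argument showing $I_m$ is genuinely invariant (so that the finite termination test $I_m=I_{m+1}$ captures the full infinite intersection $\bigcap_j\var(I_j)$); everything else is a direct application of Lemma~\ref{lemma:lie} and the analyticity of the polynomial behaviours $p(t;\xx_0)$. The existence of $m$ itself is immediate from the ascending chain condition stated in Section~\ref{sec:prel}.
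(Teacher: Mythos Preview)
Your proof is correct and follows essentially the same approach as the paper: both establish that $I=I_m$ is an invariant ideal containing all Lie derivatives $p^{(j)}$ of the generators $p\in P$, and then use Lemma~\ref{lemma:lie} to translate between the vanishing of these derivatives at $\xx_0$ and the trajectory from $\xx_0$ staying in $\phi$. The only cosmetic difference is that for the direction $\var(I)\impl[F]\,\phi$ the paper routes the argument through Lemma~\ref{lemma:invar} (invariant ideals correspond to algebraic invariants), whereas you appeal to Lemma~\ref{lemma:lie} directly; your version also spells out the stabilization step $I_m=I_{m+1}\Rightarrow I_m=I_k$ for all $k\ge m$, which the paper leaves as ``easy to check''.
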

\begin{proof}
Let $\chi\defi\var(I)$. It is easy to check that
$I=\ide{\{p^{(j)}\,:\,j\geq 0\text{ and }p\in P\}}$ and that
  $I$
is   an invariant ideal. By Lemma \ref{lemma:invar} then $\chi$ is
an algebraic invariant of $F$, that is $\chi\impl[F]\,\chi$.
Moreover, as $P\subseteq I$, $\phi\supseteq \chi$, hence
$\chi\impl[F]\,\phi$. This shows that $\chi$ is a valid precondition
of $\phi$. We now show  that it is actually the largest. Consider
any $\psi$ such that $\psi\impl[F]\,\phi$ and any $\vv_0\in \psi$.
This means that, for each $p\in I$, $p$ is a polynomial invariant
for $\vv_0$. That is (Lemma \ref{lemma:lie}), for each $p\in I$ and
$j\geq 0$, $p^{(j)}(\vv_0)=0$. Therefore,  $\vv_0\in
\var\left(\,\{p^{(j)}\,:\,j\geq 0\text{ and }p\in
P\}\,\right)=\var(I)=\chi$.
\end{proof}

\section{Proofs of Section \ref{sec:problem2}}\label{app:proofs}

\begin{proof_of}{Lemma \ref{lemma:Ipsi}}
The function $p(t;\vv_0)=p(\xx(t;\vv_0)$ of the real variable $t$ is
analytic in a neighborhood of 0. Hence it is identically 0
  if and only if all of its derivatives, $\frac{{d^j}}{ {d} t^j}p(t;\vv_0)$ for $j\geq 0$,
  vanish at
  $t=0$.
Then \eqref{eq:liederj} and \eqref{eq:initial}  establish the
result.
\end{proof_of}
\vsp

In the proof of the next lemma, we shall rely on the notion of monomial   ordering   \cite[Ch.2,§2]{Cox}, which we introduce below.

\newcommand{\multideg}{\mathrm{multideg}}\noindent
\begin{definition}[monomial ordering]
Let $k\geq 1$. A \emph{monomial ordering} $>$ on $\nat^{k}$  is a relation $>$ on $\nat^{k}$ such that: (i) $>$ is a total order on $\nat^{k}$; (ii) whenever $\aalpha>\bbeta$ and $\ggamma\in \nat^k$ then $\aalpha+\ggamma>\bbeta+\ggamma$; (iii) $>$ is a well-ordering on $\nat^k$: every nonempty subset of $\nat^k$ has a smallest element under $>$.

Let $\zz=(z_1,...,z_k)$ be $k$ distinct indeterminates. For $\aalpha\in\nat^k$, let
  $\zz^\aalpha$ denote the monomial $z_1^{\alpha_1} \cdots z_k^{\alpha_k}$. The monomial order $>$ is lifted to the set of monomials over $\zz$  by letting $\zz^\aalpha > \zz^\bbeta$ iff $\aalpha > \bbeta$. In $\tau=\zz^\aalpha$, $\aalpha$ is the \emph{multidegree} of $\tau$,
denoted $\multideg(\tau)$. The \emph{leading term} of a polynomial $p\in \reals[\zz]$ is the monomial $\tau$ appearing in $p$ of highest multidegree; then $\multideg(p)=\multideg(\tau)$.
\end{definition}

An example of monomial ordering is the  \emph{lexicographic order}:
at the level of multidegrees,  for  $\aalpha= (\alpha_1 ,...,\alpha_k )$ and  $\bbeta= (\beta_1 ,...,\beta_k )$
one lets $\aalpha >_{\mathrm{lex}} \bbeta$  if the leftmost nonzero entry of the vector
difference
$(\aalpha-\bbeta)\in \mathbb{Z}^k$ is positive. One lets $\zz^\alpha >_{\mathrm{lex}} \zz^\bbeta$ iff   $\aalpha >_{\mathrm{lex}} \bbeta$. For instance, for $\zz=(a_1,a_2,x,y)$, one has $a_1xy>_{\mathrm{lex}}x^3y^2$.

\vsp
\noindent
\begin{proof_of}{Lemma \ref{lemma:goebner}}
Consider the template $\pi$ as an element of $\reals[\zz]$, for $\zz\defi\aa,\xx=(a_1,...,a_n,x_1,...,x_N)$.
Note that, by
  our
choice of a lexicographic order where $a_i>x_j$ for any $i,j$,  we have for example,
  $\multideg(a_i^2 x_j)>\multideg(a_i x_j^k)$,
whatever $k\geq 0$.

Now let  $r=\pi\bmod G$, where again $r\in \reals[ \aa,\xx]$. We first
prove that $r$ is a template as well, that is, the template parameters $a_i$ can occur only linearly in $r$.
By the  properties of multivariate division  \cite[Ch.2,§3,Th.3]{Cox},
  there is a   $q=\sum_\ell h_\ell g_\ell$, with
$h_\ell\in \reals[\aa,\xx]$ and $g_\ell\in G$, such that
\begin{eqnarray}\label{eq:division}
 \pi & = & q+r\,.
\end{eqnarray}
Moreover, again by the same result: (a) $\multideg(\pi)\geq \multideg(q)$; (b) $r$ is a linear combination of
 monomials, none of which is
divisible
by the leading term of any polynomial in $G$.
Assume by contradiction
 there is  in $r$   a summand  $\mu \tau$ ($0\neq \mu\in \reals$), where
a template parameter $a_i$ occurs in the monomial $\tau$ with
 a degree $>1$. By the linearity of $\pi$ in the template parameters in
$\aa$ and by \eqref{eq:division}, we deduce that $-\mu\tau$ must be a summand
 of $q$ (seen as a linear combination of monomials),
 so that
the two terms can  cancel   each other. We deduce
that $\multideg(q)\geq \multideg(\tau)$. Hence, by (a) above and transitivity,
$\multideg(\pi)\geq \multideg(\tau)$. But this is impossible: indeed,  by the chosen lexicographic order,
we must have  $\multideg(\pi)< \multideg(\tau)$, because $\pi$ is linear in all the template parameters in $\aa$, whereas in $\tau$
the degree of $a_i$ is $\geq 2$.

Now, consider any   $\parv\in \reals^n$. By \eqref{eq:division} we have $\pi[\parv]=q[\parv]+r[\parv]$. Clearly
 $q[\parv]\in \ide G$, where $\ide G$ is here the ideal in  $\reals[\xx]$ generated by $G$. Moreover,  (b) above implies   that none of the monomials in $r[\parv]$
 is divisible by
the leading term of any polynomial in $G$. Since $G$ is a Gr\"{o}bner basis in $\reals[\xx]$,
these two properties say that $r[\parv]$  is the (unique) remainder of the
division of $\pi[\parv]$ by $G$: see e.g. \cite[Ch.2,§6,Prop.1]{Cox}.
In other words, $\pi[\parv]\bmod G = r[\parv]$.
\end{proof_of}

\vsp
\noindent
\begin{proof_of}{Lemma \ref{lemma:stab}}
We proceed by induction on $j$. The  base case $j=1$ follows from
the definition of $m$. Assuming by induction hypothesis that
$V_m=\cdots=V_{m+j}$ and that $J_m=\cdots=J_{m+j}$, we prove now
that
  $V_m=V_{m+j+1}$ and that $J_m=J_{m+j+1}$. The key to the proof
is the following fact
\vsp
\begin{eqnarray}
\pi^{(m+j+1)}[\parv]& \in &J_m  \;\; \text{ for each }\parv\in
V_m\,.\label{rel:fund}
\vsp
\end{eqnarray}
From this fact the thesis will follow, indeed:
\begin{enumerate}
\item $V_m=V_{m+j+1}$. To see this, observe that for each $\parv \in V_{m+j}=V_m$
 (the equality here follows from the induction hypothesis), it follows from  \eqref{rel:fund} that
$\pi^{(m+j+1)}[\parv]$ can be written as a finite sum of the form
$\sum_{l} h_l\cdot \pi^{(j_l)}[u_{l}]$, with $0\leq j_l\leq m$ and
$u_l\in V_m$. For each $0\leq j_l\leq m$, $\pi^{(j_l)}[u_{l}]\bmod G=0$ by assumption,
from which it easily follows that also  $\pi^{(m+j+1)}[\parv]\bmod G=
\left(\sum_{l} h_l\cdot \pi^{(j_l)}[u_{l}]\right)\bmod G=0$. This
shows that $\parv \in V_{m+j+1}$ and   proves that $V_{m+j+1}\supseteq
V_{m+j}=V_m$. The reverse inclusion is obvious;
\item $J_m=J_{m+j+1}$. As a consequence of $V_{m+j+1}=V_{m+j}(=V_m)$ (the previous point), we can write
\vsp
\begin{eqnarray*}
J_{m+j+1} & = &\ide{ \cup_{i=1}^{m+j}\pi^{(i)}[V_{m+j}] \cup
\pi^{(m+j+1)}[V_{m+j}]}\\
           & = & \ide{ J_{m+j} \cup
\pi^{(m+j+1)}[V_{m+j}]}\\
           & = & \ide{ J_{m} \cup
\pi^{(m+j+1)}[V_{m}]}
\vsp
\end{eqnarray*}
where the last step follows by induction hypothesis. From
\eqref{rel:fund}, we have that $\pi^{(m+j+1)}[V_{m}]\subseteq J_m$,
which implies the thesis for this case, as $\ide{J_m}=J_m$.
\end{enumerate}
We prove now \eqref{rel:fund}. Fix any $\parv \in V_m$. First, note that
$\pi^{(m+j+1)}[\parv]= \lie(\pi^{(m+j)}[\parv])$ (here we are using
\eqref{eq:templder}). As by induction hypothesis
$\pi^{(m+j)}[V_m]=\pi^{(m+j)}[V_{m+j}]\subseteq J_{m+j}=J_m$, we
have that $\pi^{(m+j)}[\parv]$ can be written as a finite sum $\sum_{l}
h_l\cdot \pi^{(j_l)}[u_{l}]$, with $0\leq j_l\leq m$ and $u_l\in
V_m$. Applying the rules of Lie derivatives \eqref{eq:liesum},
\eqref{eq:lieprod}, we find  that $\pi^{(m+j+1)}[\parv] =
\lie(\pi^{(m+j)}[\parv])$ equals
\vsp
\begin{equation*}
\sum_{l}\left( h_l\cdot
\pi^{(j_l+1)}[u_{l}]+ \lie(h_l)\cdot \pi^{(j_l)}[u_{l}]\right)\,.
\end{equation*}
Now, for  each $u_l$, $u_l\in V_m=V_{m+1}$,  each term
$\pi^{(j_l+1)}[u_{l}]$, with $0\leq j_l+1\leq m+1$, is by definition
in $J_{m+1}=J_m$. This proves that $\pi^{(m+j+1)}[V]\in J_m$, as
required.
\end{proof_of}

\vsp
\noindent
\begin{proof_of}{Lemma \ref{lemma:invar}}
First assume  that $\chi$ is an algebraic invariant. Take
$J=\Id(\chi)$. This is by definition an ideal. We show that $J$ is
invariant. Indeed, take any $\vv_0\in \chi$ and $p\in J$: by
hypothesis, $\xx(t;\vv_0)\in \chi$, hence $p(t;\vv_0)=0$, for each
$t\in D_{\vv_0}$, that is $p$ is a polynomial invariant for $\vv_0$.
But, by Lemma \ref{lemma:lie}, this is equivalent to
$p^{(j)}(\vv_0)=0$ for each $j\geq 0$, in particular,
$p^{(1)}(\vv_0)=0$. Since $\vv_0\in\chi$ is arbitrary, $p^{(1)}\in
J$. Since $p\in J$ is arbitrary, we have that $J$ is an invariant
ideal.

Conversely, assume that $\chi=\var(J)$ for $J$ an invariant ideal.
Of course $\chi$ is algebraic. We show that  $\chi$ is invariant,
that is that $\chi\,\impl [F]\,\chi$. Indeed, take any $\vv_0\in
\chi$ and $p\in J$: by hypothesis, $p^{(j)}\in J$ for each $j\geq
0$, hence $p^{(j)}(\vv_0)=0$ for each $j$. Again by Lemma
\ref{lemma:lie}, this means that $p(t;\vv_0)$ is identically 0.
Since $p\in J$ is arbitrary, this means that $\xx(t;\vv_0)\in\chi$
for each $t\in D_{\vv_0}$. Since $\vv_0\in\chi$ is arbitrary, we
have that $\chi$ is an invariant.
\end{proof_of}


\vsp
\noindent
\begin{proof_of}{Theorem \ref{th:instructive}}
Part (a) follows directly from Theorem \ref{th:corr}(b) and Lemma \ref{lemma:invar}.

Concerning part (b), let $\chi=\psi_\phi$,   the weakest precondition (algebraic variety) for
$\phi=\var(\pi[V])$.
  Let $(V',J')= \gdc(\chi,\pi)$. We
first prove that $\pi[V]=\pi[V']$.  Indeed, one one hand, by
definition of $I_\chi$ we have that $\pi[V]\subseteq I_\chi$ and
therefore: $\pi[V']=\pi[\reals^n]\cap I_\chi   \supseteq \pi[V]\cap
I_\chi = \pi[V]$, where the first equality comes from Theorem
\ref{th:corr}(a). On the other hand, $\psi\subseteq \chi$ by
definition of $\chi$, which implies
   $I_\chi\subseteq I_{\psi}$, therefore we have: $\pi[V]=\pi[\reals^n]\cap
I_{\psi}\supseteq \pi[\reals^n]\cap I_\chi=\pi[V']$, where the
first equality comes  again from Theorem \ref{th:corr}(a). Thus we
have proved $\pi[V]=\pi[V']$.
Now by Theorem \ref{th:corr}(b) and
   $\pi[V]=\pi[V']$, we deduce that   $J=J'$.

Now we prove that $\chi=\var(J)$. Since, by definition of $I_\chi$,
$\chi \impl[ F]\,\var(I_\chi)$, we must have
 $\chi\subseteq \var(I_\chi)$; but   $J=J'\subseteq I_\chi$ (again Theorem
\ref{th:corr}(b)), hence we have  $\var(J)=\var(J')\supseteq
\var(I_\chi)\supseteq \chi$, that is $\var(J)\supseteq \chi$. On the
other hand, by part (a),
$\var(J)$ is an
algebraic invariant, that is $\var(J)\impl[ F]\,\var(J)$; hence,
since $\pi[V]\subseteq J$ and $\var(\pi[V])\supseteq \var(J)$, we
get $\var(J)\impl[F]\,\var(\pi[V])=\phi$; the latter implies
$\var(J)\subseteq \chi$, by definition of $\chi$. In conclusion,
$\chi=\var(J)$.
\end{proof_of}

\section{Details for the collision avoidance example of Section \ref{sec:experiments}}\label{app:exp}
The following is a Gr\"{o}bner basis of the invariant ideal $J$ w.r.t. the lexicographic order induced by the following ordering of variables: $\omega_1>
 \omega_2 > x_{10}> x_{20}> y_{10}> y_{20}> d_{10}> d_{20}> e_{10}> e_{20}> x_1 > x_2 >y_1 > y_2 > d_1 > d_2 > e_1 > e_2$.
{\small
\begin{eqnarray*}
G & = \; \big{\{}  & ({x_{10}})^2 {d_{20}} + ({x_{20}})^2 {d_{20}} - 2 {x_{10}} {d_{20}} x_1 + {d_{20}} x_1^2 - 2 {x_{20}} {d_{20}} x_2 + {d_{20}} x_2^2 - 2 {x_{10}} {x_{20}} d_1 + 2 {x_{20}} x_1 d_1 + 2 {x_{10}} x_2 d_1 -  \\
&& \quad\quad  2 x_1 x_2 d_1 + ({x_{10}})^2 d_2 - ({x_{20}})^2 d_2 - 2 {x_{10}} x_1 d_2 + x_1^2 d_2 + 2 {x_{20}} x_2 d_2 - x_2^2 d_2,\\
&& ({y_{10}})^2 {e_{20}} + ({y_{20}})^2 {e_{20}} - 2 {y_{10}} {e_{20}} y_1 + {e_{20}} y_1^2 - 2 {y_{20}} {e_{20}} y_2 + {e_{20}} y_2^2 - 2 {y_{10}} {y_{20}} e_1 + 2 {y_{20}} y_1 e_1 + 2 {y_{10}} y_2 e_1 - \\
&& \quad\quad 2 y_1 y_2 e_1 + ({y_{10}})^2 e_2 - ({y_{20}})^2 e_2 - 2 {y_{10}} y_1 e_2 + y_1^2 e_2 + 2 {y_{20}} y_2 e_2 - y_2^2 e_2,\\
&& \omega_1 {x_{10}} - \omega_1 x_1 - {d_{20}} + d_2,\\
&& \omega_1 {x_{20}} - \omega_1 x_2 + {d_{10}} - d_1,\\
&& \omega_2 {y_{10}} - \omega_2 y_1 - {e_{20}} + e_2,\\
&& \omega_2 {y_{20}} - \omega_2 y_2 + {e_{10}} - e_1,\\
&& {x_{10}} {d_{10}} + {x_{20}} {d_{20}} - {d_{10}} x_1 - {d_{20}} x_2 - {x_{10}} d_1 + x_1 d_1 - {x_{20}} d_2 + x_2 d_2,\\
&& {x_{20}} {d_{10}} - {x_{10}} {d_{20}} + {d_{20}} x_1 - {d_{10}} x_2 + {x_{20}} d_1 - x_2 d_1 - {x_{10}} d_2 + x_1 d_2,\\
&& ({d_{10}})^2 + ({d_{20}})^2 - d_1^2 - d_2^2,\\
&& {y_{10}} {e_{10}} + {y_{20}} {e_{20}} - {e_{10}} y_1 - {e_{20}} y_2 - {y_{10}} e_1 + y_1 e_1 - {y_{20}} e_2 + y_2 e_2,\\
&& {y_{20}} {e_{10}} - {y_{10}} {e_{20}} + {e_{20}} y_1 - {e_{10}} y_2 + {y_{20}} e_1 - y_2 e_1 - {y_{10}} e_2 + y_1 e_2,\\
&& ({e_{10}})^2 + ({e_{20}})^2 - e_1^2 - e_2^2\;\; \big{\}}\!
\end{eqnarray*}}
Let $p$ be the first polynomial listed in $G$ above. Let us check
that $p$ is not a first integral. In fact, $p$ is not even a
Darboux polynomial. To see this, let us compute explicitly the Lie
derivative of $p$
{\small
\begin{align*}
\lie(p)=&\,\omega_1 x_{10}^2 d_1 - \omega_1 x_{20}^2 d_1 - 2 \omega_1 x_{10}
x_1 d_1 + \omega_1 x_1^2 d_1 + 2 \omega_1 x_{20} x_2 d_1 - \omega_1
x_2^2 d_1 + 2 \omega_1 x_{10} x_{20} d_2 - 2 \omega_1 x_{20} x_1 d_2
- \\
&\,2 \omega_1 x_{10} x_2 d_2 + 2 \omega_1 x_1 x_2 d_2 - 2 x_{10}
d_{20} d_1 + 2 d_{20} x_1 d_1 + 2 x_{20} d_1^2 - 2 x_2 d_1^2 - 2
x_{20} d_{20} d_2 + 2 d_{20} x_2 d_2 + 2 x_{20} d_2^2 - 2 x_2 d_2^2.
\end{align*}}\!
It is immediate to check that $\lie(p)\notin \ide{\{p\}}$  with the help of a computer algebra system. This can also be
checked manually, noting that
    $\{p\}$ is trivially a Gr\"{o}bner basis of $\ide{\{p\}}$ w.r.t. the lexicographic order, and that $\lie(p)\bmod  \{p\}=\lie(p)$. Indeed,
the leading monomial of $\lie(p)$, that is $\omega_1x_{10}^2 d_1$, is not divisible by the leading monomial of $p$, that is $x_{10}^2 d_{20}$.
 The least $m$  such that $p^{(m+1)}\in\ide{\{p,p^{(1)},...,p^{(m)}\}}$ is $m=2$.

\section{Proof of Theorem \ref{th:semialgext}, Positivstellensatz and SOS programming in Section \ref{sec:semialg}}\label{app:Psatz}
\begin{proof_of}{Theorem \ref{th:semialgext}}
By contradiction, assume there are $w_0\in X_0$ and
$w_1=\xx(t_1;w_0)\in X_U$, for some $t_1\in D_{w_0}$.
We will show
that $(w_0,w_1)\in \reals^{2N}$ is a solution of \eqref{eq:semi},
thus arriving at a contradiction. Indeed, by definition
$g_i[\xx_0/\xx](w_0,w_1)=g_i(w_0)\geq 0$ and    $h_j(w_0,w_1)=h_j(w_1)\geq 0$, for
each $i=1,...,m$ and $j=1,....,n$. Consider now the trajectory of
$\hat F$ originating from $(w_0,w_0)$, that is $\hat
\xx(t;(w_0,w_0))$: note that, by definition of $\hat F$,  $\hat
\xx(t;(w_0,w_0))=( w_0\, ,\, \xx(t;w_0))$ for each $t\in D_{w_0}$. Now,
since $\var(J)\supseteq \psi\supseteq (X_0\times X_0)\cap Id$,
 we have $(w_0,w_0)\in \var(J)$, hence, by
$\hat F$-invariance of $J$, $\hat\xx(t;(w_0,w_0))\in \var(J)$ for
each $t\in D_{(w_0,w_0)}$ (Lemma \ref{lemma:invar}). In particular,
considering $t=t_1$, we have
$\hat\xx(t_1;(w_0,w_0))=(w_0\,,\,\xx(t_1;w_0))=(w_0,w_1)  \in \var(J)$.
But this means $q_i(w_0,w_1)=0$ for $i=1,...,k$. In conclusion,
$(w_0,w_1)$ is a solution of \eqref{eq:semi}.
\end{proof_of}

\vsp
When working in algebraically closed fields, like $\cplx$,
 Hilbert's Nullstellensatz \cite{Cox} implies that a system of polynomial
 equations $P$  has no solution if and only if
 $1\in \sqrt{\ide{P}}$. This gives a simple criterion to check
 if $P$ is solvable.
The following result,  often considered as the real algebraic counterpart
 of Hilbert's Nullstellensatz, is due
to Stengle \cite{Stengle}. Let us  introduce the necessary terminology. In what follows,
all polynomials
 are in  $\reals[\xx]$ for some fixed $\xx=(x_1,...,x_N)$.
A polynomial $s$
is a \emph{Sum of Squares (SOS)} if $s=\sum_j h^2_j$, for some polynomials $h_j$. Given a finite set
of polynomials $A=\{f_1,...,f_n\}$, the \emph{cone} generated by $A$ is
$C(A)\defi \{\sum_{\sigma\subseteq\{1,...,n\}}s_\sigma
\Pi_{j\in \sigma}f_j\,:\,s_\sigma \text{ are   SOS }\}$.

\begin{theorem}[Positivstellensatz]\label{th:Psatz}
Let $A=\{f_1,...,f_n\}$ and $B=\{g_1,...,g_m\}$ be two sets of polynomials.
 The system of (in)equations
$\{f_1\geq 0,...,f_n\geq 0,\, g_1=0,...,g_m=0\}$ has no solution in $\reals^N$
 if and only if there are $f\in C(A)$ and
$g\in \ide{B}$ such that $f+g+1=0$.
\end{theorem}

If one writes $f\in C(A)$ as
$f=s_\es+\sum_{\es\neq\sigma\subseteq\{1,...,n\}} s_\sigma\Pi_{j\in \sigma}f_j$
and $g\in \ide{B}$ as $g=\sum_{i=1}^m h_ig_i$, then finding $f\in C(A)$ and $g\in \ide{B}$ such that
$f+g+1=0$   can be formulated as follows:
\begin{eqnarray}\label{pro:psatz}
\text{Find polynomials  $h_i$'s and SOS $s_\sigma$'s such that }
-\left(\sum_{\es\neq\sigma\subseteq\{1,...,n\}}\!\!\!s_\sigma\Pi_{j\in \sigma}f_j+\sum_{i=1}^m h_ig_i+1\right)    \text{ is SOS.}
\end{eqnarray}
Now a polynomial $s$ is SOS if and only if there is a vector of monomials
 $Z=(\alpha_1,...,\alpha_K)$ and a real symmetric
positive semidefinite $K\times K$ matrix $M$ such that $s=ZMZ^T$.
Once   bases of monomials have been fixed for each of the (unknown) polynomials
$s_\sigma$ and $h_i$,
one can   consider a relaxation of problem \eqref{pro:psatz}, whereby one searches for
polynomials   built from those bases satisfying \eqref{pro:psatz} (empty bases are allowed).
Problem \eqref{pro:psatz} becomes in this way a \emph{semidefinite programming problem} \cite{Parrilo},
with one variable for each  (unknown)
polynomial  coefficient, and
 constraints   given by the various positive semidefinite conditions and by the equation \eqref{pro:psatz}.
 In fact, there are   tools,
like SOSTOOLS \cite{SOSTOOLS}, to efficiently convert relaxations of
problem \eqref{pro:psatz} into a semidefinite programming problem
and then try to solve it via numerical techniques. If successful, the obtained SOS polynomial is a certificate of insolvability
of the original system. Although the number of terms
 in \eqref{pro:psatz} is potentially exponential in $n$,
experience has shown that, in practice,  this technique tends
to yield short (low degree) certificates, if the original
system is actually not solvable.

For the continuous semialgebraic systems in Examples \ref{ex:3LV} and \ref{ex:springmass},
denoting by $p_0, p_U$
the polynomials defining $X_0$ and $X_U$ in each case,
problem \eqref{pro:psatz} takes  the following concrete form
\begin{eqnarray}\label{pro:psatz1}
\text{Find   $h_1,h_2$  and SOS $s_1,s_2,s_3$ such that }
-\left( s_1p_{0}+s_2p_U+s_3p_0p_U + h_1q_1+h_2q_2 +1
 \right)    \text{ is SOS.}
\end{eqnarray}
For Example \ref{ex:3LV}, fixing a maximum degree of $1$ for the $h_i$'s and of
 $2$ for the $s_\sigma$'s and running
  SOSTOOLS under Matlab\footnote{Matlab scripts for both examples are available at \url{https://github.com/micheleatunifi/postconditions/blob/master/SOSSafety.m}.}
  we solve \eqref{pro:psatz1}, finding
   the following polynomials, which yield a low-degree certificate:
{\small
\begin{eqnarray*}
s_1 & = &
  2.8733 x^2 - 0.15408 x x_0 - 0.6222 x y - 0.11367 x y_0 - 0.81927 x z
  + 4.2254 x_0^2 - 1.4885 x_0 y - 2.4633 x_0 y_0 - \\
  && 0.61469 x_0 z + 1.5901 y^2
  - 1.3772 y y_0 - 1.4399 y z + 4.071 y_0^2 - 0.58541 y_0 z + 3.1195 z^2\\
s_2 & = & 1.7131 x^2 + 0.063147 x x_0 - 0.33468 x y + 0.074991 x y0 + 0.17391 x z
  + 1.2985 x_0^2 - 0.51261 x_0 y + \\
  && 0.19351 x_0 y_0 + 0.15971 x_0 z + 0.5814
   y^2 - 0.53211 y y_0 - 0.48663 y z + 1.3512 y_0^2 + 0.17931 y_0 z + 1.9153
   z^2\\
s_3 & = & 0\\
h_1 &= & 0.49855 x + 0.21264 x_0 - 0.29621 y + 0.18602 y_0 + 0.35564 z + 14.8214\\
h_2 & = & 13.096 x + 9.6921 x_0 + 0.013631 y - 36.2677 y_0 + 22.7601 z - 16.3507\,.
\end{eqnarray*}
}%
This takes about 0.4s on a Core i5 machine under Windows 10.
 A certificate for Example \ref{ex:springmass} is found
in a similar way; we omit here the lengthy description of the corresponding polynomials.

\begin{remark}[dealing with roundoff errors]\label{rem:SOSerr}{\em
It is   important to ensure that the   SOS  decomposition
found numerically   actually corresponds to a true solution, and it is not the result
of   roundoff errors that may arise when working in floating point arithmetic.  There are several ways of doing this: for instance, by computing exact rational solutions, that   can be fully
verified symbolically, or by tweaking the numerical coefficients of  a candidate   solution; see for instance \cite{ParriloNum,ChineseSOS}. In particular, the SOSTOOLS  \textsf{\texttt{findSOS}}  procedure incorporates an  experimental \textsf{\texttt{'rational'}}  option, that   will try  to produce an exact rational SOS
representation of the input polynomial. Using this option, we have checked that the solution found for the Lotka-Volterra example is indeed SOS. On the other hand, the \textsf{\texttt{'rational'}} option fails to find a rational representation in the spring-mass example.
}
\end{remark}
\fi
\end{document}